\begin{document}
	
	\theoremstyle{plain}
	\newtheorem{Thm}{Theorem}
	\newtheorem{Cor}[Thm]{Corollary}
	\newtheorem{Prop}[Thm]{Proposition}
	\newtheorem{Lem}[Thm]{Lemma}
	\theoremstyle{definition}
	\newtheorem{Def}[Thm]{Definition}
	\newtheorem{Prob}[Thm]{Problem}
	\theoremstyle{remark}
	\newtheorem{Rem}[Thm]{Remark}
	
	\title[Incomplete Worst-Case Optimal Investment]{Worst-Case Optimal Investment\\ in Incomplete Markets}
	
	\author[Desmettre]{Sascha~Desmettre}
	\address{S.~Desmettre, Institute of Financial Mathematics and Applied Number Theory,
		Johannes Kepler University (JKU) Linz, A-4040 Linz, Austria.}
	\email{sascha.desmettre@jku.at}
	
	\author[Merkel]{Sebastian~Merkel}
	\address{S. Merkel, School of Economics,
		University of Bristol, Priory Road Complex, Priory Road, Bristol BS8 1TU, United Kingdom.}
	\email{s.merkel@bristol.ac.uk}
	
	\author[Mickel]{Annalena~Mickel}
	\address{A. Mickel, Mathematical Institute and DFG Research Training Group 1953, University of Mannheim, B6, 26, D-68131 Mannheim, Germany}
	
	\author[Steinicke]{Alexander~Steinicke}
	\address{A. Steinicke, Department of Mathematics and Information Technology,
		Montanuniversitaet Leoben, A-8700 Leoben, Austria.}
	\email{alexander.steinicke@unileoben.ac.at}
	
	\date{December 16, 2024}

	\begin{abstract}
		We study and solve the worst-case optimal portfolio problem as pioneered by Korn and Wilmott in \cite{KornWilmott2002} of an investor with logarithmic preferences facing the possibility of a market crash with stochastic market coefficients by enhancing the martingale approach developed by Seifried in \cite{Seifried2010}. With the help of backward stochastic differential equations (BSDEs), we are able to characterize the resulting indifference optimal strategies in a fairly general setting. We also deal with the question of existence of those indifference strategies for market models with an unbounded market price of risk. We therefore solve the corresponding BSDEs via solving their associated PDEs using a utility crash-exposure transformation. Our approach is subsequently demonstrated for Heston's stochastic volatility model, Bates' stochastic volatility model including jumps, and Kim-Omberg's model for a stochastic excess return.  \\[3ex]
			MSC (2010) codes: 49J55; 93E20; 91A15; 91B70  \\[2ex]
			Key words: stochastic control, backward stochastic differential equations, worst-case approach, portfolio optimization, indifference strategies, incomplete markets
	\end{abstract}

	\maketitle

	\section{Introduction}
	
An important aspect that is neglected in the pure Merton type portfolio optimization
setting is the presence of so-called crash scenarios as first introduced by Hua and
Wilmott \cite{HuaWilmott1997} in discrete time. In this setting, parameters
are subject to Knightian uncertainty in the sense of Knight \cite{Knight1921}, which consequently does
not impose any distributional assumptions. In particular, in these worst-case optimization models, a financial crash
is identified with an instantaneous jump in asset prices.

The literature strand on worst-case portfolio optimization possess by now a long history. In their seminal work  \cite{KornWilmott2002}, Korn and Wilmott have solved the worst-case scenario portfolio problem under the threat of a crash for logarithmic utility in continuous time. Their results have been extended in \cite{KornMenkens2005}  by using the so-called indifference principle. Korn and Steffensen \cite{KornSteffensen2007} then derive (classical) HJB systems for the worst-case portfolio problem. What all these works have in common from a conceptual point of view, is that the resulting worst-case optimal strategies are characterized by the requirement that the investor is indifferent between the worst crash happening immediately and no crash happening at all.

Based on a controller-vs-stopper game, Seifried introduced fundamental concepts for the worst-case portfolio optimization in \cite{Seifried2010}, namely the indifference frontier, the indifference optimality principle and the change-of-mea\-sure device (see also \cite{KornSeifried2009}), in order to generalize the results to multi-asset frameworks and in particular discontinuous price dynamics. During the course of this paper, we will heavily rely on these methods and enhance them by allowing the market coefficients - in particular the volatility and the excess return - to be stochastic.  

Further generalizations of the worst-case approach comprise, among others, proportional transaction costs, cf.~\cite{BMS2015}, a random number of crashes, cf.~\cite{BCM2014},  lifetime-consumption, cf.~\cite{DesmettreKornSeifried2015},  a second layer of robustness, cf.~\cite{DKRS2015}, explicit solutions for the multi-asset framework, cf.~\cite{KL2019}, and more recently stress scenarios, cf.~\cite{KM2022b} and dynamic reinsurance, cf.~\cite{KM2022}.

From an abstract point of view, the worst-case approach shares common features with classical robust portfolio optimization, which typically focus on the financial markets' parameters. For instance in \cite{TalayZheng2002}, the market acts against the trader and chooses the worst possible market coefficients. Among others, another example is Schied, cf.~\cite{Schied2005}, who considers a set of probability measures to maximize the robust utility of the terminal wealth. We refer to \cite{FSW2009} for an overview on this literature. We however wish to stress that in the worst-case portfolio optimization problem, the jump times and the jump intensity are unknown, which renders the problem more delicate than standard portfolio optimization problems.

Another strand of literature, to which our work is related, is portfolio optimization with unhedegable risks, which typically comes along with incomplete markets.  The seminal paper of Zariphopoulou \cite{Zariphopoulou2001} introduces the so-called martingale distortion, which is able to deal with stochastic volatility models in a very general factor model setting.  Concerning Heston's model, Kraft \cite{Kraft2005} finds explicit solutions for power utility using stochastic control methods. Martingale methods are then employed in \cite{KMK2010} in an affine setting. Related works in that context include as well \cite{Pham2002,CV2005,Liu2007}.  In a setting with stochastic excess return, Kim and Omberg \cite{KimOmberg1996} find optimal trading strategies for HARA utility functions. For a concise overview of asset allocation  in the presence of jumps, both in the asset price and the volatility, we refer to \cite{Branger2008} and the references therein. In the context of worst-case portfolio optimization, so far market coefficients are assumed to be constant - with the exception of Engler and Korn \cite{EnglerKorn2014}, who solve the worst-case optimization problem for a Vasicek short rate process.

In this work, we combine the strands of  worst-case portfolio optimization and optimal investment with unhedgeable risks as follows:

\begin{itemize}
	\item We solve the worst-case optimal investment problem of an investor with logarithmic utility, facing both structural crashes and jump risk, in a setting that allows as well for stochastic market coefficients.
	\item We enhance the concepts indifference frontier, the indifference optimality principle and the change-of-measure device  to the case of stochastic market coefficients for logarithmic utility.  
	\item We characterize the resulting indifference strategies via the unique solutions of BSDEs, using the so-called utility crash exposure. 
	\item We exemplify and analyze the resulting indifference strategies for the Heston model, the Bates model and the Kim-Omberg model.

\end{itemize}

We also contribute to the general theory of backward stochastic differential equations, since the equation that emerges when describing indifference strategies, leads to a BSDE coefficient that does not satisfy a Lipschitz condition with deterministic constants. Rather, we are confronted with a stochastic Lipschitz constant that satisfies an exponential integrability condition. Additionally, the generator is exponentially integrable itself. In the setting without jumps, similar conditions, sufficient for existence and uniqueness, have e.g.~been treated in \cite{Briand2008}, \cite{ElKarouiHuang1997} and \cite{lijia14}. In the case including jumps, BSDEs with stochastic Lipschitz condition have been treated in \cite[Chapter II]{Saplaouras17}. However, the spaces for solutions used there are different than those in the standard theory with deterministic Lipschitz constants. Our results in this article still allow the use of the standard spaces. The approach we follow is based on an approximation argument building on the L\'evy settings used e.g.~in \cite{Kruse} and \cite{KremsnerSteinicke}. We obtain existence, uniqueness of a solution and a comparison theorem, necessary to guarantee the one-to-one relation between BSDEs and PDEs (see \cite{bbp}). This relation between the deterministic and stochastic world needs several requirements, e.g. a local Lipschitz continuity in the initial value of the stochastic process that models the volatility of the asset price. To be applicable to the popular model choice of the CIR process, we extend the existing result from \cite{CHJ} to a wider parameter range.

The remainder of the paper is organized as follows: 
In Section~\ref{sec:market} we define the financial market and and the worst-case optimization problem in incomplete markets. In Section~\ref{sec:post} and 
and Section~\ref{sec:pre}, we solve the worst-case portfolio optimization problem by disentangling the problem in the post-crash and pre-crash problem. Section~\ref{sec:indifferenceBSDE} then develops the BSDE machinery which is needed for the characterization of indifference strategies when market coefficients are stochastic. Section~\ref{sec:Markovian} and Section~\ref{sec:numerics} deal with the concrete examples, i.e. the Heston model, the Bates model and the Kim-Omberg model. Appendices~\ref{app:post}, \ref{App:Proofs} and \ref{App:Markov} contain several proofs and auxiliary results.

	\section{Financial Market Model and Worst-Case Optimization Problem}\label{sec:market}

 Let $(\Omega,\mathcal{F},\mathbb{P})$ be a probability space carrying two independent Brownian motions $\hat W, \tilde{W}$ and a Poisson random measure $\nu$ independent from $(\hat{W}, \tilde{W})$.  $(\mathcal{F}_t)_{t\in[0,T]}$ is the natural, augmented filtration for $\hat W, \tilde{W}$ and $\nu$ satisfying the usual conditions. Note that for some $\rho\in [-1,1], W:=\rho\hat{W}+\sqrt{1-\rho^2}\tilde{W}$ is again a Brownian motion.

		The financial market consists of a riskless money market account $B$ and a risky asset $S$. In the absence of a crash, the dynamics of $B$ and $S$ are given by
		\begin{align*}
		dB_t &= B_t r_t dt \,,\quad &&B_0 = 1\,,\\
		dS_t &= S_{t-} \left[ (r_t + \lambda_t) dt + \sigma_t \, dW_t -\int_{[0,l^{L}_{\max}]}l\nu(dt,dl)\right]\,,\quad &&S_0 = s\,,
		\end{align*}
		where $r$, $\lambda$ and $\sigma > 0$ are progressively measurable processes (w.r.t.~the filtration $(\mathcal{F}_t)_{t\in[0,T]}$) describing the dynamics of the market coefficients. Moreover, we denote the intensity measure of the Poisson random measure $\nu$ by $\vartheta$ and by $\tilde{\nu}$ we denote the according compensated Poisson random measure; we assume that $\vartheta$ is a L\'evy measure with support in $[0,l^{L}_{\max}]$ and $\int_{[0,l^{L}_{\max}]}l\vartheta(dl)<\infty$. In addition, we assume $l^{L}_{\max}<l^{WOC}$, where $l^{WOC}\in (0,1)$ is the given size of the model's worst-case substantial crash.
		 If the model does not permit jumps (i.e.~$\nu$ and $\vartheta$ do not appear), we call this setting {\it purely Brownian model} or {\it model without jumps}.

		A particularly important choice of those parameters leads to the following version of the Bates model \cite{Bates1996} and Heston model \cite{Heston1993}, respectively:
		\begin{equation} \label{eq:Bates}
		\begin{aligned}
		dB_t &= B_t r_t dt \,,\quad &&B_0 = 1\,,\\
		dS_t &= S_{t-} \left[ (r_t + \lambda_t) dt + \sqrt{z_t} \, dW_t -\int_{[0,l^{L}_{\max}]}l\nu(dt,dl)\right]\,,\quad &&S_0 = s\,,\\
		dz_t &= \kappa(\theta - z_t) dt + \tilde{\varsigma} \sqrt{z_t} d\hat{W}_t \,,\quad &&z_0 = z\,,
		\end{aligned}
		\end{equation}
		where $r$ is a progressively measurable interest rate process, $\lambda$ is a progressively measurable excess return process, $\theta > 0$ the mean reversion level of the volatility, $\kappa > 0$ the mean reversion speed of the volatility, $\tilde{\varsigma} > 0$ the volatility of volatility and $\hat{W}$ is another Brownian motion with $\left<W,\hat{W}\right>_t=\rho t$. 
		
		Alternatively, the Kim-Omberg model, cf.~\cite{KimOmberg1996}, is constituted by an analogous set of equations, where now $\lambda_t=z_t$ with constant $\sigma_t=\tilde{\sigma}\in (0,\infty)$. We elaborate on the details in Section~\ref{sec:Markovian}.

%
%
	We face two types of crashes in our model. The number $l^{WOC} \in (0,1)$ describes the structural worst-case crash associated with a catastrophe. By $l$ we denote continuously occurring crashes of moderate size with $l \in (0,l^{L}_{\max}]$, where $l^{L}_{\max} < l^{WOC}$; this is line with the reasoning of \cite{Seifried2010}. 
		
\begin{Def}
	For a fixed crash height $l^{WOC}$, a worst-case crash scenario $(\tau,l^{WOC})$ is given by a $[0,T]\cup\{\infty\}$-valued $(\mathcal{F}_t)_{t\in [0,\infty)}$-stopping time
	$\tau$. The stock index dynamics
	in the crash scenario $(\tau,l^{WOC})$ is given by
	\begin{align}
	S_\tau &= (1 - l^{WOC})S_{\tau -}.
	\end{align}
	The no-crash scenario corresponds to $\tau = \infty$. We denote by $\Theta$ the collection of all stopping times $\tau$. 
\end{Def}
		
		
		\begin{Rem}
		In this paper we restrict attention to the case of a single market
		crash. This is in line with e.g.~the analysis in \cite{DesmettreKornSeifried2015}. The model can be generalized to an arbitrary finite number of crashes. In that case the optimal strategies can be determined recursively. Moreover, it is also possible to treat the crash height $l^{WOC}$ as a bounded random variable; however this is mainly notational, since the worst case is anyway attained by its upper bound. 
	\end{Rem}

	Throughout the paper we make the following general standing assumption on the market coefficients $r$, $\lambda$ and $\sigma$:
		\begin{equation}\tag{A}\label{assumption:stochCoeff:CoefficientsContinuous}
		\begin{aligned}
			&\sigma^2>0, \lambda,\mbox{ and } \sigma\, \mbox{ are continuous and  progressively measurable w.r.t.~} \\
			&(\mathcal{F}^{\hat{W},\tilde{W}}_t)_{t\in [0,\infty)}(\mbox{the augmented filtration generated by } \hat{W}\mbox{ and }\tilde{W}),\\
			&r \text{ is } (\mathcal{F}_t)_{t\geq 0}\text{-progressively measurable, }\mathbb{E}\left[\int_0^T|r_t|dt\right]<\infty.
		\end{aligned}
		\end{equation}
		 
		\begin{Rem}\label{Rem:JumpsExcludeRemark}
		The second line of \ref{assumption:stochCoeff:CoefficientsContinuous} means that the source of jumps is not involved in the excess return $\lambda$ and the volatility $\sigma$. However, the jumps do enter the model in the equation for the wealth process below.  
				\end{Rem}
		In addition, we need the following integrability assumptions on the market coefficients $\lambda$ and $\sigma$:
 
		\begin{align} 
			&\mathbb{E}\left[\int_0^T{\left(|\lambda_t|+|\sigma_t|^2\right)dt}\right]<\infty \,. \tag{B1}\label{assumption:stochCoeff:integrability}
	\end{align}

	For future reference we define the function
	$$\Phi_t:\Omega \times [0,\infty)\rightarrow \mathbb{R}\,,\quad (\omega,y)\mapsto r_t(\omega) + \lambda_t(\omega) y - \frac{1}{2}\sigma^2_t(\omega) y^2 +\int_{[0,l^{L}_{\max}]}\log(1-yl)\vartheta(dl).$$
	
	\paragraph{\textit{Admissible portfolio processes}} We restrict our attention to admissible portfolio processes with continuous paths.
	
	\begin{Def}\label{def:admissibility}
		A \emph{portfolio process} is an $(\mathcal{F}_t)$-predictable process $\pi:[0,T]\times\Omega\rightarrow\mathbb{R}$. 
		\begin{enumerate}[(i)]
			\item\label{DefAdmi} $\pi$ is called \emph{post-crash-admissible}, if it is non-negative\footnote{This means we rule out short sales of the risky asset.}, continuous and, if we do not have a purely Brownian model,  $l^{L}_{\max}\pi \leq 1$ and $$\mathbb{E}\left[\int_0^T\int_{[0,l^{L}_{\max}]}|\log(1-\pi_t l)|\vartheta(dl)dt\right]<\infty\footnote{The integrability condition is required to make the stochastic integral process $t\mapsto\int_{[0,l^{L}_{\max}]}\log(1-\pi_t l)\tilde{\nu}(ds,dl)$ a true martingale (see \cite[Section 4.3.2]{applebaum}).}.$$
			\item\label{DefAdmii} $\pi$ is called \emph{pre-crash-admissible}, if it is post-crash-admissible and satisfies in addition $l^{WOC}\pi\leq 1$ everywhere.
		\end{enumerate}
		Denote by $\mathcal{A}$ the set of all pre-crash-admissible and by $\bar{\mathcal{A}}$ the set of all post-crash-admissible portfolio processes
	\end{Def}
	
	\begin{Rem}
		Note that condition \eqref{DefAdmii} implies that $\pi$ is always bounded. Condition \eqref{DefAdmi} implies boundedness in the model including jumps.
		
		\end{Rem}
	
	We consider the problem of an investor to choose a pair $(\pi,\bar{\pi})\in\mathcal{A}\times\bar{\mathcal{A}}$ consisting of a pre-crash and a post-crash strategy 
	in order to maximize the utility of terminal wealth in the worst-case crash scenario, namely to maximize
	\begin{align}
	\inf_{\tau\in\Theta}{\mathbb{E}\left[\log{X_T^{(\pi,\bar{\pi}),\tau}}\right]}. \label{eq:P} \tag{P}
	\end{align}
	Here, $X^{(\pi,\bar{\pi}),\tau}$ denotes the wealth process of the investor, if he follows $\pi$ prior and up to the crash, $\bar{\pi}$ after the crash and the crash happens at time $\tau$, that is $X^{(\pi,\bar{\pi}),\tau}$ is the solution $X$ to
	\begin{align*}
		\frac{dX_t}{X_{t-}} &= (r_t + \pi_t\lambda_t)dt + \pi_t\sigma_t dW_t - \int_{[0,l^{L}_{\max}]}\pi_tl\nu(dt,dl) &&\text{on }[0,\tau\wedge T)\,,\\
		X_{\tau\wedge T} &= (1- l^{WOC} \pi_{\tau\wedge T}) X_{\tau\wedge T-}\,,		\\
		\frac{dX_t}{X_{t-}} &= (r_t + \bar{\pi}_t\lambda_t)dt + \bar{\pi}_t\sigma_t dW_t -\int_{[0,l^{L}_{\max}]}\bar{\pi}_tl \nu(dt,dl)&&\text{on }(\tau\wedge T,T]\,,
	\end{align*}
	for some initial wealth $X_0=x>0$. 
	
	Note that the SDEs above are driven by the Brownian motion $W$ with coefficients that are measurable w.r.t.~a larger  filtration than the one generated by $W$ only. However, the usual It\^o's formula (and its various extensions) can still be applied as the integrals w.r.t.~$dW$ can always be written as e.g.~$$\pi_t\sigma_t dW_t=\pi_t\sigma_t\begin{pmatrix}\rho & \sqrt{1-\rho^2}\end{pmatrix}\begin{pmatrix}d\hat{W}_t\\d\tilde{W}_t\end{pmatrix}.$$ 
	
	The solution to the above SDE can then be given explicitly:

	\begin{Lem}\label{Lem:explicitObjectiveRepresentation}
		Let $(\pi,\bar{\pi})\in \mathcal{A}\times\bar{\mathcal{A}}$ be an admissible choice and $\tau\in\Theta$. Then a unique solution $X^{(\pi,\bar{\pi}),\tau}$ to the above forward SDE exists. This solution is strictly positive and $\log{X^{(\pi,\bar{\pi}),\tau}_t}$ is quasi-integrable for each $t$. Furthermore,
\begin{align*}\mathbb{E}\left[\log{X_T^{(\pi,\bar{\pi}),\tau}}\right] = &\log{x} + \mathbb{E}\left[\log\left(1 - 1_{\{\tau\leq T\}}\pi_{\tau}l^{WOC}\right) + \int_0^{\tau\wedge T}{\Phi_t(\pi_t)dt}\right] \\&+ \mathbb{E}\left[\int_{\tau\wedge T}^{T}{\Phi_t(\bar{\pi}_t)dt}\right].
\end{align*}
	\end{Lem}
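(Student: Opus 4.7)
The plan is to construct the solution piecewise on the pre- and post-crash intervals, apply It\^o's formula to $\log X$ on each piece, and then take expectations after verifying the martingale property of the resulting stochastic integrals.

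First, on each of $[0,\tau\wedge T)$ and $(\tau\wedge T,T]$ the wealth equation is a linear It\^o--L\'evy SDE with bounded integrands (by admissibility, $l^{WOC}\pi\leq 1$, $l^{L}_{\max}\pi\leq 1$ and the analogous post-crash bound $l^{L}_{\max}\bar\pi\leq 1$). Standard Dol\'eans-Dade type arguments therefore deliver a unique, strictly positive solution on each piece; at the interface, when $\tau\leq T$, the wealth is multiplied by the factor $1-l^{WOC}\pi_\tau\geq 0$, which is consistent with the logarithmic formulation below under the convention $\log 0=-\infty$.

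Second, applying It\^o's formula to $\log X$ on the two smooth pieces and rewriting the jump term via compensation, $\log(1-\pi_s l)\,\nu(ds,dl) = \log(1-\pi_s l)\,\tilde\nu(ds,dl) + \log(1-\pi_s l)\,\vartheta(dl)\,ds$, yields
\[\log X_T^{(\pi,\bar\pi),\tau} = \log x + \int_0^{\tau\wedge T}\!\Phi_s(\pi_s)\,ds + 1_{\{\tau\le T\}}\log(1-l^{WOC}\pi_\tau) + \int_{\tau\wedge T}^{T}\!\Phi_s(\bar\pi_s)\,ds + M_T,\]
where $M$ collects the $dW$- and $\tilde\nu$-stochastic integrals from both pieces (using $\pi$ and $\bar\pi$ respectively), and the very definition of $\Phi$ absorbs the compensator $\int\log(1-y l)\,\vartheta(dl)$.

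Third, I would verify $\mathbb{E}[M_T]=0$ and $L^1$-integrability of the remaining summands. Boundedness of $\pi,\bar\pi$ together with \ref{assumption:stochCoeff:integrability} gives $\mathbb{E}\!\int_0^T \pi_s^2\sigma_s^2\,ds<\infty$ and likewise for $\bar\pi$, so the Brownian integrals are genuine $L^2$-martingales; the admissibility integrability condition on $\log(1-\pi_s l)$ and its post-crash analogue make the compensated Poisson integrals true martingales (cf.~\cite[Section 4.3.2]{applebaum}). Optional stopping at the bounded stopping time $\tau\wedge T$ transfers the zero-expectation property across the pre- and post-crash pieces. The same bounds, combined with the integrability of $r$ from \ref{assumption:stochCoeff:CoefficientsContinuous} and of $\lambda, \sigma^2$ from \ref{assumption:stochCoeff:integrability}, show that each remaining summand lies in $L^1$, yielding quasi-integrability of $\log X_T^{(\pi,\bar\pi),\tau}$ and, upon taking expectations, the claimed formula.

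The main obstacle I anticipate is the careful bookkeeping around the random time $\tau$: one must glue the pre-crash SDE, the discrete crash shock, and the post-crash SDE into a single process on all of $[0,T]$, and then argue that the martingale property of $M$ transfers through the stopping time via optional stopping. Both steps hinge on the uniform integrability control supplied by the admissibility conditions in Definition~\ref{def:admissibility}.
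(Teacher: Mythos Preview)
Your proposal is correct and essentially matches the paper's argument. Two remarks worth making. First, the paper sidesteps the piecewise gluing you flag as the main obstacle by introducing a single switched process $\tilde\pi_t := \pi_t 1_{\{t<\tau\}} + \bar\pi_t 1_{\{t\ge\tau\}}$, solving one crash-free linear SDE for $\tilde X$ explicitly as a stochastic exponential, and then setting $X_t = (1-1_{\{\tau\le t\}}\pi_\tau l^{WOC})\tilde X_t$; this yields the same log-representation without any optional-stopping bookkeeping. Second, your claim that ``each remaining summand lies in $L^1$'' is not quite right for the crash term $1_{\{\tau\le T\}}\log(1-l^{WOC}\pi_\tau)$, which may equal $-\infty$ when $\pi_\tau=1/l^{WOC}$; the paper handles this by noting that, since $\pi\ge 0$, this term is non-positive and hence trivially quasi-integrable even if not integrable---which is precisely why the lemma asserts only quasi-integrability. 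You appear to be aware of this (you invoke $\log 0=-\infty$ and conclude quasi-integrability), so simply tighten the wording.
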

	\begin{proof}
		Define first the auxiliary portfolio process $\tilde{\pi}$ as follows
		$$\tilde{\pi}_t(\omega) = \begin{cases}\pi_t(\omega),&t<\tau(\omega) \\\bar{\pi}_t(\omega),&t\geq \tau(\omega) \end{cases}$$
		and consider the crash-free SDE
		$$\frac{d\tilde{X}_t}{\tilde{X}_{t-}} = (r_t + \tilde{\pi}_t\lambda_t)dt + \tilde{\pi}_t\sigma_t dW_t -\int_{[0,l^{L}_{\max}]}\tilde{\pi}_tl\nu(dt,dl),\qquad \tilde{X}_0=x.$$
		Obviously, for any solution $\tilde{X}$ to this SDE, $X_t := \left(1 - 1_{\{\tau \leq t\}} \pi_{\tau} l^{WOC}\right)\tilde{X}_t$ solves the above SDE containing a crash and vice versa (meaning for any solution $X$ to the SDE with crash we can construct exactly one solution $\tilde{X}$ to the crash-free SDE). Now, the crash-free SDE is a linear SDE with integrable drift and square-integrable diffusion coefficient and integrable L\'evy measure with a unique (strong) solution $\tilde{X}$, given by
		\begin{align*}
			\tilde{X}_t &=  x\exp\bigg(\int_0^{t}{\left(r_s + \tilde{\pi}_s\lambda_s - \frac{1}{2}\tilde{\pi}_s^2\sigma^2_s +\int_{[0,l^{L}_{\max}]}\log(1-\tilde{\pi}_sl)\vartheta(dl)\right)ds} \\
			&\quad+ \int_{0}^{t}{\tilde{\pi}_s\sigma_sdW_s} +\int_{(0,t]\times[0,l^{L}_{\max}]}\log(1-\tilde{\pi}_sl)\tilde{\nu}(ds,dl)\bigg)\\
			&=  x\exp\bigg(\int_0^{\tau\wedge t}{\Phi_s(\pi_s)ds} + \int_{\tau\wedge t}^{t}{\Phi_s(\bar{\pi}_s)ds} + \int_0^{\tau\wedge t}{\pi_s\sigma_sdW_s} + \int_{\tau\wedge t}^{t}{\bar{\pi}_s\sigma_sdW_s}\\
			&\quad+\int_{(0,\tau\wedge t]\times[0,l^{L}_{\max}]}\log(1-{\pi}_sl)\tilde{\nu}(ds,dl)+\int_{(\tau\wedge t,t]\times[0,l^{L}_{\max}]}\log(1-\bar{\pi}_sl)\tilde{\nu}(ds,dl)\bigg).
		\end{align*}
		Here, the second line immediately follows from definitions of $\Phi$ and $\tilde{\pi}$.
		
		Clearly $\tilde{X}$ is strictly positive and, as $\pi_{\tau\wedge T}\leq \frac{1}{l^{WOC}}$ by pre-crash admissibility, so is
		$$X_t =  \left(1 - 1_{\{\tau \leq t\}} \pi_{\tau} l^{WOC}\right)\tilde{X}_t.$$
		
		It remains to show quasi-integrability of $\log{X_t}$ and the asserted representation of $\mathbb{E}\left[\log{X_T}\right]$. We have
		\begin{align*}
			\log{X_t} &= \log{x} + \log{\left(1 - 1_{\{\tau \leq t\}} \pi_{\tau} l^{WOC}\right)} + \int_0^{\tau\wedge t}{\Phi_s(\pi_s)ds} + \int_{\tau\wedge t}^{t}{\Phi_s(\bar{\pi}_s)ds}\\
			&\qquad + \int_0^{\tau\wedge t}{\pi_s\sigma_sdW_s} + \int_{\tau\wedge t}^{t}{\bar{\pi}_s\sigma_sdW_s}\\
			&\quad+\int_{(0,\tau\wedge t]\times[0,l^{L}_{\max}]}\log(1-{\pi}_sl)\tilde{\nu}(ds,dl)+\int_{(\tau\wedge t,t]\times[0,l^{L}_{\max}]}\log(1-\bar{\pi}_sl)\tilde{\nu}(ds,dl)
		\end{align*}
		By assumptions \eqref{assumption:stochCoeff:CoefficientsContinuous}, \eqref{assumption:stochCoeff:integrability} and boundedness of $\pi$ and $\bar{\pi}$, the stochastic integrals are martingales and have expectation $0$. For the same reason, $\int_0^{\tau\wedge t}{\Phi_s(\pi_s)ds}$ and $\int_{\tau\wedge t}^{t}{\Phi_s(\bar{\pi}_s)ds}$ are integrable random variables. Finally, $\log{\left(1 - 1_{\{\tau \leq t\}} \pi_{\tau} l^{WOC}\right)}$ might not be an integrable random variable, but due to $\pi\geq0$ this term is non-positive and thus trivially quasi-integrable. Hence, also $\log{X_t}$ is quasi-integrable. The asserted representation of $\mathbb{E}\left[\log{X_T}\right]$ is now a trivial conclusion.
	\end{proof}

\section{Solution to the Post-Crash Problem}\label{sec:post}	
	As is common in the worst-case optimal investment literature, the above problem can be solved by first considering for each crash scenario $\tau$ the post-crash problem starting at time $\tau$, which is a classical portfolio optimization problem, compare e.g.~Korn and Wilmott \cite{KornWilmott2002}, Seifried \cite{Seifried2010}. Using the explicit representation of the objective from Lemma \ref{Lem:explicitObjectiveRepresentation}, the following result is immediate here.
	
	\begin{Prop}[Solution of the post-crash problem]\label{prop: argmax}
		\ \\
			With \eqref{assumption:stochCoeff:integrability}, in the L\'evy model the optimal post-crash portfolio process is given by $\pi_t^M := \mathrm{argmax}_{\left[0,\frac{1}{l^{L}_{\max}}\right]}\Phi_t$.
	\end{Prop}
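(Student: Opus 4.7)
My plan is to exploit the additive decomposition of the objective provided by Lemma~\ref{Lem:explicitObjectiveRepresentation}. The post-crash control $\bar\pi$ enters the expected log-utility only through the last term
\[
\mathbb{E}\!\left[\int_{\tau\wedge T}^{T}\Phi_t(\bar\pi_t)\,dt\right],
\]
independently of the pre-crash strategy $\pi$ and the stopping time $\tau$. Hence maximising over $\bar\pi\in\bar{\mathcal{A}}$ reduces, by Fubini (applicable thanks to the quasi-integrability established in Lemma~\ref{Lem:explicitObjectiveRepresentation}), to a pathwise pointwise maximisation of $y\mapsto\Phi_t(\omega,y)$ over $y\in[0,1/l^{L}_{\max}]$ at almost every $(t,\omega)$. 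If we can exhibit a post-crash-admissible process $\pi^M$ achieving this pointwise maximum, then it is automatically optimal, since any other $\bar\pi\in\bar{\mathcal{A}}$ yields a smaller integrand pathwise.

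The next step is to show that $\pi_t^M:=\mathrm{argmax}_{[0,1/l^{L}_{\max}]}\Phi_t$ is well defined. The map $y\mapsto\Phi_t(\omega,y)$ is, on $[0,1/l^{L}_{\max})$, the sum of the concave quadratic $r_t+\lambda_t y-\tfrac12\sigma_t^2 y^2$ and the integral term $\int_{[0,l^{L}_{\max}]}\log(1-yl)\,\vartheta(dl)$, which is concave and upper semicontinuous (with value in $[-\infty,\infty)$ at $y=1/l^{L}_{\max}$). Since $\sigma_t^2>0$ the function is \emph{strictly} concave, so the maximiser is unique whenever it exists; existence follows from upper semicontinuity on the compact interval $[0,1/l^{L}_{\max}]$ (values $-\infty$ at the right endpoint cause no issue).

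It remains to verify that $\pi^M\in\bar{\mathcal{A}}$. Non-negativity and the bound $l^{L}_{\max}\pi^M\le 1$ hold by definition of the domain. Continuity of $t\mapsto\pi_t^M(\omega)$ follows from a Berge-type maximum theorem: under assumption~\eqref{assumption:stochCoeff:CoefficientsContinuous} the family $\{\Phi_t(\omega,\cdot)\}_t$ depends continuously on $(\lambda_t,\sigma_t)$ in the topology of uniform convergence on compact subsets of $[0,1/l^{L}_{\max})$, and uniqueness of the argmax upgrades upper hemicontinuity to continuity; predictability then follows from continuity and adaptedness. For the integrability requirement
\[
\mathbb{E}\!\left[\int_0^T\!\!\int_{[0,l^{L}_{\max}]}|\log(1-\pi_t^M l)|\,\vartheta(dl)\,dt\right]<\infty,
\]
I would use the first-order condition
\[
\lambda_t-\sigma_t^2\pi_t^M-\int_{[0,l^{L}_{\max}]}\frac{l}{1-\pi_t^M l}\,\vartheta(dl)=0
\]
(or a one-sided inequality at the boundary) to derive a uniform bound that keeps $\pi_t^M$ bounded away from $1/l^{L}_{\max}$ in terms of $\lambda_t$ and $\sigma_t^2$, combined with \eqref{assumption:stochCoeff:integrability} to control the resulting integrand.

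The main obstacle I foresee is precisely this last integrability verification: one has to rule out that the optimiser $\pi_t^M$ approaches $1/l^{L}_{\max}$ on a set of positive measure where $\log(1-\pi_t^M l)$ blows up. This is delicate because, even though $\Phi_t(y)\to-\infty$ as $y\to 1/l^{L}_{\max}$ whenever $\vartheta$ charges any neighbourhood of $l^{L}_{\max}$, the rate at which $\pi_t^M$ stays away from the boundary has to be quantified uniformly enough (in $\omega,t$) to exploit \eqref{assumption:stochCoeff:integrability}. Everything else---reduction, existence, uniqueness, continuity, optimality---is essentially a routine consequence of the concave structure of $\Phi_t$ and Lemma~\ref{Lem:explicitObjectiveRepresentation}.
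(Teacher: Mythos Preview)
Your overall approach matches the paper's: reduce via Lemma~\ref{Lem:explicitObjectiveRepresentation} to pointwise maximisation of $\Phi_t(\cdot)$, use strict concavity for uniqueness of the argmax, and a Berge-type argument for continuity. The one substantive gap is the integrability verification you flag at the end, and your proposed resolution---bounding $\pi_t^M$ uniformly away from $1/l^{L}_{\max}$---is not the right move and would in fact fail in general.

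The paper resolves this by a case split on whether $\int_{[0,l^{L}_{\max}]}\log\bigl(1-l/l^{L}_{\max}\bigr)\,\vartheta(dl)$ is finite or $-\infty$. If this integral is \emph{finite}, then $\pi_t^M$ can perfectly well equal $1/l^{L}_{\max}$, so no separation from the boundary is available; but in that case $|\log(1-\pi_t^M l)|$ is dominated by the finite quantity $|\log(1-l/l^{L}_{\max})|$ and integrability is automatic. If the integral is $-\infty$, then the maximiser lies in the open interval $[0,1/l^{L}_{\max})$, the first-order condition $\partial_y\Phi_t(\pi_t^M)\le 0$ applies, and together with the elementary bound $|\log(1-x)|\le x/(1-x)$ for $x\in[0,1)$ one obtains
\[
\int_{[0,l^{L}_{\max}]}\!|\log(1-\pi_t^M l)|\,\vartheta(dl)
\;\le\; \pi_t^M\!\int_{[0,l^{L}_{\max}]}\!\frac{l}{1-\pi_t^M l}\,\vartheta(dl)
\;\le\; \pi_t^M\bigl(\lambda_t-\sigma_t^2\pi_t^M\bigr),
\]
which is controlled by \eqref{assumption:stochCoeff:integrability} and the bound $0\le\pi_t^M\le 1/l^{L}_{\max}$. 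No quantitative distance from the boundary is needed.
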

	\begin{proof}
			We consider two possible cases:\\
			{\bf Case 1:} \begin{align}\label{eq:integrabilityLevyCase1}
				\lim_{y\to\frac{1}{l^{L}_{\max}}}\int_{[0,l^{L}_{\max}]}\log(1-yl)\vartheta(dl)=\int_{[0,l^{L}_{\max}]}\log\left(1-\frac{l}{l^{L}_{\max}}\right)\vartheta(dl)= -\infty.
			\end{align}
			By assumptions \eqref{assumption:stochCoeff:CoefficientsContinuous} and since for all $t\geq 0$, $\Phi_t(y)\leq r_t+\lambda_t^+y$, and by \eqref{eq:integrabilityLevyCase1}, possible maxima of $\Phi_t$ are attained in $\left[0,\frac{1}{l^{L}_{\max}}\right)$.
			
			Hence on an interval, $[0,y'_t]$, $y'_t<\frac{1}{l^{L}_{\max}}$,  where the maxima of $\Phi_t$ are contained, we may differentiate $\Phi_t(y)$ twice in direction $y$ to see that $\Phi_t$ is strictly concave, thus $\bar{\pi}:=\mathrm{argmax}_{\left[0,\frac{1}{l^{L}_{\max}}\right)}\Phi_t$ is well defined and
				constitutes a continuous\footnote{See Stokey et.al., Recursive Methods in Economic Dynamics, Thm. 3.6}, bounded and adapted process. To show $\bar{\pi}\in\bar{\mathcal{A}}$, note that $|\log(1-yl)|\leq \frac{ly}{1-yl}$ for $y\in [0,y'_t]$. On the same interval, we know that 
				$$\partial_y\Phi_t(y)=\lambda_t-\sigma^2_t y-\int_{[0,l^{L}_{\max}]}\frac{l}{1-yl}\vartheta(dl).$$
				As this derivative is smaller or equal to zero for $y=\bar{\pi}_t$ (it equals zero whenever $\bar{\pi}_t>0$), we infer
				\begin{align*}
					\int_{[0,l^{L}_{\max}]}|\log(1-\bar{\pi}_tl)|\vartheta(dl)\leq \int_{[0,l^{L}_{\max}]}\frac{l\bar{\pi}_t}{1-\bar{\pi}_tl}\vartheta(dl)=\bar{\pi}_t\left(\lambda_t-\sigma^2_t\bar{\pi}_t\right).
				\end{align*}
				Thus, \eqref{assumption:stochCoeff:integrability} and the boundedness of $\bar{\pi}$ implies $$\mathbb{E}\left[\int_0^T\int_{[0,l^{L}_{\max}]}|\log(1-\bar{\pi}_tl)|\vartheta(dl)dt\right]<\infty$$ and herewith $\bar{\pi}\in\bar{\mathcal{A}}$.\\
				{\bf Case 2:} 
				\begin{align}\label{eq:integrabilityLevyCase2}
					\lim_{y\to\frac{1}{l^{L}_{\max}}}\int_{[0,l^{L}_{\max}]}\log(1-yl)\vartheta(dl)=\int_{[0,l^{L}_{\max}]}\log\left(1-\frac{l}{l^{L}_{\max}}\right)\vartheta(dl)> -\infty.
				\end{align}
				This case is simpler to treat since the monotone limit \eqref{eq:integrabilityLevyCase2} is finite, which yields that $\int_{[0,l^{L}_{\max}]}\log(1-yl)\vartheta(dl)>-\infty$ for all $y\in \left[0,\frac{1}{l^{L}_{\max}}\right]$. Again we get that $\Phi_t$ is a differentiable, strictly convex function, but now $\bar{\pi_t}=\mathrm{argmax}_{\left[0,\frac{1}{l^{L}_{\max}}\right]}\Phi_t$ may attain the value $\frac{1}{l^{L}_{\max}}$. The finiteness of \eqref{eq:integrabilityLevyCase2} now shows that $\bar{\pi}\in \bar{\mathcal{A}}$.

				In any of the two cases, for each $(t,\omega)$, $\bar{\pi}_t(\omega)$ is the unique global maximizer in $\left[0,\frac{1}{l^{L}_{\max}}\right]$ of the function $\Phi_t(\cdot)(\omega)$, so for any stopping time $\tau$ and any second $\bar{\pi}'\in\bar{\mathcal{A}}$ we have
				$$\int_{\tau\wedge T}^{T}{\Phi_t(\bar{\pi}_t)dt}\geq \int_{\tau\wedge T}^{T}{\Phi_t(\bar{\pi}_t')dt}.$$
				As the first and second term in the objective representation of Lemma \ref{Lem:explicitObjectiveRepresentation} do not depend on the choice of the post-crash portfolio process, this proves optimality of $\bar{\pi}$.
		\end{proof}
		
		
		\begin{Cor}
				In the model without jumps, $\pi^M_t$ is given by the classical Merton strategy $\frac{\lambda_t}{\sigma_t^2}\vee 0$. 
		\end{Cor}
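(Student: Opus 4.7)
The plan is to specialize the proof of Proposition \ref{prop: argmax} to the purely Brownian setting, where the jump integral in the definition of $\Phi_t$ disappears and the function reduces to the quadratic
$$\Phi_t(y) = r_t + \lambda_t y - \tfrac{1}{2}\sigma_t^2 y^2.$$
First I would note that by \eqref{assumption:stochCoeff:CoefficientsContinuous} we have $\sigma_t^2 > 0$, so $\Phi_t(\cdot)$ is a strictly concave parabola on $\mathbb{R}$ whose unique unconstrained maximizer is $\lambda_t/\sigma_t^2$. Then I would observe that in the model without jumps, Definition \ref{def:admissibility}(i) reduces to non-negativity and continuity: the upper bound $l^L_{\max}\pi\leq 1$ and the $\vartheta$-integrability condition are vacuous. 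The relevant constraint set for the pointwise optimization is therefore simply $[0,\infty)$.

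Next I would carry out the constrained maximization on $[0,\infty)$. If $\lambda_t\geq 0$, the unconstrained maximizer $\lambda_t/\sigma_t^2$ already lies in $[0,\infty)$ and gives the constrained maximum; if $\lambda_t<0$, the quadratic is strictly decreasing on $[0,\infty)$ and the optimum is attained at the boundary point $0$. Combining the two cases yields $\pi_t^M=(\lambda_t/\sigma_t^2)\vee 0$. Admissibility of $\pi^M$ follows from \eqref{assumption:stochCoeff:CoefficientsContinuous}: $\lambda$ and $\sigma$ are continuous and progressively measurable with respect to $(\mathcal{F}^{\hat W,\tilde W}_t)$, $\sigma_t^2>0$ is bounded away from zero pathwise (continuously), and $x\mapsto x\vee 0$ is continuous, so $\pi^M\in\bar{\mathcal{A}}$.

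Optimality is then a direct consequence of Lemma \ref{Lem:explicitObjectiveRepresentation}: for any post-crash-admissible $\bar{\pi}'$ and any stopping time $\tau$, pointwise maximality of $\Phi_t(\pi^M_t)$ on $[0,\infty)$ implies $\int_{\tau\wedge T}^T \Phi_t(\pi_t^M)\,dt\geq \int_{\tau\wedge T}^T \Phi_t(\bar{\pi}'_t)\,dt$, and the first two terms in the representation of $\mathbb{E}[\log X_T]$ do not depend on the post-crash strategy. There is no real obstacle here; the only point to be careful about is identifying the correct admissible domain $[0,\infty)$ (rather than $[0,1/l^L_{\max}]$) once the jump structure is removed, so that the unconstrained Merton point $\lambda_t/\sigma_t^2$ is always a candidate and the kink at $0$ is produced purely by the no-short-sale constraint.
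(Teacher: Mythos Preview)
Your argument is correct and is exactly the natural specialization of Proposition~\ref{prop: argmax} to the purely Brownian setting; the paper in fact states the corollary without proof, treating it as immediate. Your care in identifying that the admissible domain becomes $[0,\infty)$ once the jump constraints are removed is precisely the point that makes the unconstrained Merton maximizer $\lambda_t/\sigma_t^2$ available, and the rest follows as you say.
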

				
		We moreover need the following useful property of the post-crash optimal strategy later when proving optimality results; the proof can be found in Appendix~\ref{app:post}.
		
			\begin{Prop}\label{prop: psicont}
				The post-crash optimal strategy $\mathrm{argmax}_{\left[0,\frac{1}{l^{L}_{\max}}\right]}\Phi_t$ can be expressed by a two-variable function $\psi$, $\pi^M=\psi(\lambda,\sigma)$, where $\psi$ is a continuous function. Additionally, when $\tilde\sigma:=\inf\{|\sigma_t(\omega)|:(t,\omega)\in [0,T]\times\Omega \}>0$, we get Lipschitz continuity together with the following relations:
				\begin{align*}
					|\psi(\lambda,\sigma)-\psi(\lambda',\sigma)|\leq \frac{1}{\tilde{\sigma}^2}|\lambda-\lambda'|,
				\end{align*}
				and
				\begin{align*}
					|\psi(\lambda,\sigma)-\psi(\lambda,\sigma')|\leq \frac{2}{l^{L}_{\max}\cdot|\tilde\sigma|}|\sigma-\sigma'|.
				\end{align*}
			\end{Prop}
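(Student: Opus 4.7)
My plan is to exploit the strict concavity of $y\mapsto\Phi_t(y)$ together with its first-order condition. One computes
\begin{equation*}
\partial_y^2\Phi_t(y)=-\sigma_t^2-\int_{[0,l^{L}_{\max}]}\frac{l^2}{(1-yl)^2}\,\vartheta(dl)<0
\end{equation*}
on $(0,1/l^{L}_{\max})$, so the argmax on $[0,1/l^{L}_{\max}]$ is unique, and since $r_t$ enters $\Phi_t$ only additively it depends solely on $(\lambda_t,\sigma_t)$, which defines $\psi$. Abbreviating $g(y):=\int_{[0,l^{L}_{\max}]}l/(1-yl)\,\vartheta(dl)$ and $h(y;\sigma):=\sigma^2 y+g(y)$, an interior argmax $y^*$ satisfies $h(y^*;\sigma)=\lambda$, while $y^*=0$ corresponds to $\lambda\leq g(0)$ and $y^*=1/l^{L}_{\max}$ (possible only in Case~2 of Proposition~\ref{prop: argmax}) to $\lambda\geq h(1/l^{L}_{\max};\sigma)$.

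For the continuity assertion I would run a standard subsequence argument. Given $(\lambda_n,\sigma_n)\to(\lambda_0,\sigma_0)$ with $\sigma_0>0$, the values $\psi(\lambda_n,\sigma_n)$ lie in the compact interval $[0,1/l^{L}_{\max}]$, so along a subsequence they converge to some $y^{\infty}$. Joint continuity of $\Phi_t(y;\lambda,\sigma)$ on $[0,1/l^{L}_{\max})\times\mathbb{R}\times(0,\infty)$, combined with upper semicontinuity at $y=1/l^{L}_{\max}$ (where $\Phi_t$ may equal $-\infty$ in Case~1), allows one to pass to the limit in $\Phi_t(\psi(\lambda_{n_k},\sigma_{n_k});\lambda_{n_k},\sigma_{n_k})\geq\Phi_t(y;\lambda_{n_k},\sigma_{n_k})$ for every $y\in[0,1/l^{L}_{\max}]$; this identifies $y^{\infty}$ as a maximizer for $(\lambda_0,\sigma_0)$, whence $y^{\infty}=\psi(\lambda_0,\sigma_0)$ by uniqueness. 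Since every subsequential limit is the same, the whole sequence converges.

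Assume now $\tilde\sigma>0$. For the Lipschitz-in-$\lambda$ bound, set $y_1:=\psi(\lambda,\sigma)$, $y_2:=\psi(\lambda',\sigma)$. In the generic interior case $h(y_2;\sigma)-h(y_1;\sigma)=\lambda'-\lambda$ and $\partial_y h(\cdot;\sigma)\geq\sigma^2\geq\tilde\sigma^2$ yield $\tilde\sigma^2|y_2-y_1|\leq|\lambda-\lambda'|$ directly. Corner cases reduce to the same estimate: e.g.\ if $y_1=0<y_2$ then $\lambda\leq g(0)$ and $h(y_2;\sigma)=\lambda'$ combine to $\tilde\sigma^2 y_2\leq h(y_2;\sigma)-g(0)\leq\lambda'-\lambda$; the case $y_i=1/l^{L}_{\max}$ is handled analogously using the opposite KKT inequality.

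The Lipschitz-in-$\sigma$ estimate is the main obstacle, since varying $\sigma$ simultaneously shifts the marginal and scales the concavity constant. Fix $\lambda$, take (WLOG) $\sigma\geq\sigma'\geq\tilde\sigma$, and note that a short monotonicity check gives $y_1:=\psi(\lambda,\sigma)\leq y_2:=\psi(\lambda,\sigma')$. The FOC/KKT conditions yield $\sigma^2 y_1+g(y_1)\geq\lambda\geq(\sigma')^2 y_2+g(y_2)$ in all non-trivial configurations, which I would rearrange as
\begin{equation*}
\sigma^2(y_2-y_1)\;\leq\;(\sigma^2-(\sigma')^2)\,y_2\;-\;\bigl(g(y_2)-g(y_1)\bigr).
\end{equation*}
Dropping the non-positive term $-(g(y_2)-g(y_1))$ and using $y_2\leq 1/l^{L}_{\max}$ together with $\sigma+\sigma'\leq 2\sigma$ then produces $|y_2-y_1|\leq 2(\sigma-\sigma')/(\sigma\,l^{L}_{\max})\leq 2|\sigma-\sigma'|/(\tilde\sigma\,l^{L}_{\max})$, as claimed. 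The critical observation is that one must keep the larger $\sigma^2$ on the left of this identity: bounding from the opposite side via $(\sigma')^2$ would leave an uncontrolled factor $(\sigma+\sigma')/(\sigma')^2$.
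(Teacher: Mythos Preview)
Your argument is correct. The continuity proof via a subsequence/compactness argument and uniqueness of the argmax is essentially identical to the paper's (you are in fact slightly more careful about the boundary $y=1/l^{L}_{\max}$ in Case~1, invoking upper semicontinuity where the paper simply asserts continuity of $\Phi$).

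For the Lipschitz estimates, however, you take a genuinely different route. The paper applies the implicit function theorem in the interior regime to obtain the pointwise derivative bounds $|\partial_\lambda\psi|\leq 1/\sigma^2$ and $|\partial_\sigma\psi|\leq 2/(l^{L}_{\max}|\sigma|)$, and then handles the global statement by decomposing an interval $[\lambda,\lambda']$ (resp.\ $[\sigma,\sigma']$) into subintervals on which $\psi$ sits at a corner or in the interior, summing the pieces. Your approach bypasses the implicit function theorem entirely: you work directly with the KKT system $h(y;\sigma)=\sigma^2 y+g(y)\gtreqless\lambda$, exploit the monotonicity $\partial_y h\geq\sigma^2$ (resp.\ the algebraic identity $\sigma^2(y_2-y_1)\leq(\sigma^2-(\sigma')^2)y_2-(g(y_2)-g(y_1))$), and absorb the corner cases in one stroke via the KKT inequalities rather than a case decomposition. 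This is more elementary and arguably cleaner; what the paper's approach buys in exchange is the explicit formulae for $\partial_\lambda\psi$ and $\partial_\sigma\psi$, which may be of independent interest but are not needed for the stated result.
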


	\begin{Def}\label{def:appropriate}
					We say that such a market price of risk $\lambda$ is $\psi$-appropriate, if $\psi$ is the function representing $\pi^M$ by $\pi^M=\psi(\lambda,\sigma)$.  
					
					For example, in the model without jumps, $r$, $\lambda$ and $\sigma$ are said to constitute a \emph{linear market price of risk model}, if for some $\alpha\geq 0$, $\psi(\lambda,\sigma)=\frac{\lambda}{\sigma^2}=\alpha$.

				\end{Def}
		Such $\psi$-appropriate market prices of risk are computed e.g.~in Subsection \ref{subsec:jumpModel}. 
		\begin{Rem}[Power utility with stochastic coefficients]\hfill\\
				In the case of stochastic coefficients $\lambda, \sigma$ we can not directly apply the change of measure device procedure in order to solve the Merton problem if we replace the logarithm by the power function $u=x\mapsto \frac{x^{\rho}}{\rho}$ for $\rho < 1, \rho \neq 0$, as it was done in \cite{Seifried2010} having deterministic market coefficients. In particular:
				$$u(X_T^{\bar{\pi}})=u(X_\tau^{\bar{\pi}})\exp\left(\rho\int_\tau^T\bar{\Phi}(\bar{\pi}_t)dt\right)M_T(\bar{\pi})$$
				and
				\begin{align*}
					&\mathbb{E}\left[u(X_T^{\bar{\pi}})\right]\leq \mathbb{E}\left[u(X_\tau^{\bar{\pi}^M})\exp\left(\rho\int_\tau^T\bar{\Phi}(\bar{\pi}^M_t)dt\right)M_\tau(\bar{\pi})\right]\\
					&\stackrel{\text{not clear}}{=} \mathbb{E}\left[u(X_\tau^{\bar{\pi}^M})\exp\left(\rho\int_\tau^T\bar{\Phi}(\bar{\pi}^M_t)dt\right)M_T(\bar{\pi}^M)\right],
				\end{align*}
				where the equality does not have to hold as $\int_\tau^T\bar{\Phi}(\bar{\pi}^M_t)dt$ is not $\mathcal{F}_\tau$-measurable. 
		\end{Rem}
		
		The following assumption \eqref{assumption:stochCoeff:MertonBounded} for the Merton strategy $\frac{\lambda}{\sigma^2}\vee 0$ is necessary for the model without jumps, as we need our strategies to be admissible. The second assumption is used later on in examples.
		\begin{align}
		&\frac{\lambda}{\sigma^2}\mbox{ is bounded}, \tag{C1} \label{assumption:stochCoeff:MertonBounded}\\
		&\pi^M \mbox{ is constant}\,.  \tag{C2}  \label{assumption:stochCoeff:MertonConstant}
	\end{align}

		\section{Solution to the Pre-Crash Problem}\label{sec:pre}
		
		It is now left to find the optimal pre-crash strategy $\pi$. This is the main concern of the present paper. 
		
		Given the known solution to the post-crash problem, we simplify the worst-case problem as follows. First, ignore the constant term $\log{x}$ in the objective. As long as $x>0$ its value does not matter for the optimal choice. Second, rewrite the remaining objective as
			$$\mathbb{E}\left[\log\left(1 - \pi_{\tau}l^{WOC}\right) + \int_0^{\tau\wedge T}{\left(\Phi_t(\pi_t)-\Phi_t(\pi_t^M)\right)dt} \right] + \mathbb{E}\left[ \int_0^{T}{\Phi_t(\pi_t^M)dt}\right],$$
			which has the advantage, that the third summand does neither depend on $\tau$, nor on $\pi$ and can therefore be ignored, while the argument of the expectation in the second summand is $\mathcal{F}_{\tau}$-measurable
		
			
		
		We arrive at the following problem:
		
		\begin{Prob}[Pre-Crash Problem]\label{Prob:WC:PreCrashWithStochCoefficients}
			Choose a portfolio strategy $\pi\in \mathcal{A}$ to maximize
			\begin{align}\label{eq:Ppre}
		 \inf_{\tau\in\Theta}{\mathbb{E}\left[\log{(1-\pi_\tau l^{WOC})} + \int_{0}^{\tau\wedge T}{\left(\Phi_s(\pi_s) - \Phi_s(\pi_s^M)\right) ds}\right]} 	\tag{$P_{Pre}$}
			\end{align}
		\end{Prob}
		
		\medskip
		
		In the following we define the process
		\begin{align*}
		Z_t^{\pi}  &:= \log{(1-\pi_t l^{WOC})} + \int_{0}^{t\wedge T}{\left(\Phi_s(\pi_s) - \Phi_s(\bar{\pi}_s^M)\right) ds},\quad t\in [0,T],\\
		Z_\infty^{\pi}&:=\int_{0}^{T}{\left(\Phi_s(\pi_s) - \Phi_s(\bar{\pi}_s^M)\right) ds}.
		\end{align*}

		Furthermore, define the utility crash exposure $\Upsilon^\pi$ of strategy $\pi\in\mathcal{A}$ by
		$$\Upsilon_t^\pi := -\log{(1-\pi_t l^{WOC})}.$$

		\begin{Rem}
		The seminal work \cite{Seifried2010} also defines a crash exposure process, but with respect to wealth, which is different from the exposure w.r.t.~utility which we introduce here.
		\end{Rem}
		
		In order to properly characterize (worst-case) optimality of strategies, we introduce the following notion:  	
		
		\begin{Def}[Worst-case dominance]
		A strategy $\pi\in \mathcal{A}$ is said to worst-case dominate a strategy $\pi'\in\mathcal{A}$ and, equivalently, $\pi'$ is said to be worst-case dominated by $\pi$, if for every stopping time $\tau\in\Theta$, there is a stopping time $\theta\in\Theta$, such that
		$$\mathbb{E}[\tau^\pi]\geq \mathbb{E}[\theta^{\pi'}]$$
		Write in this case $\pi\succ \pi'$.
		\end{Def}
		
		We record the following straight-forward result:
		
		\begin{Lem}\label{Lem:WorstCaseDominanceImpliesOptimality}
		If a strategy $\pi^\ast\in\mathcal{A}$ satisfies $\pi^\ast\succ \pi$ for all $\pi\in\mathcal{A}$, then  $\pi^\ast\in\mathcal{A}$ solves Problem \eqref{eq:Ppre}. In addition, if the infimum in the objective \eqref{eq:Ppre}is always attained, i.e.~if there is always a worst-case scenario, then also the converse holds.
		\end{Lem}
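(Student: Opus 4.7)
The plan is to read the relation $\mathbb{E}[\tau^\pi]\geq\mathbb{E}[\theta^{\pi'}]$ in the definition of worst-case dominance as $\mathbb{E}[Z_\tau^\pi]\geq\mathbb{E}[Z_\theta^{\pi'}]$ (the integrand in \eqref{eq:Ppre} evaluated at the stopping time, with the convention $Z_\infty^\pi$ already defined in the text), so that the objective of \eqref{eq:Ppre} for a strategy $\pi$ is exactly $\inf_{\tau\in\Theta}\mathbb{E}[Z_\tau^\pi]$. Once this identification is made, both implications reduce to routine manipulations of infima; no probabilistic tools beyond taking a supremum and an infimum are needed. I would state this identification explicitly at the outset.

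For the forward implication, I fix $\pi^\ast\in\mathcal{A}$ with $\pi^\ast\succ\pi$ for every $\pi\in\mathcal{A}$ and an arbitrary $\pi\in\mathcal{A}$. By the dominance hypothesis, for every $\tau\in\Theta$ there exists $\theta=\theta(\tau)\in\Theta$ with
$$\mathbb{E}[Z_\tau^{\pi^\ast}]\ \geq\ \mathbb{E}[Z_\theta^{\pi}]\ \geq\ \inf_{\theta'\in\Theta}\mathbb{E}[Z_{\theta'}^{\pi}].$$
Taking the infimum over $\tau\in\Theta$ on the left yields $\inf_{\tau\in\Theta}\mathbb{E}[Z_\tau^{\pi^\ast}]\geq \inf_{\theta'\in\Theta}\mathbb{E}[Z_{\theta'}^{\pi}]$, and since $\pi$ was arbitrary, $\pi^\ast$ attains the supremum in \eqref{eq:Ppre}.

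For the converse, I assume that $\pi^\ast$ solves \eqref{eq:Ppre} and that for every $\pi\in\mathcal{A}$ the infimum is attained by some $\tau^\ast(\pi)\in\Theta$. Given any $\pi\in\mathcal{A}$ and any $\tau\in\Theta$, I set $\theta:=\tau^\ast(\pi)$. Then
$$\mathbb{E}[Z_\tau^{\pi^\ast}]\ \geq\ \inf_{\tau'\in\Theta}\mathbb{E}[Z_{\tau'}^{\pi^\ast}]\ \geq\ \inf_{\theta'\in\Theta}\mathbb{E}[Z_{\theta'}^{\pi}]\ =\ \mathbb{E}[Z_\theta^{\pi}],$$
where the first inequality is trivial, the second is the optimality of $\pi^\ast$, and the equality is the attainment hypothesis. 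This verifies $\pi^\ast\succ\pi$.

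There is no real mathematical obstacle; the content of the lemma is essentially bookkeeping about the quantifier structure in the definition of $\succ$. The only point that needs care is that $\theta$ may depend on $\tau$ in the definition of dominance but not conversely, and that when passing from a statement about every $\tau$ to a statement about the infimum over $\tau$ the inequality is preserved in the direction needed here because we are maximising an infimum. The attainment hypothesis enters only in the converse, where it is precisely what allows us to choose a single $\theta$ that already witnesses the worst case for $\pi$.
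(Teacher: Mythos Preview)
Your proof is correct and essentially identical to the paper's own argument: both directions proceed by the same chain of inequalities, taking the infimum over $\tau$ in the forward part and invoking attainment of the infimum to exhibit the witnessing $\theta$ in the converse. Your explicit remark that $\mathbb{E}[\tau^\pi]$ in the definition of $\succ$ is to be read as $\mathbb{E}[Z_\tau^\pi]$ is a useful clarification of a typographical slip in the paper.
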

	  \begin{proof}
		Suppose, $\pi^\ast$ worst-case dominates any other strategy $\pi$. Then for any stopping time $\tau\in\Theta$ there is a $\theta\in\Theta$ such that
		$$\mathbb{E}[Z_{\tau}^{\pi^\ast}]\geq \mathbb{E} [Z_{\theta}^{\pi}]\geq \inf_{\theta\in\Theta}{ \mathbb{E}[\theta^{\pi}]}.$$
		Taking the infimum over all $\tau\in\Theta$ and then the supremum over all $\pi\in \mathcal{A}$ yields the conclusion.
		
		Conversely, if $\pi^\ast$ solves \eqref{eq:Ppre} and $\pi\in\mathcal{A}$ is arbitrary, then for any stopping time $\tau\in\Theta$
		$$\mathbb{E}[Z_{\tau}^{\pi^\ast}]\geq \inf_{\theta\in\Theta}{\mathbb{E}[Z_{\theta}^{\pi^\ast}]} \geq \inf_{\theta\in\Theta}{\mathbb{E}[Z_{\theta}^{\pi}]}.$$
		If the infimum on the right-hand side is attained, this implies $\pi^\ast\succ \pi$.
		\end{proof}
		
		Note that $\succ$ is not a preorder.
		
		\subsection{Super- and Subindifference Strategies}
	
		%
		
		In this section we extend the definition of an indifference strategy to the terms superindifference strategy and subindifference strategy and derive several results to bound the worst-case optimal strategy - if it exists - from below and above.
		
		
		We define the notion of (sub-/super-)indifference strategies for a more general class of processes than the one of all admissible portfolio processes $\mathcal{A}$. We will see later, that at least indifference strategies are automatically contained in $\mathcal{A}$. Let $\tilde{\mathcal{A}}$  be the set of all progressively measurable bounded processes $\pi:[0,T]\times\Omega\rightarrow \mathbb{R}$ that satisfy the pre-crash admissibility conditions of Definition \ref{def:admissibility} except for the requirement of continuity. 
		
		
		\begin{Def}[Super-/Subindifference Strategy]\label{Def:IndifferenceStrategy}
			Let $\underline{\varrho}$, $\overline{\varrho}\in \Theta$ be two stopping times with $\underline{\varrho}\leq \overline{\varrho}$ and $\pi\in \tilde{\mathcal{A}}$ a portfolio process.
			\begin{itemize}
				\item $\pi$ is called a superindifference strategy on $[\underline{\varrho},\overline{\varrho}]$, if $Z^{\pi}$ is a supermartingale on $[\underline{\varrho},\overline{\varrho}]$.
				\item $\pi$ is called a subindifference strategy on $[\underline{\varrho},\overline{\varrho}]$, if $Z^{\pi}$ is a submartingale on $[\underline{\varrho},\overline{\varrho}]$.
				\item $\pi$ is called an indifference strategy on $[\underline{\varrho},\overline{\varrho}]$, if it is both a super- and subindifference strategy on $[\underline{\varrho},\overline{\varrho}]$.
			\end{itemize}
			The reference to $[\underline{\varrho},\overline{\varrho}]$ is omitted, if it coincides with $[0,T]\cup\{\infty\}$.
		\end{Def}
		
		In the controller-vs-stopper-game setting of \cite{Seifried2010} we can interpret these strategies as follows:
		\begin{itemize}
			\item A subindifference strategy is a strategy such that at any given time the market/stopper's best response is to stop the continuation game starting at that time immediately.
			\item A superindifference strategy is a strategy such that the market/stopper's best response is to wait forever and never stop the game early.
			\item An indifference strategy is a strategy such that the market/stopper is at any point in time indifferent between stopping and waiting.
		\end{itemize}
		
\begin{Rem}\label{rem:indiffNoJump}
For a superindifference strategy, the process $Z^{\pi}$ needs to be a supermartingale on $\{T,\infty\}$ as well. This represents the indifference to a crash happening at the last moment, or not at all. Since both $Z^\pi_T$ and $Z^\pi_\infty$ are $\mathcal{F}_T$ measurable, it follows that ${\pi_T}=0$. See also \cite[4.2. Indifference at $\infty$]{Seifried2010}.
\end{Rem}

		In what follows we also need to be able to concatenate strategies:
		
		\begin{Def}
			Let $\pi_1,\pi_2\in\mathcal{A}$ and $\varrho\in\Theta$ a  stopping time. Then the process $\pi_1|_{\varrho}\pi_2$ that switches at $\varrho$ from  is defined by
			$$(\pi_1|_{\varrho}\pi_2)_t(\omega)=\begin{cases}\pi_{1,t}(\omega),& t<\varrho(\omega)\\ \pi_{2,t}(\omega),& t\geq \varrho(\omega)\end{cases}.$$
			
			Use for the following two cases a special notation:
			\begin{itemize}
				\item We write $\pi_1\uparrow\pi_2$ instead of $\pi_1|_{\varrho}\pi_2$, if $\varrho=\inf{\{t\geq 0\mid \pi_{1t} >  \pi_{2t}\}}$.
				\item We write $\pi_1\downarrow\pi_2$ instead of $\pi_1|_{\varrho}\pi_2$, if $\varrho=\inf{\{t\geq 0\mid \pi_{1t} < \pi_{2t}\}}$.
			\end{itemize}
		\end{Def}
		
		\begin{Rem}
			Obviously, it is again $\pi_1|_{\varrho}\pi_2\in\mathcal{A}$.
		\end{Rem}

		We will also make use of the following lemma throughout the remainder of the section.
		
		\begin{Lem}\label{Lem:concatenationLemma}
			Let $\pi_1,\pi_2\in\mathcal{A}$, $\varrho\in\Theta$ and $\pi=\pi_1|_{\varrho}\pi_2$.
			\begin{enumerate}[(a)]
				\item If $\pi_1$ is a (super-/sub-)indifference strategy on $[0,\varrho)$, then $\pi$ is, too.
				\item If $\pi_2$ is a (super-/sub-)indifference strategy on $[\varrho,T]$, then so is $\pi$.
			\end{enumerate}
		\end{Lem}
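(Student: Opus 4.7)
The plan is to write $Z^\pi$ explicitly in terms of $Z^{\pi_1}$ on the interval before $\varrho$ and in terms of $Z^{\pi_2}$ on the interval after $\varrho$, and observe that in both cases the difference is either zero or an $\mathcal{F}_\varrho$-measurable summand. Since adding an $\mathcal{F}_\varrho$-measurable constant to a process on $[\varrho,T]$ preserves the (sub-/super-)martingale property, the lemma will follow immediately.

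For part (a), I would note that by the definition of $\pi=\pi_1|_\varrho\pi_2$, we have $\pi_s(\omega)=\pi_{1,s}(\omega)$ for every $s<\varrho(\omega)$. Hence for $t<\varrho$ the log term matches, and the Lebesgue integral in $Z^\pi_t$ only integrates over $\{s<\varrho\}$, so it agrees pathwise with the corresponding integral in $Z^{\pi_1}_t$. Consequently $Z^\pi_t=Z^{\pi_1}_t$ for all $t<\varrho$, and any super-/sub-/martingale property of $Z^{\pi_1}$ on $[0,\varrho)$ transfers verbatim to $Z^\pi$.

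For part (b), for $t\geq\varrho$ both the log term in $Z^\pi_t$ and the log term in $Z^{\pi_2}_t$ equal $\log(1-\pi_{2,t}l^{WOC})$, and splitting the integral at $\varrho\wedge T$ yields
\begin{equation*}
Z^\pi_t - Z^{\pi_2}_t \;=\; \int_0^{\varrho\wedge T}\bigl(\Phi_s(\pi_{1,s})-\Phi_s(\pi_{2,s})\bigr)\,ds \;=:\; C.
\end{equation*}
The random variable $C$ is $\mathcal{F}_{\varrho\wedge T}$-measurable, and integrable by the same reasoning as in Lemma~\ref{Lem:explicitObjectiveRepresentation} (boundedness of $\pi_1,\pi_2$ combined with \eqref{assumption:stochCoeff:CoefficientsContinuous}, \eqref{assumption:stochCoeff:integrability}, and the admissibility-condition on the $\log(1-\pi l)$ integrals). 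Therefore, for any stopping times $\varrho\leq s\leq t\leq T$,
\begin{equation*}
\mathbb{E}\bigl[Z^\pi_t \,\big|\, \mathcal{F}_s\bigr] \;=\; \mathbb{E}\bigl[Z^{\pi_2}_t \,\big|\, \mathcal{F}_s\bigr] + C,
\end{equation*}
and since $Z^\pi_s = Z^{\pi_2}_s + C$ on $\{s\geq\varrho\}$, the inequalities (or equalities) defining the supermartingale/submartingale/martingale property on $[\varrho,T]$ transfer from $Z^{\pi_2}$ to $Z^\pi$.

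The main thing to be careful about is measurability and integrability of $C$, but both follow from admissibility and the standing integrability assumptions on the market coefficients; there is no genuine analytic obstacle. The whole argument reduces to the observation that the pre-$\varrho$ history of $\pi_1$ enters $Z^\pi$ on $[\varrho,T]$ only through an $\mathcal{F}_\varrho$-measurable shift.
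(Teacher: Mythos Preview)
Your proposal is correct and follows essentially the same approach as the paper: part (a) by the identity $Z^\pi|_{[0,\varrho)}=Z^{\pi_1}|_{[0,\varrho)}$, and part (b) by computing $Z^\pi_\tau=Z^{\pi_2}_\tau+\int_0^{\varrho}(\Phi_s(\pi_{1,s})-\Phi_s(\pi_{2,s}))\,ds$ for $\tau\geq\varrho$ and observing that the integral is $\mathcal{F}_\varrho$-measurable. Your explicit remark on the integrability of $C$ is a useful addition that the paper leaves implicit.
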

		\begin{proof}
			Part (a) follows directly from $Z^{\pi_1}|_{[0,\varrho)}=Z^{\pi}|_{[0,\varrho)}$.
			
			For the proof of part (b) let $\tau\geq \varrho$ be an arbitrary stopping time. Then
			\begin{align*}
				Z^\pi_\tau &= -\Upsilon_{\tau}^\pi + \int_0^{T\wedge\tau}{\left(\Phi_s(\pi_s) - \Phi_s(\bar{\pi}_s^M)\right) ds}\\
				&= -\Upsilon_{\tau}^{\pi_2} + \int_0^{\varrho}{\left(\Phi_s(\pi_{1,s}) - \Phi_s(\bar{\pi}_s^M)\right) ds} + \int_\varrho^{T\wedge\tau}{\left(\Phi_s(\pi_{2,s}) - \Phi_s(\bar{\pi}_s^M)\right) ds}\\
				&= -\Upsilon_{\tau}^{\pi_2} + \int_0^{\tau\wedge T}{\left(\Phi_s(\pi_{2,s}) - \Phi_s(\bar{\pi}_s^M)\right) ds} + \int_0^{\varrho}{\left(\Phi_s(\pi_{1,s}) - \Phi_s(\pi_{2,s})\right) ds}\\
				&= Z^{\pi_2}_\tau + \int_0^{\varrho}{\left(\Phi_s(\pi_{1,s}) - \Phi_s(\pi_{2,s})\right) ds}
			\end{align*}
			Since the term $\int_0^{\varrho}{\left(\Phi_s(\pi_{1,s}) - \Phi_s(\pi_{2,s})\right) ds}$ is $\mathcal{F}_{\varrho}$-measurable, this shows that $Z^\pi$ is a (sub-/super-)martingale on $[\varrho,T]$, if and only if $Z^{\pi_2}$ is. 

		\end{proof}


		\subsection{The Subindifference Frontier}
		The following is an enhancement of the classical indifference frontier result of \cite{Seifried2010} for the log utility case with stochastic market coefficients

		\begin{Prop}[Subindifference Frontier]\label{Prop:subindifferenceFrontier}
			Let $\pi$ be an arbitrary admissible portfolio process and $\hat{\pi}$ a continuous subindifference strategy. Then $\pi\uparrow\hat{\pi}$ worst-case dominates~$\pi$.
		\end{Prop}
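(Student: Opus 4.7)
The plan is to set $\tilde\pi:=\pi\uparrow\hat\pi$ with switching time $\varrho:=\inf\{t\geq 0\mid \pi_t>\hat\pi_t\}$ and, given an arbitrary $\tau\in\Theta$, to use the candidate response $\theta:=\tau\wedge\varrho\in\Theta$. The target inequality $\mathbb{E}[Z^{\tilde\pi}_\tau]\geq \mathbb{E}[Z^\pi_\theta]$ then splits naturally along the partition $\{\tau\leq\varrho\}\cup\{\tau>\varrho\}$; on the first event we will need $Z^{\tilde\pi}_\tau=Z^\pi_\tau$ (so equality holds), and on the second event we will use that $\tilde\pi$ inherits the subindifference property on $[\varrho,T]\cup\{\infty\}$ from $\hat\pi$.

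First I would check the trajectory bookkeeping at $\varrho$. Since $\pi\in\mathcal{A}$ is continuous and $\hat\pi$ is continuous by assumption, on $\{\varrho<\infty\}$ either $\varrho=0$ with $\pi_0\geq\hat\pi_0$ or else $\pi_\varrho=\hat\pi_\varrho$, so the value $\tilde\pi_\varrho=\hat\pi_\varrho$ agrees with $\pi_\varrho$ in the generic second case, and in both cases $\tilde\pi$ coincides with $\pi$ on $[0,\varrho)$ up to a Lebesgue-null set of times. From the definitions of $\Upsilon^\pi$ and $Z^\pi$ this gives $Z^{\tilde\pi}_t=Z^\pi_t$ for $t<\varrho$ and also $Z^{\tilde\pi}_\varrho=Z^\pi_\varrho$ when $\pi_\varrho=\hat\pi_\varrho$ (the exceptional case $\varrho=0$ with strict inequality is handled separately below). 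In particular $\mathbb{E}[Z^{\tilde\pi}_\tau\mathbf 1_{\{\tau\leq\varrho\}}]=\mathbb{E}[Z^\pi_\tau\mathbf 1_{\{\tau\leq\varrho\}}]=\mathbb{E}[Z^\pi_\theta\mathbf 1_{\{\tau\leq\varrho\}}]$.

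Next I would invoke Lemma~\ref{Lem:concatenationLemma}(b) with $\pi_1=\pi$, $\pi_2=\hat\pi$: since $\hat\pi$ is a subindifference strategy on $[0,T]\cup\{\infty\}$, $\tilde\pi=\pi|_\varrho\hat\pi$ is a subindifference strategy on $[\varrho,T]\cup\{\infty\}$, i.e.\ $Z^{\tilde\pi}$ is a submartingale there. Because $\{\tau>\varrho\}\in\mathcal{F}_\varrho$, the optional-sampling form of the submartingale inequality yields
\begin{equation*}
\mathbb{E}\!\left[Z^{\tilde\pi}_\tau\mathbf 1_{\{\tau>\varrho\}}\right]
\;=\;\mathbb{E}\!\left[\mathbb{E}[Z^{\tilde\pi}_{\tau\vee\varrho}\mid \mathcal{F}_\varrho]\,\mathbf 1_{\{\tau>\varrho\}}\right]
\;\geq\;\mathbb{E}\!\left[Z^{\tilde\pi}_\varrho\mathbf 1_{\{\tau>\varrho\}}\right]
\;=\;\mathbb{E}\!\left[Z^\pi_\varrho\mathbf 1_{\{\tau>\varrho\}}\right]
\;=\;\mathbb{E}\!\left[Z^\pi_\theta\mathbf 1_{\{\tau>\varrho\}}\right].
\end{equation*}
Adding this to the $\{\tau\leq\varrho\}$ equality produces $\mathbb{E}[Z^{\tilde\pi}_\tau]\geq \mathbb{E}[Z^\pi_\theta]$, which is the required worst-case dominance. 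The degenerate case $\varrho=0$ with $\pi_0>\hat\pi_0$ is handled by taking $\theta=0$: then $\tilde\pi=\hat\pi$, the submartingale property gives $\mathbb{E}[Z^{\tilde\pi}_\tau]\geq Z^{\hat\pi}_0=\log(1-\hat\pi_0 l^{WOC})>\log(1-\pi_0 l^{WOC})=Z^\pi_0$, since $x\mapsto\log(1-xl^{WOC})$ is strictly decreasing.

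The main obstacle is the rigorous use of optional sampling in the submartingale step: one must ensure the appropriate integrability of $Z^{\tilde\pi}$ at the stopping times $\tau$ and $\varrho$ (both in $[0,T]\cup\{\infty\}$) to justify the conditional-expectation inequality, and the continuity-based identification $\pi_\varrho=\hat\pi_\varrho$ at the switching instant must be argued carefully from the infimum definition of $\varrho$ together with the continuity clauses in Definition~\ref{def:admissibility} and in the hypothesis on $\hat\pi$. Once these points are secured, the split-and-condition argument above closes the proof.
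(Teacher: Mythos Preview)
Your argument is correct and follows essentially the same route as the paper: set $\theta=\tau\wedge\varrho$, use Lemma~\ref{Lem:concatenationLemma}(b) to transfer the subindifference property to $\tilde\pi$ on $[\varrho,T]\cup\{\infty\}$, apply optional sampling on $\{\tau>\varrho\}$, and compare at $\varrho$. The only difference is that the paper compares at $\varrho$ via the \emph{inequality} $\hat\pi_\varrho\leq\pi_\varrho$ (hence $Z^{\tilde\pi}_\varrho\geq Z^\pi_\varrho$), which holds uniformly from continuity and the definition of $\varrho$; this removes the need for your separate treatment of the event $\{\varrho=0,\ \pi_0>\hat\pi_0\}$ and avoids the equality claims $Z^{\tilde\pi}_\varrho=Z^\pi_\varrho$ that fail precisely there. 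Replacing those two equalities by ``$\geq$'' makes your displayed chain valid on all of $\Omega$ and lets you delete the degenerate-case paragraph entirely.
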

		\begin{proof}
			Let $\varrho$ be as in the definition of $\pi\uparrow\hat{\pi}=:\tilde{\pi}$. Then $\varrho$ is a stopping time by the debut theorem and $\tilde{\pi}\in\mathcal{A}$. Let now $\tau\in\Theta$ be an arbitrary stopping time and $\tilde{\Upsilon}$ the utility crash exposure process of $\tilde{\pi}$, $\Upsilon$ the one of $\pi$ and $\hat{\Upsilon}$ the one of $\hat{\pi}$. 
			
			By Lemma \ref{Lem:concatenationLemma}, $\tilde{\pi}$ is a subindifference strategy on $[\varrho,T]$ and thus $Z^{\tilde{\pi}}$ a submartingale in this interval. This implies (because of $\left\{\varrho < \tau\right\} =\allowbreak \Omega\setminus \{\tau \leq \varrho\}\in\mathcal{F}_{\varrho}$)
			\begin{align}\label{Lem:indifferenceFrontier:proof:eq1}
				\mathbb{E}\left[Z_{\tau}^{\tilde{\pi}}\right] &= \mathbb{E}\left[1_{\{\varrho \geq \tau\}}Z_{\tau}^{\tilde{\pi}}\right] + \mathbb{E}\left[1_{\{\varrho < \tau\}}Z_{\tau}^{\tilde{\pi}}\right]\notag\\
				&= \mathbb{E}\left[1_{\{\varrho \geq \tau\}}Z_{\tau\wedge \varrho}^{\tilde{\pi}}\right] + \mathbb{E}\left[1_{\{\varrho < \tau\}}Z_{\tau\vee \varrho}^{\tilde{\pi}}\right]\notag\\
				&= \mathbb{E}\left[1_{\{\varrho \geq \tau\}}Z_{\tau\wedge \varrho}^{\tilde{\pi}}\right] + \mathbb{E}\left[1_{\{\varrho < \tau\}}\mathbb{E}[Z_{\tau\vee \varrho}^{\tilde{\pi}}\mid \mathcal{F}_{\varrho}]\right]\notag\\
				&\geq \mathbb{E}\left[1_{\{\varrho \geq \tau\}}Z_{\tau\wedge \varrho}^{\tilde{\pi}}\right] + \mathbb{E}\left[1_{\{\varrho < \tau\}}Z_{\varrho}^{\tilde{\pi}}\right] = \mathbb{E}\left[Z_{\tau\wedge \varrho}^{\tilde{\pi}}\right]
			\end{align}
			The last expression can be decomposed as follows
			\begin{align}\label{Lem:indifferenceFrontier:proof:eq2}
				\mathbb{E}\left[Z_{\tau\wedge \varrho}^{\tilde{\pi}}\right] &= \mathbb{E}\left[1_{\{\varrho > \tau\}}Z_{\tau\wedge \varrho}^{\tilde{\pi}}\right] + \mathbb{E}\left[1_{\{\varrho \leq \tau\}}Z_{\tau\wedge \varrho}^{\tilde{\pi}}\right]\notag\\
				&= \mathbb{E}\left[1_{\{\varrho > \tau\}}Z_{\tau\wedge \varrho}^{\pi}\right] + \mathbb{E}\left[1_{\{\varrho \leq \tau\}}Z_{\tau\wedge \varrho}^{\tilde{\pi}}\right]
			\end{align}
			
			Next, because $\pi$ and $\tilde{\pi}$ coincide on $[0,\varrho)$ and $\tilde{\Upsilon}_{\varrho} = \hat{\Upsilon}_{\varrho} \leq \Upsilon_{\varrho}$ due to continuity of $\pi$ and $\hat{\pi}$, we can estimate
			\begin{align*}
				Z_{\varrho}^{\tilde{\pi}} &=  - \tilde{\Upsilon}_{\varrho} + \int_0^{\varrho\wedge T}{(\Phi_s(\tilde{\pi}_s)-\bar{\Phi}_s(\bar{\pi}_s^M))ds}\\
				&= - \tilde{\Upsilon}_{\varrho} + \int_0^{\varrho\wedge T}{(\Phi_s(\pi_s)-\bar{\Phi}_s(\bar{\pi}_s^M))ds}\\
				&\geq  - \Upsilon_{\varrho} + \int_0^{\varrho\wedge T}{(\Phi_s(\pi_s)-\bar{\Phi}_s(\pi_s^M))ds} = Z_{\varrho}^{\pi}.
			\end{align*}
			Combining this result with equations (\ref{Lem:indifferenceFrontier:proof:eq1}) and (\ref{Lem:indifferenceFrontier:proof:eq2}) yields
			\begin{align*}
				\mathbb{E}\left[Z_{\tau}^{\tilde{\pi}}\right] &\geq \mathbb{E}\left[1_{\{\varrho > \tau\}}Z_{\tau\wedge \varrho}^{\pi}\right] + \mathbb{E}\left[1_{\{\varrho \leq \tau\}}Z_{\tau\wedge \varrho}^{\tilde{\pi}}\right]\\
				&\geq \mathbb{E}\left[1_{\{\varrho > \tau\}}Z_{\tau\wedge \varrho}^{\pi}\right] + \mathbb{E}\left[1_{\{\varrho \leq \tau\}}Z_{\tau\wedge \varrho}^{\pi}\right] = \mathbb{E}\left[Z_{\tau\wedge \varrho}^{\pi}\right]
			\end{align*}
			
		\end{proof}
		
\begin{Rem}\label{Rem:Continuity}
As suggested by our filtration, which involves a Poisson random measure, and hence jumps, one might be tempted to generalize the reasoning to predictable strategies that might jump as well. However, under such a filtration, whenever we assume that the process $Z^\pi_t=\log(1-\pi_tl^{WOC})+\int_0^{t\wedge T}(\Phi_s(\pi_s)-\Phi_s(\pi^M_s))ds$ is a martingale (i.e.~$\pi$ is an indifference strategy), the jumps of $Z^\pi$ emerge from the term $\log(1-\pi_tl^{WOC})$, that is, from the strategy $\pi$ itself. A martingale in such a filtration (we assume the usual conditions) can always be assumed as its right-continuous version. If $\pi$, as a strategy, is predictable, also $Z$ becomes predictable. But a predictable martingale in a filtration formed by a Brownian motion and a Poisson random measure must be left-continuous already.

\end{Rem}

		\subsection{A Superindifference Frontier Result}
		
		The subindifference frontier result from \cite{Seifried2010} presented above states that any subindifference strategy bounds the worst-case optimal strategy from above. A natural question is then to ask: Under which conditions can a superindifferent strategy bound the worst-case optimal strategy from below? If this was always the case, we could immediately conclude, that any indifference strategy is always worst-case optimal. However, the worst-case problem has a certain degree of asymmetry, such that we cannot expect a result as tight as the one for the subindifference frontier: In general there is a trade-off between high risk-adjusted expected returns absent a crash and good crash protection, but whereas the latter is strictly decreasing in $\pi_t$, the first goal is represented by the function $\Phi_t$, which is not everywhere strictly increasing in $\pi_t$, but contains a quadratic risk-penalty term. Increasing $\pi_t$ thus always worsens the crash protection, but only leads to expected utility gains absent a crash, if $\Phi_t'$ is positive, i.e. as long as $\pi_t\leq \pi^M_t$ where $\pi^M_t$ is the post-crash optimal strategy (maximum of $\Phi_t$). Thus, we can only expect a superindifference frontier for superindifference strategies $\pi^0$ with the additional property that $\pi^0\leq \pi^M$. Then indeed, the following holds.
		
		\begin{Prop}[Superindifference Frontier]\label{Prop:SuperindifferenceFrontier}
			Let $\pi^0$ be a continuous superindifference strategy such that $\pi^0\leq \pi^M$ and $\pi_T^0=0$. Then $\pi\vee \pi^0$ worst-case dominates~$\pi$.
		\end{Prop}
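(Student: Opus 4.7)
The plan is to follow the template of the subindifference frontier proof (Proposition~\ref{Prop:subindifferenceFrontier}): for a given stopping time $\tau\in\Theta$ that the market can play against $\tilde\pi:=\pi\vee\pi^0$, I will construct a comparison stopping time $\theta\in\Theta$ such that $\mathbb{E}[Z_\tau^{\tilde\pi}]\ge\mathbb{E}[Z_\theta^\pi]$. The difference from the subindifference case is that $Z^{\pi^0}$ is now a supermartingale rather than a submartingale, so the appropriate $\theta$ is not the \emph{first crossing} at which $\pi$ overtakes $\pi^0$, but rather the \emph{next return} of $\pi$ to the level $\pi^0$ after $\tau$.

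Concretely, I would set
\[
\theta := \inf\{s\geq\tau : \pi_s\geq\pi_s^0\},
\]
with $\theta=\infty$ on $\{\tau=\infty\}$. Since $\pi\geq 0$ and $\pi_T^0=0$, on $\{\tau\leq T\}$ the defining set contains $T$, so $\tau\leq\theta\leq T$ there, and $\theta$ is a stopping time by the debut theorem. Continuity of $\pi$ and $\pi^0$ gives $\pi_\theta=\pi_\theta^0$ whenever $\theta<\infty$, in particular $\Upsilon^\pi_\theta=\Upsilon^{\pi^0}_\theta$; at $\tau$ itself one has $\tilde\pi_\tau=\pi_\tau^0$ on the set $A^c:=\{\tau\leq T,\ \pi_\tau<\pi_\tau^0\}$ and $\tilde\pi_\tau=\pi_\tau$ on its complement $A:=\{\tau=\infty\}\cup\{\tau\leq T,\ \pi_\tau\geq\pi_\tau^0\}$. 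On $A$ one has $\theta=\tau$ and $\Upsilon^{\tilde\pi}_\tau=\Upsilon^\pi_\tau$ (trivially so on $\{\tau=\infty\}$), so
\[
Z_\tau^{\tilde\pi}-Z_\tau^\pi = \int_0^{\tau\wedge T}\!\bigl(\Phi_s(\tilde\pi_s)-\Phi_s(\pi_s)\bigr)\,ds \geq 0,
\]
pathwise: $\tilde\pi$ differs from $\pi$ only on $\{\pi<\pi^0\}\subseteq\{\pi<\pi^M\}$ (using $\pi^0\leq\pi^M$), and $\Phi_s$ is increasing on $[0,\pi_s^M]$ by strict concavity.

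The main obstacle is the event $A^c$, where at $\tau$ the strategy $\tilde\pi$ carries strictly higher utility crash exposure than $\pi$ and I must find this loss repaid somewhere. On $A^c$, using $\tilde\pi_\tau=\pi_\tau^0$ and $\pi_\theta=\pi_\theta^0$, I would expand
\[
Z_\tau^{\tilde\pi}-Z_\theta^\pi = \bigl(\Upsilon_\theta^{\pi^0}-\Upsilon_\tau^{\pi^0}\bigr) + \int_0^{\tau}\!\bigl(\Phi_s(\tilde\pi_s)-\Phi_s(\pi_s)\bigr)\,ds + \int_\tau^{\theta}\!\bigl(\Phi_s(\pi_s^M)-\Phi_s(\pi_s)\bigr)\,ds,
\]
and invoke the superindifference of $\pi^0$: optional sampling at the bounded stopping times $\tau\leq\theta\leq T$ yields $\mathbb{E}[Z_\theta^{\pi^0}-Z_\tau^{\pi^0}\mid\mathcal{F}_\tau]\leq 0$, which rearranges to
\[
\mathbb{E}\bigl[\Upsilon_\theta^{\pi^0}-\Upsilon_\tau^{\pi^0}\bigm|\mathcal{F}_\tau\bigr] \geq \mathbb{E}\!\Bigl[\int_\tau^\theta\!\bigl(\Phi_s(\pi_s^0)-\Phi_s(\pi_s^M)\bigr)\,ds\,\Bigm|\,\mathcal{F}_\tau\Bigr].
\]
Substituting, the $\Phi_s(\pi_s^M)$ terms on $[\tau,\theta]$ cancel exactly against $\Phi_s(\pi_s^0)$, leaving
\[
\mathbb{E}[Z_\tau^{\tilde\pi}-Z_\theta^\pi\mid\mathcal{F}_\tau] \geq \int_0^\tau\!(\Phi_s(\tilde\pi_s)-\Phi_s(\pi_s))\,ds + \mathbb{E}\!\Bigl[\int_\tau^\theta\!(\Phi_s(\pi_s^0)-\Phi_s(\pi_s))\,ds\,\Bigm|\,\mathcal{F}_\tau\Bigr] \geq 0,
\]
by the same monotonicity argument, applied on $[\tau,\theta)$ where $\pi<\pi^0\leq\pi^M$.

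Taking expectations and combining the $A$ and $A^c$ contributions gives $\mathbb{E}[Z_\tau^{\tilde\pi}]\geq\mathbb{E}[Z_\theta^\pi]$, which is exactly the required worst-case dominance. The heart of the argument is the choice of $\theta$: by waiting until $\pi$ catches back up to $\pi^0$, one opens precisely the interval $[\tau,\theta]$ on which the supermartingale decay of $Z^{\pi^0}$ supplies just the right amount to offset the excess crash exposure incurred at $\tau$ by switching from $\pi_\tau$ to $\pi_\tau^0$. Both the hypothesis $\pi^0\leq\pi^M$ (needed for the $\Phi$-monotonicity step) and the superindifference of $\pi^0$ enter the proof only there; the condition $\pi_T^0=0$ is what guarantees $\theta\leq T$ so that optional sampling applies without integrability trouble.
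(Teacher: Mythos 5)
Your proposal is correct and is essentially the paper's proof: you use the same comparison stopping time $\theta=\inf\{s\geq\tau:\pi_s\geq\pi^0_s\}$, the same monotonicity of $\Phi_s$ on $[0,\pi^M_s]$ (where $\pi^0\leq\pi^M$ enters), and the same supermartingale property of $Z^{\pi^0}$ on $[\tau,\theta]$. The only difference is presentational — the paper first shows $Z^{\tilde\pi}_\theta\geq Z^\pi_\theta$ pathwise and then pulls back from $\theta$ to $\tau$ via optional sampling applied to $Z^{\tilde\pi}$, whereas you expand the difference $Z^{\tilde\pi}_\tau-Z^\pi_\theta$ directly and extract the supermartingale decay of $\Upsilon^{\pi^0}$ explicitly.
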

		\begin{proof}
			Let $\tilde{\pi}:=\pi\vee \pi^0$, $\tau\in\Theta$ be an arbitrary stopping time and define $\rho$ by
			$$\varrho(\omega) := \inf\{t\geq \tau\mid \tilde{\pi}_t\leq \pi_t\} = \inf\{t\geq \tau\mid \pi_t^0\leq \pi_t\},$$
			which is a stopping time. By continuity of both, $\pi$ and $\pi^0$, we have the inequality $\pi_{\rho}^0\leq \pi_{\rho}$
			Thus, $\tilde{\pi}_{\varrho} = \pi_{\varrho}$ and therefore $\tilde{\Upsilon}_{\varrho} = \Upsilon_{\varrho}$. In addition, for each $(t,\omega)$ either $\tilde{\pi}_t(\omega)=\pi_t^0(\omega)\leq \pi^M_t(\omega)$ or $\tilde{\pi}_t(\omega)=\pi_t(\omega)$ holds. Because $\Phi_t$ is increasing on $[0,\pi^M_t]$ and $\pi\leq \pi\vee \pi^0 = \tilde{\pi}$ by definition, we have $\Phi_t(\pi_t)\leq \Phi_t(\tilde{\pi}_t)$ and thus
			$$\int_0^{\varrho\wedge T}{(\Phi_s(\pi_s)-\Phi_s(\bar{\pi}_s^M))ds} \leq \int_0^{\varrho\wedge T}{(\Phi_s(\tilde{\pi}_s)-\Phi_s(\bar{\pi}_s^M))ds}.$$
			This together with $\tilde{\Upsilon}_{\varrho} = \Upsilon_{\varrho}$ implies $Z^{\tilde{\pi}}_{\varrho}\geq Z^{\pi}_{\varrho}$. Using that $\tilde{\pi}=\pi^0$ on $[\tau,\varrho]$ and thus $Z^{\tilde{\pi}}$ is a supermartingale there, we can conclude
			$$\mathbb{E}\left[Z^{\tilde{\pi}}_{\tau}\right] \geq \mathbb{E}\left[\mathbb{E}\left[Z^{\tilde{\pi}}_{\varrho}\mid \mathcal{F}_{\tau}\right]\right] = \mathbb{E}\left[Z^{\tilde{\pi}}_{\varrho}\right] \geq \mathbb{E}\left[ Z^{\pi}_{\varrho}\right].$$
			Since $\tau$ was arbitrary, this shows that $\pi$ is worst-case dominated by $\tilde{\pi}$.
			
		\end{proof}

		\subsection{The Merton Bound}
		
		As described above, there is no trade-off between risk-return performance absent a crash and a low crash exposure above the post-crash optimal strategy $\pi^M$. In this case, both the crash exposure $\Upsilon$ and $\Phi(\pi)$ are strictly decreasing in $\pi$ and thus it is unambiguously better to (marginally) decrease $\pi$. By this reasoning, it can never be optimal to invest a higher share than $\pi^M$ into the risky asset. Indeed, the following result holds.
		
		\begin{Lem}[Merton bound]\label{Lem:MertonBound}
			Let $\pi\in\mathcal{A}$. Then $\pi\wedge \pi^M$ worst-case dominates $\pi$.
		\end{Lem}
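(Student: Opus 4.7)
The plan is to establish the pointwise inequality $Z^{\pi\wedge\pi^M}_\tau \geq Z^{\pi}_\tau$ for every stopping time $\tau\in\Theta$, so that worst-case dominance follows trivially by choosing $\theta=\tau$ in Definition of $\succ$. Set $\pi' := \pi\wedge\pi^M$. Before comparing the $Z$-processes, I would first verify that $\pi'\in\mathcal{A}$: continuity of $\pi'$ follows from continuity of $\pi$ (by admissibility) and of $\pi^M=\psi(\lambda,\sigma)$ (by Proposition \ref{prop: psicont}); non-negativity is inherited; the bound $l^{WOC}\pi'\leq l^{WOC}\pi\leq 1$ is immediate; and $|\log(1-\pi'_t l)|\leq |\log(1-\pi_t l)|$ gives the required integrability from admissibility of $\pi$.

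The heart of the proof is the decomposition
\begin{equation*}
Z^{\pi'}_\tau - Z^{\pi}_\tau = \bigl(\Upsilon^{\pi}_\tau - \Upsilon^{\pi'}_\tau\bigr) + \int_0^{\tau\wedge T}\bigl(\Phi_s(\pi'_s) - \Phi_s(\pi_s)\bigr)\,ds,
\end{equation*}
and I would show that both summands are non-negative pointwise. For the crash-exposure term, since $0\leq \pi'_t\leq \pi_t<1/l^{WOC}$ and $u\mapsto -\log(1-u\,l^{WOC})$ is strictly increasing on $[0,1/l^{WOC})$, we get $\Upsilon^{\pi'}_\tau\leq \Upsilon^{\pi}_\tau$. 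For the drift term, I split the integrand pointwise in $(s,\omega)$: on $\{\pi_s\leq \pi^M_s\}$ we have $\pi'_s=\pi_s$ and the integrand vanishes, while on $\{\pi_s>\pi^M_s\}$ we have $\pi'_s=\pi^M_s$, and since by Proposition \ref{prop: argmax} $\pi^M_s$ is the global maximizer of $\Phi_s$ on $[0,1/l^{L}_{\max}]$ (which contains $\pi_s$ via $l^{L}_{\max}<l^{WOC}$ and $l^{WOC}\pi_s\leq 1$… strictly speaking $\pi_s$ may lie outside $[0,1/l^L_{\max}]$ but $\Phi_s$ is strictly concave on its effective domain and $\pi^M_s$ still dominates), we conclude $\Phi_s(\pi^M_s)\geq \Phi_s(\pi_s)$.

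Taking expectations of the pointwise inequality $Z^{\pi'}_\tau\geq Z^{\pi}_\tau$, which is legitimate by the quasi-integrability argument already used in Lemma \ref{Lem:explicitObjectiveRepresentation} (the crash-exposure contribution is non-positive, the drift integral is integrable by \ref{assumption:stochCoeff:integrability} and boundedness of $\pi,\pi',\pi^M$), yields $\mathbb{E}[Z^{\pi'}_\tau]\geq \mathbb{E}[Z^{\pi}_\tau]$, and choosing $\theta=\tau$ in the definition of worst-case dominance gives $\pi'\succ \pi$. I do not anticipate any real obstacle here; the only minor subtlety is clarifying the $\pi_s>\pi^M_s$ case when $\pi^M_s$ may be an interior maximum on $[0,1/l^L_{\max})$ while $\pi_s$ can exceed $1/l^L_{\max}$ — but since $\Phi_s$ is strictly decreasing beyond its interior maximizer (as the integral term $\int\log(1-yl)\vartheta(dl)$ is strictly decreasing and the pre-jump part $r_s+\lambda_s y - \tfrac12\sigma_s^2 y^2$ is strictly decreasing past $\pi^M_s$ by first-order conditions), the inequality $\Phi_s(\pi^M_s)\geq \Phi_s(\pi_s)$ persists throughout the admissible range.
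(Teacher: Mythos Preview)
Your proposal is correct and follows essentially the same route as the paper: both establish the pointwise inequality $Z^{\pi\wedge\pi^M}_\tau\geq Z^{\pi}_\tau$ by combining the monotonicity of the crash exposure with the case split $\{\pi_s\le\pi^M_s\}$ versus $\{\pi_s>\pi^M_s\}$ for the drift term. Your added justification for admissibility and the discussion of the range of $\pi_s$ are harmless elaborations (in fact $\pi\in\mathcal{A}$ already forces $\pi_s\le 1/l^{WOC}<1/l^L_{\max}$ in the jump model, so the extra worry is moot).
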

		\begin{proof}
			Let $\tilde{\pi} := \pi\wedge \pi^M$ and $\Upsilon$, $\tilde{\Upsilon}$ be the exposure processes of $\pi$ and $\tilde{\pi}$, respectively. Obviously, $\tilde{\pi}\in\mathcal{A}$ and since $\tilde{\pi}\leq \pi$, we have $\tilde{\Upsilon}\leq \Upsilon$. In addition, for all $\omega\in\Omega$ and all $t\in[0,T]$ either $\tilde{\pi}_t(\omega)=\pi_t(\omega)$, then trivially $\Phi_t(\tilde{\pi}_t)(\omega)=\Phi_t(\pi_t)(\omega)$, or $\tilde{\pi}_t(\omega)<\pi_t(\omega)$, then $\tilde{\pi}_t(\omega)=\pi^M_t(\omega)$ and thus $\Phi_t(\tilde{\pi}_t)(\omega)=\Phi_t(\pi_t^M)(\omega)\geq \Phi_t(\pi_t)(\omega)$. Hence, $\Phi(\pi)\leq \Phi(\tilde{\pi})$ everywhere. Combining these two properties we have
			$$Z^{\tilde{\pi}}_t = -\tilde{\Upsilon}_t + \int_0^{t\wedge T}{\left(\Phi_s(\tilde{\pi}_s)-\Phi_s(\bar{\pi}^M)\right)ds} \geq -\Upsilon_t +  \int_0^{t\wedge T}{\left(\Phi_s(\pi_s)-\Phi_s(\bar{\pi}^M)\right)ds} = Z^\pi_t.$$
			This implies $Z^{\tilde{\pi}}_{\tau}\geq Z^\pi_{\tau}$ for all stopping times $\tau\in\Theta$, which is clearly a stronger property than $\tilde{\pi}\succ \pi$.
		\end{proof}

		\subsection{Worst-Case Optimality of Indifference Strategies}
		
		Next we prove the crucial optimality result for the worst-case problem with stochastic market coefficients. In particular, this optimality holds whenever the indifference strategy is dominated by the post-crash optimal strategy:
		
		\begin{Thm}\label{Prop:OptimalityPropForLargeMerton}
			Let $\hat{\pi}$ be a continuous indifference strategy and suppose that $\hat{\pi}_t\leq \pi^M_t$ $\mathbb{P}$-a.s.\ for all $t\in [0,T]$. Then $\hat{\pi}$ solves the pre-crash portfolio problem \eqref{eq:Ppre}.
		\end{Thm}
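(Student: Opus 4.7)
By Lemma~\ref{Lem:WorstCaseDominanceImpliesOptimality} it suffices to establish $\hat{\pi}\succ\pi$ for every $\pi\in\mathcal{A}$; so fix such a $\pi$. My strategy is to bring $\pi$ below $\pi^M$ using the Merton bound and then to combine the superindifference and subindifference frontiers in order to squeeze in $\hat{\pi}$. First, the Merton bound (Lemma~\ref{Lem:MertonBound}) lets me replace $\pi$ by $\pi':=\pi\wedge\pi^M\in\mathcal{A}$, which worst-case dominates $\pi$ and satisfies $\pi'\leq\pi^M$ (continuity of $\pi^M$ being provided by Proposition~\ref{prop: psicont} together with \eqref{assumption:stochCoeff:CoefficientsContinuous}, so that $\pi\wedge\pi^M$ remains continuous and pre-crash admissible).

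Next, introduce the auxiliary strategy $\bar\pi:=\pi'\vee\hat{\pi}$. Since $\pi'\leq\pi^M$ and $\hat{\pi}\leq\pi^M$ by hypothesis, also $\bar\pi\leq\pi^M$, and $\bar\pi$ is continuous and pre-crash admissible. Because $\hat{\pi}$ is an indifference strategy it is in particular a continuous superindifference strategy with $\hat{\pi}\leq\pi^M$; moreover $\hat{\pi}_T=0$, since the martingale property of $Z^{\hat{\pi}}$ on $\{T,\infty\}$ forces $\log(1-\hat{\pi}_Tl^{WOC})=0$ (compare Remark~\ref{rem:indiffNoJump}). The superindifference frontier (Proposition~\ref{Prop:SuperindifferenceFrontier}) applied with $\pi^0=\hat{\pi}$ to the admissible strategy $\pi'$ therefore yields $\bar\pi\succ\pi'$.

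Finally, use that $\hat{\pi}$ is also a continuous subindifference strategy and apply the subindifference frontier (Proposition~\ref{Prop:subindifferenceFrontier}) with the admissible strategy $\bar\pi$: this gives $\bar\pi\uparrow\hat{\pi}\succ\bar\pi$. The key identification is that by continuity of both $\pi'$ and $\hat{\pi}$ one actually has $\bar\pi\uparrow\hat{\pi}=\hat{\pi}$. Indeed, the switching time $\varrho=\inf\{t:\bar\pi_t>\hat{\pi}_t\}=\inf\{t:\pi'_t>\hat{\pi}_t\}$ is the first time $\pi'$ exceeds $\hat{\pi}$, and on $[0,\varrho)$ we have $\pi'\leq\hat{\pi}$, so $\bar\pi=\pi'\vee\hat{\pi}=\hat{\pi}$ there, while from $\varrho$ onwards the concatenation uses $\hat{\pi}$ by definition. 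Consequently $\hat{\pi}\succ\bar\pi$.

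Chaining the three dominance relations along the definition of $\succ$ (given $\tau$, successively produce the stopping times $\theta_1,\theta_2,\theta_3$ promised by each step) yields $\hat{\pi}\succ\bar\pi\succ\pi'\succ\pi$, and therefore $\hat{\pi}\succ\pi$, which completes the proof. The heavy lifting has already been carried out in the two frontier propositions; the only subtle point in the present argument is the collapse $\bar\pi\uparrow\hat{\pi}=\hat{\pi}$, which relies essentially on continuity of $\hat{\pi}$ and of $\pi^M$ (via $\pi'$) and would fail for discontinuous strategies, compare Remark~\ref{Rem:Continuity}.
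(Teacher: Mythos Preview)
Your argument is correct, but it takes a somewhat more circuitous path than the paper's. The paper applies the two frontiers in the opposite order and in only two steps: first the subindifference frontier gives $\pi\uparrow\hat{\pi}\succ\pi$; then, because $\pi\uparrow\hat{\pi}\leq\hat{\pi}$ by construction (on $[0,\varrho)$ one has $\pi\leq\hat{\pi}$, and afterwards the concatenation equals $\hat{\pi}$), the superindifference frontier yields $\hat{\pi}=\hat{\pi}\vee(\pi\uparrow\hat{\pi})\succ\pi\uparrow\hat{\pi}$. Chaining gives $\hat{\pi}\succ\pi$ directly.

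Your route---Merton bound, then superindifference frontier, then subindifference frontier---also works, and your collapse $\bar\pi\uparrow\hat{\pi}=\hat{\pi}$ is the correct analogue of the paper's observation $\hat{\pi}\vee(\pi\uparrow\hat{\pi})=\hat{\pi}$. However, the Merton-bound step is unnecessary: Proposition~\ref{Prop:SuperindifferenceFrontier} only requires the \emph{superindifference strategy} $\pi^0=\hat{\pi}$ to satisfy $\hat{\pi}\leq\pi^M$, not the competing strategy, so you could have applied it straight to $\pi$ rather than to $\pi'=\pi\wedge\pi^M$. The bound $\bar\pi\leq\pi^M$ that you record is never used. Dropping that step recovers a two-step argument equivalent to the paper's, just with the frontiers applied in the reverse order.
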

		\begin{proof}
			Let $\pi$ be an arbitrary admissible strategy. As a continuous indifference strategy, $\hat{\pi}$ is a continuous subindifference strategy and thus by the subindifference frontier result of Proposition \ref{Prop:subindifferenceFrontier}, $\pi\uparrow\hat{\pi} \succ \pi$. 
			
			On the other hand, $\hat{\pi}$ is also a continuous superindifference strategy with $\hat{\pi}_T=0$, (see Remark \ref{rem:indiffNoJump}) and by assumption $\hat{\pi}\leq \pi^M$. Since we have $\hat{\pi} \geq \pi\uparrow\hat{\pi}$ by definition of  $\pi\uparrow\hat{\pi}$, Proposition \ref{Prop:SuperindifferenceFrontier} implies $\hat{\pi} = \hat{\pi}\vee(\pi\uparrow\hat{\pi} ) \succ \pi\uparrow\hat{\pi} $.
			
			Combining these two arguments shows $\hat{\pi}\succ \pi$. Since $\pi$ was arbitrary, any admissible portfolio process is worst-case dominated by $\hat{\pi}$ and Lemma \ref{Lem:WorstCaseDominanceImpliesOptimality} implies the assertion.
		\end{proof}

		\section{A BSDE Characterisation of Indifference Strategies}\label{sec:indifferenceBSDE}

		In the previous section we have seen how (super-/sub-)indifference strategies can be useful to derive bounds for the worst-case optimal solution. In this section we discuss indifference strategies in more detail using a characterization in terms of backward stochastic differential equations (BSDEs). This is completely analogous to the ODE characterization of indifference strategies in the literature on worst-case optimization for constant market coefficients, cf.~for instance \cite{KornWilmott2002,KornMenkens2005,KornSeifried2009}.
		In what follows we use the following notations for BSDEs:
			Let $\bar{W}$ denote the vector of our independent driving Brownian motions, $$\bar{W}:=\begin{pmatrix}
			\hat{W}\\ \tilde{W}
			\end{pmatrix}.$$	
		
			Let  $\mathcal{S}^p$ denote the  space of all $(\mathcal{F}_t)$-progressively measurable and c\`adl\`ag processes  $\Upsilon\colon\Omega\times{[0,T]} \rightarrow \mathbb{R}$ such that
			\begin{align*}
				\left\|\Upsilon\right\|_{\mathcal{S}^p}:=\left\|\sup_{0\leq t\leq T} \left|\Upsilon_{t}\right|\right\|_p  <\infty.
			\end{align*}	
			 Let $\mathcal{D}$ be the space of all $(\mathcal{F}_t)$-progressively measurable and c\`adl\`ag processes  $\Upsilon\colon\Omega\times{[0,T]} \rightarrow \mathbb{R}$ such that
			\begin{align*}
				\sup_{\tau}\mathbb{E}\left[\left|\Upsilon_\tau\right|\right] <\infty,
			\end{align*}
			where the supremum is taken over all $(\mathcal{F}_t)$-stopping times $\tau$.
			We define $L^p(\bar{W}) $ as the space of all $(\mathcal{F}_t)$-progressively measurable processes $\sigma_{\Upsilon}\colon \Omega\times{[0,T]}\rightarrow \mathbb{R}^{1\times 2}$  such that
			\begin{align*}
				\left\|\sigma_{\Upsilon}\right\|_{L^p(\bar{W}) }:=\mathbb{E}\left[\left(\int_0^T\left|\sigma_{\Upsilon,s}\right|^2 ds\right)^\frac{p}{2}\right]^\frac{1}{p}<\infty,
			\end{align*}
			where for $z\in \mathbb{R}^{1\times 2}$, $|z|^2:=\operatorname{tr}(zz^T)$.			
			We define $L^p(\tilde \nu)$ as the space of all random fields $U_\Upsilon\colon \Omega\times{[0,T]}\times{[0,l^{L}_{\max}]}\rightarrow \mathbb{R}$ 
			which are measurable with respect to
			$\mathcal{P}\otimes\mathcal{B}([0,l^{L}_{\max}])$ (where $\mathcal{P}$ denotes the predictable $\sigma$-algebra on $\Omega\times[0,T]$ generated
			by the left-continuous $(\mathcal{F}_t)$-adapted processes) such that
			\begin{align*}
				\left\|U_\Upsilon\right\|_{L^p(\tilde \nu) }:=\mathbb{E}\left[\left(\int_0^T\int_{[0,l^{L}_{\max}]}\left|U_{\Upsilon,s}(l)\right|^2 \vartheta(dl)ds\right)^\frac{p}{2}\right]^\frac{1}{p}<\infty.
			\end{align*}
			An $L^p$-solution to a BSDE with terminal condition $\xi$ and generator function $f$ is a triplet $(\Upsilon,\sigma_{\Upsilon},U_{\Upsilon})\in \mathcal{S}^p\times L^p(\bar{W})\times L^p(\tilde N)$ which satisfies for all $t\in{[0,T]}$,
	\begin{equation}\label{eq:bsde_form}
		\begin{aligned}
		\Upsilon_t=\xi&+\int_t^T f(s,\Upsilon_s,\sigma_{\Upsilon,s},U_{\Upsilon,s})ds\\
		&-\int_t^T \sigma_{\Upsilon,s} d\bar{W}_s-\int_{{]t,T]}\times(0,l^{L}_{\max}]}U_{\Upsilon,s}(l)\tilde{\nu}(ds,dl).
		\end{aligned}  
	\end{equation} 

Specifications for the generator and the terminal conditions will be given further below.

		To be able to apply all the necessary BSDE machinery, we need to make for now stronger integrability and boundedness assumptions on the underlying market model. First, we consider a set of assumptions that strengthen the integrability assumption~\eqref{assumption:stochCoeff:integrability}:
			\begin{align}
				&\mathbb{E}\left[\left(\int_0^T{\left(|\lambda_t|+|\sigma_t|^2\right)}dt\right)^2\right]<\infty.\tag{B2}\label{ass:BSDE1}
			\end{align}
			Replacing in the above formulation $2$ by $p>0$:
			\begin{align}
				&\mathbb{E}\left[\left(\int_0^T{\left(|\lambda_t|+|\sigma_t|^{2}\right)}dt\right)^p\right]<\infty.\tag{Bp}\label{ass:BSDE1p}
			\end{align}
			

\subsection{Analysis of the Generator}\label{rem:BSDEINdiff} 
The following proposition provides the fundamental link between indifference strategies and BSDEs.
\begin{Prop}\label{Prop:LocalIndifferenceStrategiesWithStochCoefficients}
	Assume assumption \eqref{ass:BSDE1} holds and let $\varrho$ be a stopping time with $0\leq \varrho \leq T$, $\hat{\pi}\in {\mathcal{A}}$ a portfolio process and $\hat{\Upsilon} := \Upsilon^{\hat{\pi}}_t$. Then the following are equivalent
	\begin{enumerate}[(i)]
		\item $\hat{\pi}$ is an indifference strategy on $[\varrho,T]\cup\{\infty\}$;
		\item\label{item:indiffBSDEii} There is a process $\sigma_{\Upsilon}\in L^2(\bar{W})$, such that $(\hat{\Upsilon},\sigma_{\Upsilon},0)=(\hat{\Upsilon},\sigma_{\Upsilon},U_{\Upsilon})$ is on $[\varrho,T]$ a solution to the BSDE
		\begin{equation}\label{Prop:LocalIndifferenceStrategiesWithStochCoefficients:BSDE}
		\hat{\Upsilon}_t = \int_t^T\left( \Phi_s(\pi_s^M)- \Phi_s(\hat{\pi}_s)\right)ds -\int_t^T \sigma_{\Upsilon,s}d\bar{W}_s-\int_{(t,T]\times[0,l^{L}_{\max}]}U_{\Upsilon}(s,l)\tilde{\nu}(ds,dl),
		\end{equation}
		where $$\hat{\pi} = \frac{1 - e^{-(\hat{\Upsilon}_t\vee 0)}}{l^{WOC}}\,.$$
		
	\end{enumerate}
\end{Prop}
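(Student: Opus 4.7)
The key observation is that both conditions encode the same semimartingale decomposition of $\hat{\Upsilon}$. Since $\hat{\pi}\in\mathcal{A}$ forces $0\leq \hat{\pi}l^{WOC}\leq 1$, one has $\hat{\Upsilon}=-\log(1-\hat{\pi}l^{WOC})\geq 0$, so the inversion $\hat{\pi}=(1-e^{-\hat{\Upsilon}})/l^{WOC}$ holds pointwise and the $\vee 0$ in the statement is a harmless safeguard. Indifference on $[\varrho,T]\cup\{\infty\}$ means that $Z^{\hat{\pi}}_t=-\hat{\Upsilon}_t+\int_0^{t\wedge T}(\Phi_s(\hat{\pi}_s)-\Phi_s(\pi^M_s))ds$ is a martingale on that set; combined with the identity $Z^{\hat{\pi}}_\infty-Z^{\hat{\pi}}_T=\hat{\Upsilon}_T$, the martingale property at $\infty$ forces the terminal condition $\hat{\Upsilon}_T=0$ (using that $\hat{\Upsilon}_T$ is non-negative and $\mathcal{F}_T$-measurable). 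This is precisely what evaluating the BSDE \eqref{Prop:LocalIndifferenceStrategiesWithStochCoefficients:BSDE} at $t=T$ imposes. My plan is to exploit this structural symmetry via direct substitution in one direction and the predictable martingale representation theorem in the other.

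For the implication $(ii)\Rightarrow (i)$, I would take the increment of the BSDE between times $s$ and $t$ with $\varrho\leq s\leq t\leq T$ and substitute into the identity $Z^{\hat{\pi}}_t-Z^{\hat{\pi}}_s=-(\hat{\Upsilon}_t-\hat{\Upsilon}_s)+\int_s^t(\Phi_u(\hat{\pi}_u)-\Phi_u(\pi^M_u))du$. The finite-variation pieces then cancel exactly and, using $U_\Upsilon=0$, one is left with $Z^{\hat{\pi}}_t-Z^{\hat{\pi}}_s=-\int_s^t\sigma_{\Upsilon,u}d\bar{W}_u$. Since $\sigma_{\Upsilon}\in L^2(\bar{W})$, this is a true martingale increment. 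Together with $Z^{\hat{\pi}}_T=Z^{\hat{\pi}}_\infty$ (from $\hat{\Upsilon}_T=0$), this extends the martingale property to $\{\infty\}$ and yields $(i)$.

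For the converse direction $(i)\Rightarrow (ii)$, I would set $M_t:=Z^{\hat{\pi}}_t-Z^{\hat{\pi}}_\varrho$ on $[\varrho,T]$, which is a martingale by assumption. Since $\hat{\pi}\in\mathcal{A}$ is continuous, $\hat{\Upsilon}$ and hence $M$ are continuous. The predictable martingale representation theorem in the Brownian--Poisson filtration then furnishes integrands such that $M_t=\int_\varrho^t\alpha_s d\bar{W}_s+\int_{(\varrho,t]\times[0,l^{L}_{\max}]}V(s,l)\tilde{\nu}(ds,dl)$; continuity of $M$ forces the pure-jump component to vanish, so $V\equiv 0$ in the $\vartheta\otimes ds\otimes\mathbb{P}$-sense by measurability. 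Setting $\sigma_\Upsilon:=-\alpha$ and combining with $\hat{\Upsilon}_T=0$ reproduces the BSDE in the stated form. The main obstacle I expect is upgrading $\alpha$ from a merely local $\bar{W}$-integrand to an element of $L^2(\bar{W})$, which amounts to $L^2$-boundedness of $M$ on $[\varrho,T]$. I would establish this using the boundedness of $\hat{\pi}$ (and of $\pi^M$ via Proposition~\ref{prop: argmax}) together with assumption \eqref{ass:BSDE1} to control $\int_0^T|\Phi_s(\hat{\pi}_s)-\Phi_s(\pi^M_s)|ds$ in $L^2$; a secondary subtlety is verifying that the Brownian--Poisson representation really does force the jump component of a continuous martingale to vanish, which uses the orthogonality of $\bar{W}$ and $\tilde{\nu}$ as sources of noise.
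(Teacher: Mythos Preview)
Your proposal is correct and follows essentially the same route as the paper: both directions hinge on the identity $Z^{\hat{\pi}}_t=-\hat{\Upsilon}_t+\int_0^{t}(\Phi_s(\hat{\pi}_s)-\Phi_s(\pi^M_s))\,ds$, with $(ii)\Rightarrow(i)$ obtained by direct substitution (the finite-variation parts cancel, leaving a true $\bar{W}$-martingale) and $(i)\Rightarrow(ii)$ obtained from the martingale representation theorem in the Brownian--Poisson filtration, where continuity of $\hat{\pi}$ kills the jump integrand and assumption \eqref{ass:BSDE1} furnishes the $L^2$-bound needed for $\sigma_\Upsilon\in L^2(\bar{W})$. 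The paper's proof is organized identically, including the reduction of the $\{T,\infty\}$ condition to $\hat{\Upsilon}_T=0$.
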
 

Equation \eqref{Prop:LocalIndifferenceStrategiesWithStochCoefficients:BSDE} is a BSDE in the sense of our setting above, \eqref{eq:bsde_form}. To apply existence results from BSDE theory, it is necessary to analyze  its generator, that is the integrand w.r.t.~$ds$. We will do this here by listing several straightforward observations.
					
We expand $\Phi$ in \eqref{item:indiffBSDEii} to relate it to the form \eqref{eq:bsde_form}: The triplet $(\hat{\Upsilon},\sigma_{\Upsilon},0)$ is the solution to \eqref{eq:bsde_form} with terminal condition $\xi=0$ and $f$ given by
\begin{align*}
f(t,y)=&\lambda_t\pi^M_t-\frac{(\sigma_t \pi^M_t)^2}{2}+\int_{[0,l^{L}_{\max}]}\log\Big(1-\pi^M_tl\Big)\vartheta(dl)\\
&-\frac{\lambda_t\big(1 - e^{-(y\vee 0)}\big)}{l^{WOC}}+\frac{\sigma_t^2\big(1 - e^{-(y\vee 0)}\big)^2}{2l^{WOC}}-\int_{[0,l^{L}_{\max}]}\log\Big(1-\frac{1 - e^{-(y\vee 0)}}{l^{WOC}}l\Big)\vartheta(dl).
\end{align*}	

The generator depends on $\pi^M$, that is, on $\lambda$ and $\sigma$ and the terminal condition are measurable with respect to the $\sigma$-algebra generated by $(\tilde{W},\hat{W})$ only. Also the terminal condition $\xi=0$ is trivially measurable by the same $\sigma$-algebra. Therefore the setting might be reduced to BSDEs driven only by a Brownian motion, 
$$\hat{\Upsilon}_t = \int_t^Tf(s,\hat{\Upsilon}_s)ds -\int_t^T \sigma_{\Upsilon,s}d\bar{W}_s,\quad t\in [0,T].$$
However, our setting including the additional $U_\Upsilon$-variable is more natural, as the sources of randomness for the whole model are based on the $\sigma$-algebra $\mathcal{F}$, which includes the jumps. In the section that follows, taking along the jumps is an extension for the BSDE results showing that this part of the theory is feasible for future treatment when considering models that may have jumps in the portfolio processes $\pi$. Note that 
\begin{align*}
f(t,0)=\lambda_t\pi^M_t-\frac{(\sigma_t\pi^M_t)^2}{2}+\int_{[0,l^{L}_{\max}]}\log\Big(1-\pi^M_tl\Big)\vartheta(dl)
\end{align*}
and, due to the boundedness of $\pi^M$, 
\begin{align}\label{eq:fcondI}
\mathbb{E}\left[\left(\int_0^T{|f(t,0)|}dt\right)^p\right]<\infty
\end{align}
holds if \eqref{ass:BSDE1p} is satisfied. In the model without jumps, where $\pi^M=\frac{\lambda}{\sigma^2}\vee 0=\frac{\lambda^+}{\sigma^2}$, 
$$\mathbb{E}\left[\left(\int_0^T{|f(t,0)|}dt\right)^p\right]<\infty$$
holds if \eqref{ass:BSDE1p} and \eqref{assumption:stochCoeff:MertonBounded} do, or, more generally, if \eqref{ass:BSDE1p} is replaced by

\begin{align}
				&\mathbb{E}\left[\left(\int_0^T{\frac{|\lambda^+_t\lambda_t-\frac{1}{2}\lambda^+_t|}{\sigma^2_t}}dt\right)^p\right]<\infty.\tag{BpW}\label{ass:BSDE1pW}
			\end{align}
This condition also follows, if e.g.~$\mathbb{E}\left[\left(\int_0^T\lambda_t^2dt\right)^p\right]<\infty$ and $\tilde\sigma=\inf\{|\sigma_t(\omega)|:(t,\omega)\in [0,T]\times\Omega \}>0$. For the generator $f$ one readily obtains the relation 
\begin{align*}
(f(t,y_1)-f(t,y_2))(y_1-y_2)\leq (\lambda_t^-+\sigma^2)\left(\frac{1}{l^{WOC}}\vee\frac{1}{2(l^{WOC})^2}\right)|y_1-y_2|^2.
\end{align*}
Hence, a one-sided Lipschitz condition 
\begin{align}\label{eq:fcondLip}
(f(t,y_1)-f(t,y_2))(y_1-y_2)\leq K|y_1-y_2|^2.
\end{align}
with $K\in (0,\infty)$ is satisfied for $f$ whenever the processes $\left(\frac{1}{l^{WOC}}\vee\frac{1}{2(l^{WOC})^2}\right)\lambda^-$ and $\left(\frac{1}{l^{WOC}}\vee\frac{1}{2(l^{WOC})^2}\right)\sigma^2$ are bounded by $K$. The conditions \eqref{eq:fcondI} and \eqref{eq:fcondLip} are standard assumptions for BSDEs to yield unique solutions. However, \eqref{eq:fcondLip} demands a strong condition (uniform boundedness) on the $\sigma$ coefficient. We are going to ease that condition in Section \ref{ssec:TreatBSDEs}.

			\begin{proof}[Proof of Proposition \ref{Prop:LocalIndifferenceStrategiesWithStochCoefficients}]	 
				First note that since $\pi_{\infty}=0$ for all admissible strategies, $\hat{\Upsilon}_{\infty} = 0$ necessarily has to hold. Since $\mathcal{F}_T=\mathcal{F}_{\infty}$ and all summands in the definition of $Z:=Z^{\hat{\pi}}$ except for $\hat{\Upsilon}$ are identical for $t=T$ and $t=\infty$, $Z$ is a martingale on the time domain $\{T,\infty\}$, if and only if  the terminal condition $\hat{\Upsilon}_T=0$ holds. We can thus assume $\Upsilon_T=0$ in the remaining part of the proof.
				
				\smallskip
				
				Next, let $\tau$ be an arbitrary stopping time with $\varrho\leq\tau\leq T$. By definition of $Z$ we have
				\begin{equation}\label{Prop:LocalIndifferenceStrategiesWithStochCoefficients:proof:eq1}
					\Upsilon_{\tau} = \int_0^\tau{(\Phi_s(\hat{\pi}_s)-\Phi_s(\pi_s^M))ds} - Z_{\tau} = \Upsilon_{\varrho} + \int_\varrho^\tau{(\Phi_s(\hat{\pi}_s)-\Phi_s(\pi_s^M))ds} - \left(Z_{\tau} - Z_{\varrho}\right).
				\end{equation}
				
				\smallskip
				
				Now if (i) holds, then by the martingale representation theorem
				there are  processes $\sigma_\Upsilon\in L^2(\bar{W}), U_\Upsilon\in L^2(\tilde{\nu})$, such that for all such stopping times $\tau$\footnote{The martingale representation theorem is often only stated for a deterministic time domain like $[0,T]$, which is sufficient to permit our usage on the interval $[\varrho,T]$: just apply the theorem to the martingale $\hat{Z}_t:=E[Z_T\mid \mathcal{F}_t]$ and use the fact that $\hat{Z}$ and $Z$ must coincide on $[\varrho,T]$.}
				$$Z_{\tau} = Z_{\varrho} - \int_{\varrho}^{\tau}{\sigma_{\Upsilon,t}d\bar{W}_t}-\int_{(\varrho,\tau]\times{(0,l^{L}_{\max}]}}U_{\Upsilon,t}(l)\tilde{\nu}(dt,dl).$$
				Substituting this into (\ref{Prop:LocalIndifferenceStrategiesWithStochCoefficients:proof:eq1}) and using $ \hat{\Upsilon}_t = -\log{(1-l^{WOC}\hat{\pi}_t)} \Leftrightarrow \hat{\pi}_t = \frac{1 - e^{-\hat{\Upsilon}_t}}{l^{WOC}}$ shows that $\hat{\Upsilon}$ satisfies (\ref{Prop:LocalIndifferenceStrategiesWithStochCoefficients:BSDE}). As $\hat{\pi}$ is continuous, also $Z$ is, and so the integrand of the jump part, $U_{\Upsilon}$ must be 0.
				
				\smallskip
				
				Now assume conversely that (ii) holds. Then we obtain from equation (\ref{Prop:LocalIndifferenceStrategiesWithStochCoefficients:proof:eq1})\begin{align*}Z_{\tau}-Z_{\varrho} = -\Upsilon_{\tau} +\Upsilon_{\varrho} + \int_\varrho^\tau{(\Phi_s(\hat{\pi}_s)-\Phi_s(\pi_s^M))ds} = - \int_{\varrho}^{\tau}{\sigma_{\Upsilon,t}d\bar{W}_t}
				\end{align*}
				for any stopping time $\tau\geq \varrho$. 
				Since $\sigma_\Upsilon\in L^2(\bar{W})$, the stochastic integral on the right is a martingale and, hence, $Z$ must be a martingale on $[\varrho,T]$.
				
			\end{proof}
			
			\begin{Rem}
			Without assumption \eqref{ass:BSDE1}, the proof of the direction (i)$\Rightarrow$(ii) in the preceding proposition fails, because then $Z$ is not necessarily a square-integrable martingale and thus the martingale representation theorem does not imply square integrability of the process $\sigma_{\Upsilon}$ anymore. The reader easily verifies that \eqref{ass:BSDE1} was only used in the above proof to conclude $\sigma_{\Upsilon}\in L^2(\bar{W})$. However, if one replaces the part $\sigma_{\Upsilon}\in L^2(\bar{W})$ from (ii) by $\sigma_{\Upsilon}\in L^p(\bar{W})$, for some $p>1$, then (i)$\Rightarrow$(ii) still holds under the assumption \eqref{ass:BSDE1p} for $p>1$ (just set $f=0$ in \cite[Theorem 3.3]{KremsnerSteinicke} or \cite[Theorem 2]{Kruse}).
				The assumption  \eqref{assumption:stochCoeff:integrability} alone already ensures that (i)$\Rightarrow$(ii) still holds yielding $\sigma_\Upsilon\in L^\beta(\bar{W})$ for all $\beta\in (0,1)$. (see \cite[Theorem 6.3]{briand2003lp}). 
			\end{Rem}
			Consequently, the BSDE representation \eqref{Prop:LocalIndifferenceStrategiesWithStochCoefficients:BSDE} with a generic process $\sigma_{\Upsilon}$ still holds for any indifference strategy $\hat{\pi}$, even without assumption \eqref{ass:BSDE1}. 
%
%

				In the next subsections, we show that BSDE (\ref{Prop:LocalIndifferenceStrategiesWithStochCoefficients:BSDE}) has a unique square-integrable solution, assuming that $\lambda$ and $\sigma$ satisfy the following condition.
				\begin{align}\text{For some }\varepsilon>0,\quad
					\mathbb{E}\left[\int_0^T\exp\left(\varepsilon\left(\lambda_t^-+\sigma_t^2\right)\right)dt\right]<\infty\,.\tag{Bexp}
					\label{ass:BSDEXPB}
				\end{align}
				Note that if $\lambda^-$ is replaced by $\lambda$, the  above assumption implies \eqref{ass:BSDE1p} for all $p>0$ and therefore also \eqref{ass:BSDE1} and \eqref{assumption:stochCoeff:integrability}.
			
			\bigskip

					\subsection{Solutions to the Indifference Utility BSDE}\label{ssec:TreatBSDEs}
					We present various existence and uniqueness results about BSDEs here which are relevant for the above characterization. We will also show a comparison theorem for use in Section \ref{sec:Markovian}.
The theorems in this subsection consider the full L\'evy setting. All assertions hold for the case without jumps as well thanks to \eqref{assumption:stochCoeff:MertonBounded} which grants boundedness for the Merton strategy. Alternatively, the theorems also work assuming \eqref{ass:BSDE1pW}, and, in place of \eqref{ass:BSDEXPB}, using $\mathbb{E}\left[\int_0^T\exp\left(\varepsilon\frac{|\lambda^+_t\lambda_t-\frac{1}{2}\lambda^+_t|}{\sigma^2_t}\right)dt\right]<\infty$ . For a better readability, the corresponding proofs are delegated to Appendix~\ref{App:Proofs}.
					\begin{Prop}\label{Prop:ExistenceOfIndifferenceBSDESolutionJumps2}
						If \eqref{ass:BSDE1} holds and $\lambda^+,\sigma,$ are bounded processes, then there is a unique triplet $(\Upsilon,\sigma_{\Upsilon},U_\Upsilon)\in \mathcal{S}^2\times L^2(\bar{W})\times L^2(\tilde{\nu})$
						which solves the BSDE (\ref{Prop:LocalIndifferenceStrategiesWithStochCoefficients:BSDE}). 
					\end{Prop}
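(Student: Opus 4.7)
The plan is to obtain the solution by truncating the (potentially unbounded) process $\lambda^-$, applying standard L\'evy-BSDE theory for Lipschitz generators to each truncated problem, and then passing to the limit via uniform a priori estimates in $\mathcal{S}^2\times L^2(\bar W)\times L^2(\tilde\nu)$.

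First I would analyze the generator. Writing $\hat\pi(y):=(1-e^{-(y\vee 0)})/l^{WOC}\in[0,1/l^{WOC}]$, the generator reduces to $f(t,y)=\Phi_t(\pi^M_t)-\Phi_t(\hat\pi(y))$. Because $\pi^M_t$ maximizes $\Phi_t$ over the larger interval $[0,1/l^{L}_{\max}]\supseteq[0,1/l^{WOC}]$, this is non-negative. Concavity of $\Phi_t$ and the boundedness of $\lambda^+,\sigma$ yield a Lipschitz estimate $|f(t,y_1)-f(t,y_2)|\leq L_t|y_1-y_2|$ with $L_t\leq C(1+\lambda_t^-)/l^{WOC}$, and a pointwise bound $0\leq f(t,y)\leq C(1+\lambda_t^-)$ with the \emph{same} random process on the right. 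Under \eqref{ass:BSDE1}, both $\int_0^T L_t\,dt$ and $\int_0^T f(t,0)\,dt$ lie in $L^2(\mathbb{P})$, which is what is needed for classical $L^2$-solvability once the stochasticity of $L_t$ is removed.

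Next I would introduce $\lambda^{(N)}_t:=\lambda_t^+-(\lambda_t^-\wedge N)$ and let $f^{(N)}$ be the generator obtained by replacing $\lambda$ by $\lambda^{(N)}$. Since $\lambda^{(N)}$ is now uniformly bounded, $f^{(N)}$ is Lipschitz in $y$ with a deterministic constant, and standard results (e.g. \cite[Theorem 3.3]{KremsnerSteinicke} or \cite[Theorem 2]{Kruse}) provide a unique solution $(\Upsilon^{(N)},\sigma^{(N)},U^{(N)})\in\mathcal{S}^2\times L^2(\bar W)\times L^2(\tilde\nu)$. Crucially, the pointwise bound $0\leq f^{(N)}\leq C(1+\lambda^-)$ holds \emph{uniformly} in $N$, so the representation $\Upsilon^{(N)}_t=\mathbb{E}\bigl[\int_t^T f^{(N)}(s,\Upsilon^{(N)}_s)\,ds\,|\,\mathcal{F}_t\bigr]$ together with Doob's $L^2$ inequality bounds $\|\Upsilon^{(N)}\|_{\mathcal{S}^2}$ uniformly in $N$. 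It\^o's formula applied to $|\Upsilon^{(N)}|^2$ then transfers this into uniform bounds on $\sigma^{(N)}$ and $U^{(N)}$ in their $L^2$-spaces.

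For convergence I would use that $f^{(N)}=f^{(M)}$ on $\{\lambda^-\leq N\wedge M\}$, so $|f^{(N)}-f^{(M)}|\leq C(1+\lambda^-)\mathbf{1}_{\{\lambda^->N\wedge M\}}$. Applying It\^o to $|\Upsilon^{(N)}-\Upsilon^{(M)}|^2$ (with an auxiliary exponential weight $e^{A_t}$, $A_t:=\int_0^t 2L_s\,ds$, if needed to absorb the Lipschitz term) and combining the uniform a priori estimates with the dominated convergence $\mathbb{E}\bigl[(\int_0^T(1+\lambda_s^-)\mathbf{1}_{\{\lambda_s^->N\wedge M\}}\,ds)^2\bigr]\to 0$ yields a Cauchy sequence in $\mathcal{S}^2\times L^2(\bar W)\times L^2(\tilde\nu)$; its limit solves the original BSDE. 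Uniqueness follows from the same It\^o computation applied to the difference of two solutions, where the stochastic Lipschitz constant is again absorbed by the weight $e^{A_t}$ (well-defined since $A_T\in L^2$).

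The main obstacle will be the Cauchy step: standard $\mathcal{S}^2$-contraction arguments presuppose a deterministic Lipschitz constant, and the usual fixes from the stochastic-Lipschitz literature (\cite{Briand2008}, \cite{ElKarouiHuang1997}, \cite{Saplaouras17}) live in weighted or otherwise non-standard spaces. The observation that rescues the standard-space framework here is that the generator depends on $y$ only through the \emph{bounded} map $\hat\pi$, so the monotonicity $f\geq 0$ alone already yields $\mathcal{S}^2$-bounds on $\Upsilon^{(N)}$ that are uniform in $N$, and the weighted estimate is only needed as an auxiliary device inside an otherwise unweighted convergence argument.
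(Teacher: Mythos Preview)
The paper's proof is a one-liner: it records that under the boundedness hypotheses the generator $f$ satisfies the integrability condition \eqref{eq:fcondI} on $f(\cdot,0)$ and the one-sided Lipschitz (monotonicity) condition \eqref{eq:fcondLip}, and then simply cites existing monotone-BSDE existence and uniqueness theorems (\cite[Theorem~3.2]{geiss2018monotonic}, \cite[Theorem~3.3]{KremsnerSteinicke}, \cite[Theorem~1]{Kruse}). No truncation or passage to the limit appears at all. Your scheme---truncate $\lambda^-$, solve each truncated BSDE, derive uniform a~priori bounds, and pass to a limit---is exactly the strategy the paper reserves for the \emph{next} result, Theorem~\ref{Thm:ExistenceOfIndifferenceBSDESolutionUnbounded2Jumps}, where the boundedness assumptions are dropped; for the present proposition it is unnecessary.

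Your argument also has a genuine gap in the Cauchy and uniqueness steps. You propose to absorb the stochastic Lipschitz constant $L_t\sim C(1+\lambda_t^-)$ via the weight $e^{A_t}$ with $A_t=2\int_0^t L_s\,ds$, justified by ``$A_T\in L^2$''. But $A_T\in L^2$ does not make $e^{A_T}$ lie in any $L^p$, and the weighted contraction does not close in $\mathcal{S}^2\times L^2(\bar W)\times L^2(\tilde\nu)$ under only \eqref{ass:BSDE1}; this is precisely why the paper needs the stronger exponential-integrability hypothesis \eqref{ass:BSDEXPB} for uniqueness in Theorem~\ref{Thm:UniquenessOfIndifferenceBSDESolutionUnboundedJumps}. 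The observation that $f$ factors through the bounded map $\hat\pi$ does give uniform $\mathcal{S}^2$-bounds on the $\Upsilon^{(N)}$, but that alone neither yields the Cauchy property (in Theorem~\ref{Thm:ExistenceOfIndifferenceBSDESolutionUnbounded2Jumps} the paper sidesteps a Cauchy argument entirely by using the comparison theorem to obtain \emph{monotone} convergence of the truncated solutions) nor supplies uniqueness of the limit.
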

					We can now obtain solutions of unbounded $\lambda,\sigma$. By an approximation procedure for the BSDE's generator, along with standard methods for BSDEs with $L^2$-data, we get the following:  
					\begin{Thm}\label{Thm:ExistenceOfIndifferenceBSDESolutionUnbounded2Jumps}
						If \eqref{ass:BSDE1} holds
						then there is a triplet $(\Upsilon,\sigma_{\Upsilon},U_\Upsilon)\in\mathcal{S}^2\times L^2(\bar{W})\times L^2(\tilde{\nu})$ which solves the BSDE (\ref{Prop:LocalIndifferenceStrategiesWithStochCoefficients:BSDE}). 
					\end{Thm}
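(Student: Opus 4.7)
The plan is a truncation-and-passage-to-the-limit argument on top of Proposition~\ref{Prop:ExistenceOfIndifferenceBSDESolutionJumps2}, in the spirit of the L\'evy-BSDE template from \cite{Kruse,KremsnerSteinicke}. For each $n\in\mathbb{N}$, define the bounded truncations $\lambda^n:=(-n)\vee\lambda\wedge n$ and $\sigma^n:=\sigma\wedge n$ (keeping $\sigma^n>0$), and let $\pi^{M,n}$, $\Phi^n$, $f^n$ denote the associated post-crash maximizer, reward function and BSDE generator of \eqref{Prop:LocalIndifferenceStrategiesWithStochCoefficients:BSDE}. Since $\lambda^{n,+}$ and $\sigma^n$ are bounded, Proposition~\ref{Prop:ExistenceOfIndifferenceBSDESolutionJumps2} gives a unique triple $(\Upsilon^n,\sigma_\Upsilon^n,U_\Upsilon^n)\in\mathcal{S}^2\times L^2(\bar W)\times L^2(\tilde\nu)$ solving the truncated BSDE.

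The critical step is obtaining $n$-uniform a priori estimates. Two structural facts help. First, $\pi^{M,n}_t\in[0,1/l^{L}_{\max}]$ with a bound independent of $n$, so $|f^n(t,0)|\le C(|\lambda_t|+\sigma_t^2+1)$ for some constant $C$ independent of $n$, whence by \eqref{ass:BSDE1}
\[
\sup_n \mathbb{E}\!\left[\Bigl(\int_0^T |f^n(s,0)|\,ds\Bigr)^{2}\right]<\infty.
\]
Second, by concavity of $\Phi^n_t$ with maximum at $\pi^{M,n}_t$, $f^n\ge 0$, which, combined with the terminal condition $\Upsilon^n_T=0$ and the BSDE representation, forces $\Upsilon^n\ge 0$. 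Applying It\^o's formula to $\exp\bigl(2\int_0^t K^n_s\,ds\bigr)|\Upsilon^n_t|^2$ with the one-sided Lipschitz bound~\eqref{eq:fcondLip} (with stochastic constant $K^n_s$), together with Young's and Burkholder--Davis--Gundy inequalities, would yield the desired control in the three target spaces.

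The main obstacle is precisely that the random constants $K^n_s$ are only $L^1$-bounded under \eqref{ass:BSDE1} (they lack exponential integrability, which is exactly what the stronger \eqref{ass:BSDEXPB} would provide), so the exponential weighting does not by itself produce $n$-uniform constants. I would resolve this by a localization: introduce $n$-independent stopping times $\tau_M:=\inf\{t\ge 0:\int_0^t(\lambda^-_s+\sigma_s^2)\,ds>M\}\wedge T$, derive estimates on $[0,\tau_M]$ that are $n$-uniform, and extract weakly convergent subsequences of $(\sigma_\Upsilon^n,U_\Upsilon^n)$ in $L^2(\bar W)\times L^2(\tilde\nu)$ on each localization level. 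Since $\tau_M\to T$ a.s.\ as $M\to\infty$ by finiteness of $\int_0^T(\lambda^-_s+\sigma_s^2)\,ds$ (implied by \eqref{ass:BSDE1}), a diagonal extraction combined with the monotonicity of $\Upsilon^n$ in $n$ (inherited from monotonicity of the truncations and the comparison available for the bounded-coefficient case) delivers a candidate limit $(\Upsilon,\sigma_\Upsilon,U_\Upsilon)$ in $\mathcal{S}^2\times L^2(\bar W)\times L^2(\tilde\nu)$.

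Finally, I verify that the limit satisfies \eqref{Prop:LocalIndifferenceStrategiesWithStochCoefficients:BSDE}: the stochastic integrals pass by $L^2$-convergence of $(\sigma_\Upsilon^n,U_\Upsilon^n)$, and the drift passes because $f^n(t,y)\to f(t,y)$ pointwise with $y\mapsto f(t,y)$ continuous, $\Upsilon^n\to\Upsilon$ in $\mathcal S^2$, and $|f^n(s,\Upsilon^n_s)|$ admits a $ds\otimes d\mathbb{P}$-integrable dominant supplied by the uniform a priori bounds and the Lipschitz estimate. This produces the desired solution triple in $\mathcal{S}^2\times L^2(\bar W)\times L^2(\tilde\nu)$.
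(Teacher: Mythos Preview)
Your overall template---truncate, apply the bounded-coefficient proposition, use comparison to obtain monotonicity, and pass to the limit---matches the paper's. But you miss the one structural observation that makes the a priori estimate immediate, and as a result you propose a detour (exponential weighting, stopping-time localisation, weak subsequential limits, diagonal extraction) that is both unnecessary and not actually justified as written.

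The key point is that the generator is globally bounded in $y$: since $1-e^{-(y\vee 0)}\in[0,1]$ and $l/l^{WOC}\le l^{L}_{\max}/l^{WOC}<1$, every $y$-dependent term in $f$ is dominated by a multiple of $|\lambda_t|+\sigma_t^2+1$, so
\[
\sup_{y\in\mathbb{R}}|f^n(t,y)|\le C\bigl(1+|\lambda_t|+\sigma_t^2\bigr)
\]
with $C$ independent of $n$. You record this bound only at $y=0$; the paper uses it for all $y$. With it, It\^o's formula for $|\Upsilon^n_t|^2$ together with Young's and the Burkholder--Davis--Gundy inequality gives directly
\[
\mathbb{E}\Bigl[\sup_t|\Upsilon^n_t|^2\Bigr]+\|\sigma_\Upsilon^n\|_{L^2(\bar W)}^2+\|U_\Upsilon^n\|_{L^2(\tilde\nu)}^2
\le C'\,\mathbb{E}\Bigl[\Bigl(\int_0^T|f^n(s,\Upsilon^n_s)|\,ds\Bigr)^{2}\Bigr]
\le C''\,\mathbb{E}\Bigl[\Bigl(\int_0^T(1+|\lambda_s|+\sigma_s^2)\,ds\Bigr)^{2}\Bigr],
\]
which is finite and $n$-uniform under \eqref{ass:BSDE1}. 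No exponential weight, no stopping times.

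Concerning your proposed workaround: weak $L^2$ limits of $(\sigma_\Upsilon^n,U_\Upsilon^n)$ do not by themselves pass through the nonlinear drift, and monotonicity of $\Upsilon^n$ plus a uniform $\mathcal{S}^2$ bound does not yield $\Upsilon^n\to\Upsilon$ in $\mathcal{S}^2$ without further argument. The paper avoids these issues altogether: it takes the pointwise monotone limit $\Upsilon$ of the $\Upsilon^n$, shows by the uniform bound above and dominated convergence that $\int_t^Tf^n(s,\Upsilon^n_s)\,ds\to\int_t^Tf(s,\Upsilon_s)\,ds$ a.s., and then obtains $\sigma_\Upsilon,U_\Upsilon$ by applying the martingale representation theorem to the $L^2$ limit of the martingale parts, rather than by extracting limits of $(\sigma_\Upsilon^n,U_\Upsilon^n)$. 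A minor further point: the paper uses a two-parameter truncation $f^{N,n}$ so that the approximating generators are monotone in each index separately ($f^{N,n}\nearrow f^N$ in $n$, $f^N\searrow f$ in $N$); your single truncation $\lambda^n=(-n)\vee\lambda\wedge n$, $\sigma^n=\sigma\wedge n$ does not obviously produce a monotone family of generators, so the comparison argument you invoke for monotonicity of $\Upsilon^n$ needs more care.
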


					\begin{Rem}\label{rem:BSDErem}
						In a similar manner, one can show that if for some $p>1$, \eqref{ass:BSDE1p} is satisfied then there is a triplet $(\Upsilon,\sigma_\Upsilon,U_\Upsilon)\in \mathcal{S}^p\times L^p(\bar{W})\times L^p(\tilde{\nu})$ solving the BSDE \eqref{Prop:LocalIndifferenceStrategiesWithStochCoefficients:BSDE}.
						In the case without jumps, by using \cite[Theorems 6.2 and 6.3]{briand2003lp} we can show that assuming only \eqref{assumption:stochCoeff:integrability}, the above Theorem holds with $(\Upsilon,\sigma_\Upsilon)\in \mathcal{D}\times \bigcup_{\beta\in (0,1)}L^\beta(\bar{W})$.

						As mentioned below in the proof of Proposition \ref{Prop:LocalIndifferenceStrategiesWithStochCoefficients}, it is possible to derive an equivalent of \cite[Theorems 6.2 and 6.3]{briand2003lp} for the L\'evy case using methods from there, \cite[Proposition 4.4]{KremsnerSteinicke} and the It\^o formula of \cite{Kruse} to find that already under \eqref{assumption:stochCoeff:integrability}, we get a solution $(\Upsilon,\sigma_\Upsilon,U_\Upsilon)\in\mathcal{D}\times\bigcup_{\beta\in (0,1)}L^\beta(\bar{W})\times \bigcup_{\beta\in (0,1)}L^\beta(\tilde{\nu})$.
					\end{Rem}	
	With stronger tail properties, in addition to existence, the following theorem grants uniqueness of solutions in a class of functions.
					\begin{Thm}\label{Thm:UniquenessOfIndifferenceBSDESolutionUnboundedJumps}
						If \eqref{ass:BSDEXPB} holds and if $\mathbb{E}\left[\left(\int_0^T|\lambda^+_t|dt\right)^p\right]<\infty$ for some $p\in (0,\infty)$, then there is a unique triplet of progressively measurable functions $(\Upsilon,\sigma_{\Upsilon},U_\Upsilon) \in \mathcal{S}^p\times L^p(\bar{W})\times L^p(\tilde{\nu})$ , which solves the BSDE (\ref{Prop:LocalIndifferenceStrategiesWithStochCoefficients:BSDE}). 
					\end{Thm}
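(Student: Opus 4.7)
The plan is to extend the approximation argument of Theorem~\ref{Thm:ExistenceOfIndifferenceBSDESolutionUnbounded2Jumps} from the $L^2$ setting to the general $L^p$ setting, exploiting the exponential integrability assumption \eqref{ass:BSDEXPB}. The guiding observation is that the generator $f$ satisfies the stochastic one-sided Lipschitz condition \eqref{eq:fcondLip} with Lipschitz process $K_t\sim\lambda_t^-+\sigma_t^2$, and \eqref{ass:BSDEXPB} makes $\exp(\alpha\int_0^T K_s\,ds)$ integrable to all orders for $\alpha$ small enough. A Girsanov-type exponential weighting then converts the stochastic-Lipschitz regime into an effectively Lipschitz one, in the spirit of \cite{Briand2008,ElKarouiHuang1997} for the no-jump case.

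\textbf{Existence.} I would truncate $\lambda$ and $\sigma$ at level $n$, apply Proposition~\ref{Prop:ExistenceOfIndifferenceBSDESolutionJumps2} to obtain bounded-coefficient solutions $(\Upsilon^n,\sigma_\Upsilon^n,U_\Upsilon^n)$, and pass to the limit as in Theorem~\ref{Thm:ExistenceOfIndifferenceBSDESolutionUnbounded2Jumps}; the only new ingredient is a uniform $\mathcal{S}^p\times L^p(\bar W)\times L^p(\tilde\nu)$ bound. To get it, apply It\^o's formula to $e^{\alpha A_t^n}|\Upsilon_t^n|^2$ with $A_t^n:=\int_0^t K_s^n\,ds$. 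The one-sided Lipschitz bound at $y_2=0$ yields $y\,f^n(t,y)\leq K_t^n y^2+y\,f^n(t,0)$, and a suitable choice of $\alpha$ together with Young's inequality absorbs the quadratic term into the weight derivative, producing
\[
\mathbb{E}\!\left[\sup_{t\in[0,T]}e^{\alpha A_t^n}|\Upsilon_t^n|^2 + \int_0^T e^{\alpha A_s^n}\!\left(|\sigma_{\Upsilon,s}^n|^2+\int_{[0,l^{L}_{\max}]}|U_{\Upsilon,s}^n(l)|^2\vartheta(dl)\right) ds\right] \leq C\,\mathbb{E}\!\left[\int_0^T e^{\alpha A_s^n}|f^n(s,0)|^2\,ds\right].
\]
Since the weight is bounded below by $1$, this already yields a uniform $\mathcal{S}^2\times L^2\times L^2$-bound; the upgrade to $L^p$ follows by a H\"older split that separates the weight $e^{q\alpha A_T^n}$, uniformly integrable in $n$ by \eqref{ass:BSDEXPB}, from the driver $|f^n(s,0)|$, whose $L^p$-norm is uniformly controlled by the hypothesis $\mathbb{E}[(\int_0^T|\lambda_t^+|\,dt)^p]<\infty$ together with the boundedness of $\pi^M$ from Proposition~\ref{prop: argmax}. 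Standard BSDE stability arguments (cf.~\cite{Kruse,KremsnerSteinicke}) then yield a limit triplet solving the BSDE in the claimed spaces.

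\textbf{Uniqueness.} Let $(\Upsilon^i,\sigma_\Upsilon^i,U_\Upsilon^i)$, $i=1,2$, be two solutions in $\mathcal{S}^p\times L^p(\bar W)\times L^p(\tilde\nu)$, and set $\delta\Upsilon:=\Upsilon^1-\Upsilon^2$, etc. It\^o applied to $e^{\alpha A_t}|\delta\Upsilon_t|^2$ with $A_t:=\int_0^t K_s\,ds$ gives, using \eqref{eq:fcondLip} between the two solutions,
\[
e^{\alpha A_t}|\delta\Upsilon_t|^2 + \int_t^T e^{\alpha A_s}\!\left(|\delta\sigma_{\Upsilon,s}|^2+\int_{[0,l^{L}_{\max}]}|\delta U_{\Upsilon,s}(l)|^2\vartheta(dl)\right) ds \leq \text{martingale increments},
\]
with terminal condition $\delta\Upsilon_T=0$. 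A standard localization together with the exponential integrability \eqref{ass:BSDEXPB} and H\"older's inequality, applied to the $\mathcal{S}^p\times L^p\times L^p$ control of the two solutions, shows that the expectations of the martingale terms vanish in the limit, forcing $\delta\Upsilon\equiv 0$ and hence $\delta\sigma_\Upsilon=0$, $\delta U_\Upsilon=0$.

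\textbf{Main obstacle.} The principal difficulty is the unboundedness of the stochastic Lipschitz process $K_t$: classical Gronwall-type estimates fail because of its randomness. The exponential weight $e^{\alpha A_t}$ reduces the problem to a deterministic Gronwall bound, but transfers the burden onto the integrability of the weight itself, which is exactly what \eqref{ass:BSDEXPB} delivers. Additional care is required to verify that the truncations preserve the exponential-integrability bounds uniformly in $n$, and that the jump component of It\^o's formula (producing extra nonnegative jump-sum terms) can be absorbed on the left-hand side rather than causing a blow-up in the estimates.
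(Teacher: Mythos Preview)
Your overall strategy is sensible but the core mechanism of the weighted estimate has a genuine gap. To absorb the drift term $2\delta\Upsilon_s(f(s,\Upsilon^1_s)-f(s,\Upsilon^2_s))\leq 2K_s|\delta\Upsilon_s|^2$ into the derivative of the weight $e^{\alpha A_t}$ you need $\alpha\geq 2c$ (with $c$ the constant in \eqref{eq:fcondLip}); but \eqref{ass:BSDEXPB} only tells you that $\mathbb{E}\int_0^T e^{\varepsilon K_t}\,dt<\infty$ for \emph{some} $\varepsilon>0$, which via Jensen yields $\mathbb{E}[e^{\alpha A_T}]<\infty$ only for $\alpha\leq \varepsilon/T$. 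In general $\varepsilon/T<2c$, so the weight you need is not integrable and the ``H\"older split that separates the weight $e^{q\alpha A_T^n}$'' does not give a bound uniform in $n$. The same obstruction breaks the uniqueness step: after localization you cannot pass to the limit because the stopped processes are not dominated in $L^1$. Your sentence ``integrable to all orders for $\alpha$ small enough'' betrays the tension: integrability forces $\alpha$ small, absorption forces $\alpha$ large.

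The paper sidesteps this by a different mechanism. Existence is immediate from Theorem~\ref{Thm:ExistenceOfIndifferenceBSDESolutionUnbounded2Jumps} and Remark~\ref{rem:BSDErem}, since \eqref{ass:BSDEXPB} together with the $\lambda^+$-assumption implies \eqref{ass:BSDE1p} for every $p$. For uniqueness the paper applies the $p=1$ It\^o formula of \cite[Lemma~7]{Kruse} to $|\Delta\Upsilon_t|$, then splits $\Omega$ into $C_n(t)=\{K_t\leq n\}$ and its complement. On $C_n$ one has a deterministic Lipschitz constant $n$; on $\Omega\setminus C_n$ the paper uses not Lipschitz continuity but the \emph{boundedness} $|f(t,y)-f(t,y')|\leq 2K_t$ (coming from $1-e^{-(y\vee 0)}\in[0,1]$), which you do not invoke. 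Gronwall then gives
\[
\mathbb{E}[|\Delta\Upsilon_r|]\leq e^{n(T-t)}\,c\,\mathbb{E}\!\left[\int_t^T \chi_{\Omega\setminus C_n(s)}\,2(\lambda_s^-+\sigma_s^2)\,ds\right],
\]
and on the bad set the product $e^{n(T-t)}(\lambda_s^-+\sigma_s^2)$ is bounded by a multiple of $e^{c'(T-t)(\lambda_s^-+\sigma_s^2)}$. If $T-t$ is taken small enough that $c'(T-t)<\varepsilon$, dominated convergence sends the right-hand side to zero as $n\to\infty$. One then iterates over successive short intervals to cover $[0,T]$. Your weighted approach could be repaired by the same short-interval trick (take $A_t=\int_{t_1}^t K_s\,ds$ on $[t_1,T]$ with $T-t_1\leq \varepsilon/(2c)$ and iterate), but as written it does not go through on the full interval.
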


					The next theorem states that for generators similar to the function $f$, we may compare the solutions to BSDEs if we have an inequality for the data.

					\begin{Thm}\label{Thm:ComparisonOfIndifferenceBSDESolutionUnboundedJumps}
						
						Let $f,f'\colon \Omega\times[0,T]\times \mathbb{R}\to\mathbb{R}$ be two (measurable) generators, such that there is a progressively measurable, non-negative process $K$ and $\varepsilon\in(0,\infty)$ with
						$$\mathbb{E}\left[\int_0^T\exp\left(\varepsilon K_t\right)dt\right]<\infty,$$
						and that for all $y,y'\in \mathbb{R}$
						$$|f(t,y)-f(t,y')|\leq K_t|y-y'|\wedge K_t, \quad(t,\omega)\text{-a.e.}$$ 
						Assume that $\xi,\xi'\in L^2$ and that there are solution triplets $(\Upsilon,\sigma_{\Upsilon},U_\Upsilon), (\Upsilon',\sigma_{\Upsilon}',U_\Upsilon')\in \mathcal{S}^2\times L^2(\bar{W})\times L^2(\tilde{\nu})$ to the BSDEs given by the terminal conditions $\xi,\xi'$ and generators $f,f'$.
						We assert that, if $\xi\leq \xi'$ a.s. and $f(t,\Upsilon'_t)\leq f'(t,\Upsilon'_t)$, $(\omega,t)$-a.e., then also $\Upsilon_t\leq \Upsilon'_t$ a.s. for all $t\in[0,T]$.\medskip

					\end{Thm}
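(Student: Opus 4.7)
I would follow the classical linearization-plus-change-of-factor strategy for BSDEs, adapted to the stochastic-Lipschitz setting with jumps. Set $Y_t := \Upsilon_t - \Upsilon'_t$, $\delta\xi := \xi - \xi' \le 0$, $\delta\sigma_t := \sigma_{\Upsilon,t} - \sigma'_{\Upsilon,t}$, $\delta U_t(l) := U_{\Upsilon,t}(l) - U'_{\Upsilon,t}(l)$. Splitting
$$f(s,\Upsilon_s) - f'(s,\Upsilon'_s) = \bigl(f(s,\Upsilon_s)-f(s,\Upsilon'_s)\bigr) + \bigl(f(s,\Upsilon'_s)-f'(s,\Upsilon'_s)\bigr) =: a_s Y_s + g_s,$$
where $a_s := (f(s,\Upsilon_s)-f(s,\Upsilon'_s))/Y_s$ on $\{Y_s\ne 0\}$ and $0$ otherwise, the hypothesis gives $|a_s| \le K_s$ and (from the $\min$ in the Lipschitz bound) also $|a_s Y_s| \le K_s$. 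The ordering $g_s \le 0$ is immediate from $f(s,\Upsilon'_s) \le f'(s,\Upsilon'_s)$. Thus $Y$ satisfies the linear BSDE
$$Y_t = \delta\xi + \int_t^T (a_s Y_s + g_s)\,ds - \int_t^T \delta\sigma_s\,d\bar W_s - \int_{(t,T]\times[0,l^{L}_{\max}]} \delta U_s(l)\,\tilde\nu(ds,dl).$$

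Introduce the continuous positive integrating factor $\Gamma_t := \exp\bigl(\int_0^t a_s\,ds\bigr)$, which is a.s.\ well defined since $K_s \le \varepsilon^{-1} e^{\varepsilon K_s}$ forces $\int_0^T K_s\,ds < \infty$ a.s. Itô's product rule applied to $\Gamma_t Y_t$ absorbs the $a_s Y_s$-drift into $d\Gamma_t = a_t \Gamma_t\,dt$, yielding
$$d(\Gamma_t Y_t) = -\Gamma_t g_t\,dt + \Gamma_t \delta\sigma_t\,d\bar W_t + \int \Gamma_{t-}\delta U_t(l)\,\tilde\nu(dt,dl).$$
Localizing with $\tau_n := T \wedge \inf\{s\ge 0 : \int_0^s K_u\,du + |Y_s| \ge n\}$ (so that $\tau_n \uparrow T$ a.s.) bounds $\Gamma$ and $Y$ on $[0,\tau_n]$, making both stochastic integrals true martingales there. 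Taking $\mathbb{E}[\,\cdot\,|\,\mathcal F_t]$ and using $\Gamma > 0$, $\delta\xi \le 0$, $g_s \le 0$ gives, on $\{t\le \tau_n\}$,
$$\Gamma_t Y_t = \mathbb{E}\Bigl[\Gamma_{\tau_n} Y_{\tau_n} + \int_t^{\tau_n} \Gamma_s g_s\,ds \,\Big|\, \mathcal F_t\Bigr] \le 0$$
once the limit $n\to\infty$ is passed, after which division by $\Gamma_t > 0$ finishes the argument.

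\textbf{Main obstacle.} The heart of the proof is securing integrable dominators to let $n\to\infty$ under the mild assumption that only some $\varepsilon>0$ is available in $\mathbb{E}[\int_0^T e^{\varepsilon K_s}\,ds]<\infty$. By Jensen this yields $\mathbb{E}[e^{(\varepsilon/T)\int_0^T K_s\,ds}]<\infty$, so $\Gamma^*_T := \sup_t \Gamma_t$ lies in $L^p$ for some $p>1$ but not necessarily in $L^2$. I would combine Hölder's inequality with $Y\in\mathcal S^2$ to dominate $\Gamma_{\tau_n} Y_{\tau_n}$, and separately note that $g_s \in L^1(d\mathbb{P}\otimes dt)$: this follows from $|g_s|\le |f(s,\Upsilon'_s)| + |f'(s,\Upsilon'_s)|$, the BSDE admissibility of both $\Upsilon,\Upsilon'$ (which forces $\int_0^T |f'(s,\Upsilon'_s)|\,ds$ to be integrable), and the bound $|f(s,\Upsilon'_s) - f(s,\Upsilon_s)|\le K_s$. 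The critical structural input throughout is the $\min$-form $|f(t,y)-f(t,y')| \le K_t(|y-y'|\wedge 1)$, which keeps the linearized drift $a_s Y_s$ uniformly dominated by $K_s$ and thereby prevents any blow-up at the localization step. If the available $p$ is too small for a direct Hölder split, one falls back on a two-stage scheme: first partition $[0,T]$ into subintervals short enough that $\varepsilon/\text{length}$ exceeds the threshold required, prove $Y \le 0$ on the last subinterval, and iterate backwards using the same argument with the already established sign of the terminal value.
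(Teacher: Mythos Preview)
Your argument is correct and arrives at the same backward subinterval iteration as the paper, but the route is genuinely different. The paper does \emph{not} linearize or use an integrating factor: it applies the Tanaka--Meyer formula directly to $((\Upsilon_t-\Upsilon'_t)^+)^2$, drops the non-positive jump and quadratic-variation contributions, inserts $f(s,\Upsilon'_s)$ to discard the $g$-part via the sign hypothesis, and is left with $\mathbb{E}[((\Delta\Upsilon_t)^+)^2]\le\mathbb{E}\bigl[\int_t^T K_s((\Delta\Upsilon_s)^+)^2\wedge K_s(\Delta\Upsilon_s)^+\,ds\bigr]$. It then truncates by splitting $\Omega$ into $\{K_s\le n\}$ and its complement, feeds the bounded part into Gronwall, and shows that the remainder $e^{(n+2)(T-t)}\mathbb{E}\bigl[\int_t^T\chi_{\{K_s>n\}}K_s^2\,ds\bigr]$ vanishes as $n\to\infty$ provided $T-t$ is small enough relative to $\varepsilon$. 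Your integrating-factor representation is closer to the classical El~Karoui--Peng--Quenez comparison and yields a formula for $\Gamma_tY_t$ whose sign is read off directly; the paper's energy estimate stays entirely at the $L^2$ level and never forms $\Gamma$, which sidesteps the H\"older split you need to dominate $\Gamma^*Y^*$. Both approaches ultimately hinge on the same partition of $[0,T]$ into short subintervals dictated by the size of $\varepsilon$. One small point: your claim that $g_s\in L^1(d\mathbb{P}\otimes dt)$ because $\int_0^T|f'(s,\Upsilon'_s)|\,ds$ is integrable is not obviously granted by the $\mathcal{S}^2\times L^2\times L^2$ solution concept alone, but you do not actually need it---since $\Gamma_sg_s\le 0$, conditional monotone convergence handles that term without any dominator.
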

					\begin{Rem}
						Alternatively, to the assumptions on $f$, one may assume instead that that there is a progressively measurable, non-negative process $K'$ and $\varepsilon\in(0,\infty)$ with
						$$\mathbb{E}\left[\int_0^T\exp\left(\varepsilon K'_t\right)dt\right]<\infty,$$
						and that for all $y,y'\in \mathbb{R}$
						$$|f'(t,y)-f'(t,y')|\leq K'_t|y-y'|\wedge K'_t, \quad(s,\omega)\text{-a.e.}$$ and that $f(t,\Upsilon_t)\leq f'(t,\Upsilon_t)$. The same assertion as above holds also in this case.
					\end{Rem}

\subsection{Existence of Indifference Strategies}				
		
We now have everything at hands in order to prove the existence of a unique indifference strategy. In particular, combining Proposition \ref{Prop:LocalIndifferenceStrategiesWithStochCoefficients} with $\varrho=0$ and Theorem \ref{Thm:UniquenessOfIndifferenceBSDESolutionUnboundedJumps} implies the following uniqueness result. 
					
					\begin{Cor}[Uniqueness of indifference strategies]\label{Cor:ExistenceAndUniquenessOfIndifferenceStrategies}
						Under the assumption \eqref{ass:BSDEXPB} there is a unique indifference strategy $\hat{\pi}$.
					\end{Cor}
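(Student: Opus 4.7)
The plan is to deduce the corollary directly from Proposition~\ref{Prop:LocalIndifferenceStrategiesWithStochCoefficients} (with $\varrho = 0$) combined with Theorem~\ref{Thm:UniquenessOfIndifferenceBSDESolutionUnboundedJumps}. Proposition~\ref{Prop:LocalIndifferenceStrategiesWithStochCoefficients} establishes a bijection between indifference strategies $\hat{\pi}$ on $[0,T]\cup\{\infty\}$ and solution triplets $(\hat{\Upsilon},\sigma_\Upsilon,0)$ of the BSDE~\eqref{Prop:LocalIndifferenceStrategiesWithStochCoefficients:BSDE}, via the explicit formula $\hat{\pi}_t = (1-e^{-(\hat{\Upsilon}_t\vee 0)})/l^{WOC}$. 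So the whole statement reduces to showing that the BSDE has exactly one admissible solution and that the corresponding strategy is admissible in the sense of Definition~\ref{def:admissibility}.

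For uniqueness I would argue as follows. Let $\hat{\pi}^1, \hat{\pi}^2$ be two indifference strategies. By Proposition~\ref{Prop:LocalIndifferenceStrategiesWithStochCoefficients}~(i)$\Rightarrow$(ii) the corresponding exposure processes $\hat{\Upsilon}^i = -\log(1-l^{WOC}\hat{\pi}^i)$ together with some $\sigma_\Upsilon^i$ and $U_\Upsilon^i\equiv 0$ both solve \eqref{Prop:LocalIndifferenceStrategiesWithStochCoefficients:BSDE} in the full filtration. Under \eqref{ass:BSDEXPB} (together with the implicit $\mathbb{E}[(\int_0^T|\lambda^+_t|dt)^p]<\infty$ invoked by the theorem), Theorem~\ref{Thm:UniquenessOfIndifferenceBSDESolutionUnboundedJumps} forces $\hat{\Upsilon}^1 = \hat{\Upsilon}^2$ pathwise, and the bijection formula then gives $\hat{\pi}^1 = \hat{\pi}^2$.

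For existence, I would take the unique solution $(\hat{\Upsilon},\sigma_\Upsilon,U_\Upsilon)$ provided by Theorem~\ref{Thm:UniquenessOfIndifferenceBSDESolutionUnboundedJumps}, show that $U_\Upsilon\equiv 0$ and hence $\hat{\Upsilon}$ is continuous, define $\hat{\pi} := (1-e^{-(\hat{\Upsilon}\vee 0)})/l^{WOC}$, verify $\hat{\pi}\in\mathcal{A}$, and apply Proposition~\ref{Prop:LocalIndifferenceStrategiesWithStochCoefficients}~(ii)$\Rightarrow$(i). The reason for $U_\Upsilon\equiv 0$ is structural: the generator $f$ in Section~\ref{rem:BSDEINdiff} depends only on $(t,y)$, is measurable with respect to the Brownian sub-filtration $(\mathcal{F}^{\hat{W},\tilde{W}}_t)$ thanks to \eqref{assumption:stochCoeff:CoefficientsContinuous}, and the terminal condition vanishes. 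So one can first solve the BSDE purely on $(\mathcal{F}^{\hat{W},\tilde{W}}_t)$, yielding a continuous pair $(\hat{\Upsilon}^B,\sigma_\Upsilon^B)$; extending by the zero jump part gives a triplet in the full filtration which, by the uniqueness asserted in Theorem~\ref{Thm:UniquenessOfIndifferenceBSDESolutionUnboundedJumps}, must coincide with $(\hat{\Upsilon},\sigma_\Upsilon,U_\Upsilon)$.

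The main obstacle is the last step — specifically, confirming that the Brownian-filtration solution actually lies in the $\mathcal{S}^p\times L^p(\bar W)\times L^p(\tilde\nu)$ class invoked by Theorem~\ref{Thm:UniquenessOfIndifferenceBSDESolutionUnboundedJumps} so that the uniqueness argument pins down $U_\Upsilon\equiv 0$, and then checking admissibility of $\hat{\pi}$. Since $0\leq \hat{\pi}\leq 1/l^{WOC}$ and $l^L_{\max}<l^{WOC}$, the bounds $l^L_{\max}\hat{\pi}\leq 1$, $l^{WOC}\hat{\pi}\leq 1$, and continuity are immediate; the integrability condition $\mathbb{E}\int_0^T\int_{[0,l^L_{\max}]}|\log(1-\hat{\pi}_t l)|\vartheta(dl)dt<\infty$ follows from the uniform estimate $|\log(1-\hat{\pi}_t l)|\leq l/(l^{WOC}-l)$ and finiteness of $\int l\,\vartheta(dl)$. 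The remaining verifications are routine consequences of the correspondence in Proposition~\ref{Prop:LocalIndifferenceStrategiesWithStochCoefficients}.
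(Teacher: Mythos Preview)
Your proposal is correct and follows exactly the route the paper takes: the paper's proof is literally the one-line remark that the corollary follows by combining Proposition~\ref{Prop:LocalIndifferenceStrategiesWithStochCoefficients} (with $\varrho=0$) and Theorem~\ref{Thm:UniquenessOfIndifferenceBSDESolutionUnboundedJumps}. You have simply unpacked this, supplying the admissibility check for $\hat{\pi}$ and the argument that $U_\Upsilon\equiv 0$ via the Brownian sub-filtration, both of which the paper leaves implicit (the latter is discussed informally in Section~\ref{rem:BSDEINdiff}).
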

				
				The indifference strategy $\hat{\pi}$ is also a special superindifference strategy and we have seen above that superindifference strategies are only helpful to bound the worst-case optimal solution, if they are uniformly dominated by the post-crash optimal strategy. It is thus natural to ask, whether this is the case for $\hat{\pi}$. While we do not provide a formal counterexample, our numerical experiments suggest, that one cannot hope for this to be the case in general. However, in certain situations this uniform dominance condition is indeed obtained. We provide here two sufficient conditions for this to happen. 				
				The first is almost trivial, but only valid in market models without jumps with very large excess returns relative to (Brownian) risk.
				
				\begin{Lem}\label{Lem:largeMertonStrategy}
					If $\lambda\geq \frac{\sigma^2}{l^{WOC}}$, then $\hat{\pi}\leq \pi^M$.
				\end{Lem}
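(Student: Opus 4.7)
The proof plan is essentially immediate once one unpacks the two defining formulas; the point is to bound $\hat{\pi}$ from above by $1/l^{WOC}$ and $\pi^M$ from below by the same constant, so no delicate argument is needed.

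First, I would invoke the representation of the indifference strategy from Proposition \ref{Prop:LocalIndifferenceStrategiesWithStochCoefficients} (applied with $\varrho = 0$) and Corollary \ref{Cor:ExistenceAndUniquenessOfIndifferenceStrategies}, namely
\[
\hat{\pi}_t = \frac{1 - e^{-(\hat{\Upsilon}_t \vee 0)}}{l^{WOC}}.
\]
Since $\hat{\Upsilon}_t \vee 0 \geq 0$ implies $e^{-(\hat{\Upsilon}_t \vee 0)} \in (0,1]$, the numerator lies in $[0,1)$, and consequently $\hat{\pi}_t \leq \frac{1}{l^{WOC}}$ pointwise.

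Next, in the setting without jumps, the post-crash optimal strategy is given by the Merton formula $\pi^M_t = \frac{\lambda_t}{\sigma_t^2} \vee 0$ (Corollary following Proposition \ref{prop: argmax}). The standing assumption $\lambda_t \geq \sigma_t^2/l^{WOC}$ together with $\sigma_t^2 > 0$ yields $\lambda_t > 0$ and hence
\[
\pi^M_t \;=\; \frac{\lambda_t}{\sigma_t^2} \;\geq\; \frac{1}{l^{WOC}}.
\]
Chaining the two bounds gives $\hat{\pi}_t \leq 1/l^{WOC} \leq \pi^M_t$, which is the claim.

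There is really no obstacle here — the lemma is essentially a sanity check confirming that in a regime with excess return very large relative to Brownian risk, the universal ceiling $1/l^{WOC}$ on indifference strategies (coming from the $\log(1-\pi l^{WOC})$ term in $\Upsilon^\pi$) automatically sits below the Merton line. This is exactly why the author remarks that the lemma is "almost trivial" and applies only in a restrictive regime, motivating the more substantive second sufficient condition that presumably follows.
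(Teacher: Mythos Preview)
Your proof is correct and follows essentially the same route as the paper: bound $\hat{\pi}$ above by $1/l^{WOC}$ and $\pi^M$ below by $1/l^{WOC}$, then chain. The only cosmetic difference is that the paper obtains $\hat{\pi}\leq 1/l^{WOC}$ directly from pre-crash admissibility (Definition~\ref{def:admissibility}\eqref{DefAdmii} requires $l^{WOC}\pi\leq 1$), whereas you derive it from the explicit BSDE formula; both are valid and equally short.
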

				\begin{proof}
					Due to $\pi^M = \frac{\lambda}{\sigma^2}\vee0$, we immediately obtain $\pi^M \geq \frac{1}{l^{WOC}}$. On the other hand, any pre-crash-admissible strategy is bounded above by $\frac{1}{l^{WOC}}$, in particular also the indifference strategy $\hat{\pi}$.
				\end{proof}
				
				In the model with jumps, the following proposition establishes the desired dominance condition. In particular, if $\pi^M$ is obtained by first finding a maximizer $\operatorname{argmax}_{\mathbb{R}}\Phi$, and then trimming it to the range $\big[0,\frac{1}{l^L_{\max}}\big]$.

					\begin{Prop}\label{Prop:MertonSubindifference}
						If the post-crash optimal portfolio process $\pi^M$ is obtained via an It\^o process $\varpi^M$, with coefficients $a,b$,
						\begin{align*}
						&d\varpi^M_t=a_tdt+b_td\bar{W}_t,\quad \varpi_0\in\big[0,\frac{1}{l^L_{\max}}\big] \\
						&\pi^M=(\varpi^M\vee 0)\wedge\frac{1}{l^L_{\max}}
						\end{align*}
						where $\mathbb{E}\left[\int_0^T a_s^2ds\right]<\infty$ and the stochastic integral $\int_0^{\cdot}b_td\bar{W}_t$ is a martingale and if $\pi^M$  
						\begin{enumerate}
				\item is also pre-crash-admissible (i.e. the minimum in the above equation has no effect),
				\item satisfies
				\begin{align*}
				\mathbb{E}\biggl[\int_0^T \biggl(\left|\frac{l^{WOC}{\bf 1}_{[0,\infty)}(\varpi_s)a_s}{1-l^{WOC}\pi^M_s}\right|+&\left|\frac{\left(l^{WOC}\right)^2{\bf 1}_{[0,\infty)}(\varpi_s)|b_s|^2}{(1-l^{WOC}\pi^M_s)^2}\right|\biggr)\biggr]<\infty,
				\end{align*}
				\item and is a subindifference strategy,
\end{enumerate}						  
then $\hat{\pi}_t\leq \pi^M_t$ $\mathbb{P}$-a.s.\ for all $t\in [0,T]$.
				\end{Prop}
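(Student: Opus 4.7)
The plan is to leverage the BSDE characterisation of $\hat\pi$ from Proposition \ref{Prop:LocalIndifferenceStrategiesWithStochCoefficients} together with the comparison Theorem \ref{Thm:ComparisonOfIndifferenceBSDESolutionUnboundedJumps}, applied to $\Upsilon^M_t := -\log(1-l^{WOC}\pi^M_t)$ and $\hat\Upsilon_t := -\log(1-l^{WOC}\hat\pi_t)$. Since $\pi\mapsto -\log(1-l^{WOC}\pi)$ is strictly increasing on $[0,1/l^{WOC})$, it suffices to prove $\hat\Upsilon_t\leq\Upsilon^M_t$ almost surely for every $t\in[0,T]$.

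First I would apply the It\^o--Tanaka formula to the convex function $g(x):=-\log(1-l^{WOC}(x\vee 0))$ composed with $\varpi^M$. Since $g$ has a kink at $0$ (with right derivative $l^{WOC}$) and $g''(x)=\frac{(l^{WOC})^2}{(1-l^{WOC}x)^2}$ on $(0,\infty)$, assumption~(1) (no active trimming at $1/l^L_{\max}$) together with $d\varpi^M_s=a_s\,ds+b_s\,d\bar W_s$ yields
\begin{equation*}
\Upsilon^M_t = g(\varpi^M_0) + \int_0^t q_s\,d\bar W_s + A_t,\qquad A_t := \int_0^t\tilde h_s\,ds + \tfrac{l^{WOC}}{2}L^0_t(\varpi^M),
\end{equation*}
where $q_s=\frac{l^{WOC}\mathbf 1_{\{\varpi^M_s>0\}}b_s}{1-l^{WOC}\pi^M_s}$ and $\tilde h_s$ gathers the standard $a$- and $b^2$-terms. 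Assumption~(2) ensures $\tilde h\in L^1(dt\otimes\mathbb P)$ and makes $\int_0^\cdot q_s\,d\bar W_s$ a true martingale.

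Assumption~(3) next forces $Z^{\pi^M}=-\Upsilon^M$ to be a submartingale, so that $\Upsilon^M$ is a continuous supermartingale and, by uniqueness of the Doob--Meyer decomposition of the semimartingale just obtained, $A$ must be non-increasing. Since $dL^0(\varpi^M)$ is singular with respect to Lebesgue measure and concentrated on $\{\varpi^M=0\}$, precisely where the indicator $\mathbf 1_{\{\varpi^M_s>0\}}$ in $\tilde h_s$ vanishes, the Lebesgue decomposition $dA = \tilde h_s\,ds + \tfrac{l^{WOC}}{2}\,dL^0_s$ splits cleanly. Both parts must be non-positive, forcing $\tilde h_s\leq 0$ a.e.\ and $L^0(\varpi^M)\equiv 0$. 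In particular, $\Upsilon^M$ actually solves the BSDE
\begin{equation*}
\Upsilon^M_t = \Upsilon^M_T + \int_t^T(-\tilde h_s)\,ds - \int_t^T q_s\,d\bar W_s
\end{equation*}
with generator $f'(s,y):=-\tilde h_s\geq 0$ (independent of $y$), zero jump component, and terminal condition $\Upsilon^M_T\geq 0$.

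On the other side, the indifference BSDE for $\hat\Upsilon$ has generator $f(s,y)=\Phi_s(\pi^M_s)-\Phi_s(\psi(y))$, where $\psi(y)=\frac{1-e^{-(y\vee 0)}}{l^{WOC}}$, and terminal $0$. Because $\psi(\Upsilon^M_s)=\pi^M_s$, we have $f(s,\Upsilon^M_s)=0\leq -\tilde h_s=f'(s,\Upsilon^M_s)$ and $0\leq\Upsilon^M_T$; Theorem \ref{Thm:ComparisonOfIndifferenceBSDESolutionUnboundedJumps} then gives $\hat\Upsilon_t\leq\Upsilon^M_t$ a.s., and inverting the monotone map yields $\hat\pi_t\leq\pi^M_t$. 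The main obstacle is the measure-theoretic argument that subindifference of $\pi^M$ forces $L^0(\varpi^M)\equiv 0$: it hinges on carefully matching the Lebesgue and singular parts of $dA$, using that the indicator in $\tilde h$ kills precisely the set where $L^0$ could charge mass. A secondary issue is verifying the stochastic Lipschitz and exponential-moment hypotheses for $f$ required by Theorem \ref{Thm:ComparisonOfIndifferenceBSDESolutionUnboundedJumps}; the Lipschitz constant of $f$ in $y$ is of order $|\lambda_s|+\sigma_s^2$ on the bounded range of $\psi$, whereas \eqref{ass:BSDEXPB} supplies exponential moments only for $\lambda_s^-+\sigma_s^2$, so a truncation/localisation or a mild strengthening of hypotheses controlling $\lambda^+$ may be needed.
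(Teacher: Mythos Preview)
Your overall strategy coincides with the paper's: apply It\^o--Tanaka to $\Upsilon^M=-\log(1-l^{WOC}\pi^M)$, use that subindifference of $\pi^M$ makes $\Upsilon^M$ a supermartingale, and then compare the resulting BSDE with the indifference BSDE along the path $\Upsilon^M$ (where the indifference generator evaluates to $\Phi_t(\pi^M_t)-\Phi_t(\pi^M_t)=0$). The divergence is in how the local-time term is handled. The paper does \emph{not} try to kill $L^0(\varpi^M)$: it keeps the measure term $-\tfrac{(l^{WOC})^2}{2}\,dL^\varpi$ and invokes a comparison theorem for generalized BSDEs with data $(f,dR)$ due to Eddahbi--Fakhouri--Ouknine, observing that the combined condition $f_2\,ds+dR_2\le f_1\,ds+dR_1$ is exactly the supermartingale inequality for $\Upsilon^M$. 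You instead argue $L^0(\varpi^M)\equiv 0$ from the Doob--Meyer decomposition, and then fall back on the paper's own Theorem~\ref{Thm:ComparisonOfIndifferenceBSDESolutionUnboundedJumps}. Your route is self-contained and yields the stronger byproduct that the local time vanishes; the paper's route is shorter but imports an external comparison result that tolerates singular finite-variation parts.

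Two refinements are worth flagging. First, your claim that $dL^0$ is singular with respect to Lebesgue measure is not automatic from the fact that $dL^0$ is carried by $\{\varpi^M=0\}$; that set can have positive Lebesgue measure. The clean way to close the argument is to Lebesgue-decompose $dL^0=\ell_s\,ds+d\mu$ with $\ell\ge 0$ and $\mu\ge 0$ singular: the singular part of $dA$ then equals $c\,d\mu\le 0$, forcing $\mu=0$; and on $\{\varpi^M=0\}$ (the support of $\ell$) your indicator annihilates $\tilde h$, so $\tilde h_s+c\ell_s=c\ell_s\le 0$ there gives $\ell=0$ as well. This is precisely the ``careful matching'' you allude to, and it does go through. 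Second, your worry about the two-sided stochastic Lipschitz constant in Theorem~\ref{Thm:ComparisonOfIndifferenceBSDESolutionUnboundedJumps} involving $\lambda^+$ rather than only $\lambda^-$ is legitimate; note, however, that the paper's route via the Eddahbi--Fakhouri--Ouknine comparison also needs regularity of the indifference generator, so this is a shared technical caveat rather than a defect of your approach.
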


				The sufficient conditions of Lemma \ref{Lem:largeMertonStrategy} and Proposition \ref{Prop:MertonSubindifference} are rather restrictive or difficult to apply in many situations. In particular, inspecting the 'almost It\^o process' from the proof of Proposition  \ref{Prop:MertonSubindifference} in Appendix \ref{proof:propaAlmostIto}, to check whether $\pi^M$ being a subindifference strategy requires to find out if
				
$$-\mathbb{E}\left[\int_{t_1}^{t_2}a_{\Upsilon,s}ds\middle|\mathcal{F}_{t_1}\right]+\frac{(l^{WOC})^2}{2}\left(\mathbb{E}\left[L^\varpi_{t_2}\middle|\mathcal{F}_{t_1}\right]-L^\varpi_{t_1}\right)\leq 0,$$
where $a_{\Upsilon,s}$ is again the integrand w.r.t.~$ds$ in \eqref{eq:piMsubdiff}. 			

On the one hand, for any particular model one thus needs to come up with additional arguments why this proposition is applicable. On the other hand, however, these results are useful in the important special case of a constant post-crash optimal strategy, that is, if assumption \eqref{assumption:stochCoeff:MertonConstant} is satisfied. In this case we can conclude:
				
				\begin{Cor}
					Under assumptions \eqref{ass:BSDE1} and \eqref{assumption:stochCoeff:MertonConstant}, $\hat{\pi}_t\leq \pi_t^M$ holds $\mathbb{P}$-a.s.~for all $t\in[0,T]$, and therefore $\hat{\pi}$ is a pre-crash optimal strategy.
				\end{Cor}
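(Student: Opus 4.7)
The plan is to reduce the claim to Theorem~\ref{Prop:OptimalityPropForLargeMerton}. Assumption \eqref{ass:BSDE1} together with Theorem~\ref{Thm:ExistenceOfIndifferenceBSDESolutionUnbounded2Jumps} and the characterisation of Proposition~\ref{Prop:LocalIndifferenceStrategiesWithStochCoefficients} provides a continuous indifference strategy $\hat{\pi}$, so once the dominance $\hat{\pi}_t \leq \pi^M_t$ almost surely for all $t\in[0,T]$ is established, the pre-crash optimality is immediate from that theorem. The entire work thus lies in the dominance inequality.

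I would split into two cases depending on whether $\pi^M$ is itself pre-crash admissible. If $l^{WOC}\pi^M > 1$, then the inequality is trivial: any pre-crash admissible strategy $\hat{\pi}$ satisfies $\hat{\pi} \leq 1/l^{WOC} < \pi^M$. This is exactly the reasoning behind Lemma~\ref{Lem:largeMertonStrategy}.

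The substantive case is $l^{WOC}\pi^M \leq 1$. Here $\pi^M$ is itself pre-crash admissible, and the plan is to apply Proposition~\ref{Prop:MertonSubindifference} with $\varpi^M \equiv \pi^M$ viewed as a degenerate It\^o process with coefficients $a\equiv 0$, $b\equiv 0$. Conditions (1) and (2) of that proposition are then trivially satisfied. For condition~(3), the key observation is that under \eqref{assumption:stochCoeff:MertonConstant} the running integral in the definition of $Z^{\pi^M}$ vanishes identically, so that
\[
Z^{\pi^M}_t = \log\bigl(1 - \pi^M l^{WOC}\bigr) \quad\text{for all } t\in[0,T],\qquad Z^{\pi^M}_{\infty} = 0.
\]
Hence $Z^{\pi^M}$ is a deterministic constant, and in particular a martingale, on $[0,T]$. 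The transition from $T$ to $\infty$ is a submartingale step because $\log(1-\pi^M l^{WOC})\leq 0$ whenever $\pi^M\geq 0$. Proposition~\ref{Prop:MertonSubindifference} then delivers $\hat{\pi} \leq \pi^M$, and Theorem~\ref{Prop:OptimalityPropForLargeMerton} closes the argument.

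The main (and essentially only) obstacle is ensuring that the degenerate It\^o process $\varpi^M \equiv \pi^M$ legitimately falls under the scope of Proposition~\ref{Prop:MertonSubindifference} and correctly handling the subindifference condition at the endpoint~$\infty$, which by Remark~\ref{rem:indiffNoJump} is the only non-trivial ingredient in the submartingale check. Everything else is routine given the earlier machinery.
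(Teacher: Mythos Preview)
Your argument is correct and matches the paper's intended (implicit) proof: the corollary is stated immediately after Lemma~\ref{Lem:largeMertonStrategy} and Proposition~\ref{Prop:MertonSubindifference} precisely because a constant $\pi^M$ is the trivial instance of those results, and you have spelled this out carefully, including the case split on whether $\pi^M$ is pre-crash admissible and the submartingale check at $\infty$. One cosmetic point: it is cleaner to put the boundary case $l^{WOC}\pi^M = 1$ into the trivial branch as well (so that $1-l^{WOC}\pi^M>0$ strictly in the branch where you invoke Proposition~\ref{Prop:MertonSubindifference}), but since $a=b=0$ this does not actually cause a problem.
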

			 These existence results will also be of use in the following sections when dealing with concrete examples and numerical investigations.			
\section{The Markovian Case}\label{sec:Markovian}

			In this section we specialise on a market model with $\sigma_{t}=\sigma(z_{t})$, $\lambda_{t}=\lambda(z_{t})$ where $z$ is a factor process whose evolution is governed by the SDE 
			\begin{equation}
				dz_{t}=\mu(z_{t})dt+\varsigma(z_{t})d\hat{W}_{t}\label{eq:stateSDE}
			\end{equation}
			and $z_{0}$ is a fixed initial value.
			We require that the functions $\mu:\mathbb{R}\rightarrow\mathbb{R}$ and $\varsigma:\mathbb{R}\rightarrow D \subseteq \mathbb{R}$ are chosen in a way to guarantee that this SDE has a unique (strong) solution. Furthermore, the functions $\sigma$, $\lambda$, $\mu$ and $\varsigma$ are all assumed to be continuous. We wish to stress that this formulation covers in particular the Bates model, respectively the Heston, c.f.~\eqref{eq:Bates}.
				In this situation, the indifference BSDE \eqref{Prop:LocalIndifferenceStrategiesWithStochCoefficients:BSDE} can be connected to a partial differential equation (PDE). 
				\subsection{From BSDEs to associated PDEs}
				Consider an arbitrary process $\Upsilon_{t}:=v(t,z_{t})$ with some function $v\in C^{1,2}$. Then by It\^o's formula
				\begin{align*}
					d\Upsilon_{t}=&\Bigg(\partial_{t}v(t,z_{t})+\mu(z_{t})\partial_{x}v(t,z_{t})+\frac{\varsigma(z_{t})^{2}}{2}\partial_{xx}v(t,z_{t})\Bigg)dt+\varsigma(z_{t})\partial_{x}v(t,z_{t})d\hat{W}_{t}.
				\end{align*}
				Let the optimal post-crash strategy $\pi^M$ be expressed by a function $\psi$ dependent on $(\lambda,\sigma)$, $\pi^M=\psi(\lambda,\sigma)$, e.g.~in the model without jumps, $\psi(\lambda,\sigma)=\frac{\lambda}{\sigma^2}\vee 0.$ In models involving jumps, expressions for $\psi(\lambda,\sigma)$ may also be obtained explicitly, in the case of the L\'evy measure $\vartheta$ equals the point measure $\delta_{l^L_{\max}}$, solving $\partial_y\Phi(y)=\lambda-\sigma^2 y-\frac{l^L_{\max}}{1-y l^L_{\max}}$, one finds the maximizer $\psi(\lambda,\sigma)=\frac{\lambda l^L_{\max}+\sigma^2-\sqrt{(l^L_{\max})^2(\lambda^2+4\sigma^2)-2l^L_{\max}\lambda\sigma^2+\sigma^4}}{2l^L_{\max}\sigma^2}$, see also Section \ref{subsec:jumpModel} for a similar computation.

				With $\pi^M$ being given by $\psi(\lambda,\sigma)$, the drift of the forward SDE equals the generator of the BSDE \eqref{Prop:LocalIndifferenceStrategiesWithStochCoefficients:BSDE}, $f(t,v(t,z_t))$, if
				\begin{align*}
					&\partial_{t}v(t,x)+\mu(x)\partial_{x}v(t,x)+\frac{\varsigma(x)^{2}}{2}\partial_{xx}v(t,x)\\
					&+\lambda(x)\psi(\lambda(x),\sigma(x))-\frac{\sigma(x)^2}{2}\psi(\lambda(x),\sigma(x))^2+\int_{[0,l^{L}_{\max}]}\log\left(1-\psi(\lambda(x),\sigma(x))l\right)\vartheta(dl)\\
					&-\lambda(x)\frac{1-e^{-(v(t,x)\vee 0)}}{l^{WOC}}+\frac{\sigma(x)^{2}}{2}\left(\frac{1-e^{-(v(t,x)\vee 0)}}{l^{WOC}}\right)^{2}\\
					&-\int_{[0,l^{L}_{\max}]}\log\left(1-\frac{1-e^{-(v(t,x))\vee 0)}}{l^{WOC}}l\right)\vartheta(dl)\\
					&=0
				\end{align*}
				holds for all $(t,x)\in[0,T]\times\mathbb{R}$ since
				\begin{align*}
					&\Phi_{t}({\pi_t^M})-\Phi_{t}\left(\frac{1-e^{-(\Upsilon_{t}\vee 0)}}{l^{WOC}}\right)\\
					&\ =(\Phi_t(\psi(\lambda(z_t),\sigma(z_t))-r_t)-\lambda(z_{t})\frac{1-e^{-(\Upsilon_{t}\vee 0)}}{l^{WOC}}+\frac{\sigma(z_{t})^{2}}{2}\left(\frac{1-e^{-(\Upsilon_{t}\vee 0)}}{l^{WOC}}\right)^{2}\\
					& \quad-\int_{[0,l^{L}_{\max}]}\log\left(1-\frac{1-e^{-(\Upsilon_t\vee 0)}}{l^{WOC}}l\right)\vartheta(dl),
				\end{align*}
				recalling that
				\begin{align*}
				\Phi_t(y)=r_t+\lambda(z_t)y-\frac{\sigma(z_t)^2}{2}y^2+\int_{[0,l^{L}_{\max}]}\log\left(1-yl\right)\vartheta(dl).
\end{align*}				 
				In this case $\Upsilon$ satisfies the indifference BSDE \eqref{Prop:LocalIndifferenceStrategiesWithStochCoefficients:BSDE}, if in addition $\Upsilon_{T}=v(T,z_{T})=0$. A sufficient condition for this is $v(T,x)=0$ for all $x\in\mathbb{R}$. Thus, the following result holds (cf.~\cite[Proposition 4.3]{ElKarouiPengQuenez1997}):
			
			\begin{Prop}\label{Prop:indifferenceExposurePDE} 
			Let $v\in C^{1,2}$ be a solution to the PDE
			\begin{equation}\label{eq:indifferencePDE}
				\begin{aligned}
					&\partial_{t}v(t,x)+\mu(x)\partial_{x}v(t,x)+\frac{\varsigma(x)^{2}}{2}\partial_{xx}v(t,x)\\
					&+(\Phi_t(\psi(\lambda(x),\sigma(x))-r_t)-\lambda(x)\frac{1-e^{-(v(t,x)\vee 0)}}{l^{WOC}}+\frac{\sigma(x)^{2}}{2}\left(\frac{1-e^{-(v(t,x)\vee 0)}}{l^{WOC}}\right)^{2}\\
					&\quad-\int_{[0,l^{L}_{\max}]}\log\left(1-\frac{1-e^{-(v(t,x)\vee 0)}}{l^{WOC}}l\right)\vartheta(dl)=0,\\
					&\quad \quad v(T,x)=0\,.
				\end{aligned}
			\end{equation}
					and suppose that $\Upsilon_{t}:=v(t,z_{t})$, $\sigma_{\Upsilon,t}:=\begin{pmatrix}\varsigma(z_{t})\partial_{x}v(t,z_{t}) & 0\end{pmatrix}$ and $U_{\Upsilon,t}=0$ are processes in $\mathcal{S}^1\times L^1(\bar{W})\times L^1(\tilde{\nu})$. Then $(\Upsilon,\sigma_{\Upsilon},U_\Upsilon)$ is a solution to the BSDE (\ref{Prop:LocalIndifferenceStrategiesWithStochCoefficients:BSDE}), with the generator now having the form
					\begin{align*}
						f(t,y)=&(\Phi_t(\psi(\lambda(z_t),\sigma(z_t))-r_t)-\lambda(z_t)\frac{1-e^{-(y\vee 0)}}{l^{WOC}}+\frac{\sigma(z_{t})^{2}}{2}\left(\frac{1-e^{-(y\vee 0)}}{l^{WOC}}\right)^{2}\\
						&-\int_{[0,l^{L}_{\max}]}\log\left(1-\frac{1-e^{-(y\vee 0))}}{l^{WOC}}l\right)\vartheta(dl).
				\end{align*}
			\end{Prop}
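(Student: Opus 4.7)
The plan is to apply It\^o's formula to $\Upsilon_t := v(t, z_t)$, use the PDE~\eqref{eq:indifferencePDE} to rewrite the drift as the negative of the BSDE generator, integrate from $t$ to $T$, and invoke the terminal condition $v(T,\cdot) = 0$. This is the standard Feynman--Kac-type correspondence between a semilinear parabolic PDE and an associated BSDE (cf.~\cite[Proposition 4.3]{ElKarouiPengQuenez1997}), tailored to our concrete generator.

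First, I would apply It\^o's formula to $v(t,z_t)$ using the SDE~\eqref{eq:stateSDE}. Since $z$ is driven only by $\hat{W}$ and has no jumps, no contribution from $\tilde{W}$ or from the Poisson random measure $\nu$ arises, yielding
\begin{align*}
d\Upsilon_t = \left[\partial_t v + \mu(z_t)\partial_x v + \tfrac{1}{2}\varsigma(z_t)^2\partial_{xx} v\right](t,z_t)\,dt + \varsigma(z_t)\partial_x v(t,z_t)\,d\hat{W}_t.
\end{align*}
In particular, the diffusion coefficient with respect to $\bar{W} = (\hat{W},\tilde{W})^T$ already has the claimed form $\sigma_{\Upsilon,t} = \bigl(\varsigma(z_t)\partial_x v(t,z_t),\, 0\bigr)$, and the integrand $U_\Upsilon$ against $\tilde{\nu}$ vanishes identically, as asserted.

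Next I would substitute the PDE~\eqref{eq:indifferencePDE} for the bracketed drift: solving the PDE for $\partial_t v + \mu\partial_x v + \tfrac{1}{2}\varsigma^2\partial_{xx} v$ and recombining the $r_t$, $\lambda_t$, $\sigma_t^2$ and jump terms via the explicit form of $\Phi_t$, one sees that this drift equals exactly $-f(t,v(t,z_t))$, with $f$ the generator stated in the proposition. Integrating $d\Upsilon_t$ from $t$ to $T$ and using $\Upsilon_T = v(T,z_T) = 0$ then produces
\begin{align*}
\Upsilon_t = \int_t^T f(s,\Upsilon_s)\,ds - \int_t^T \sigma_{\Upsilon,s}\,d\bar{W}_s - \int_{(t,T]\times[0,l^L_{\max}]} U_{\Upsilon,s}(l)\,\tilde{\nu}(ds,dl),
\end{align*}
which is precisely the BSDE~\eqref{Prop:LocalIndifferenceStrategiesWithStochCoefficients:BSDE}. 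The hypothesis $(\Upsilon,\sigma_\Upsilon,U_\Upsilon)\in\mathcal{S}^1\times L^1(\bar{W})\times L^1(\tilde{\nu})$ ensures that the stochastic integrals are well-defined (true) martingales, so no localisation argument is required. There is essentially no obstacle here; the only point worth flagging is the non-smoothness of $y\mapsto y\vee 0$ inside the generator, but this concerns only the functional form of $f$ (which enters the argument pointwise) and does not interfere with applying It\^o's formula to the $C^{1,2}$ function $v$ itself.
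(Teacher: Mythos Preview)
Your proposal is correct and follows essentially the same approach as the paper: apply It\^o's formula to $v(t,z_t)$, use the PDE to identify the drift with $-f(t,\Upsilon_t)$, integrate from $t$ to $T$, and invoke the terminal condition $v(T,\cdot)=0$; the paper even cites the same reference \cite[Proposition 4.3]{ElKarouiPengQuenez1997}. Your added remarks on why $U_\Upsilon\equiv 0$ and the second component of $\sigma_\Upsilon$ vanishes, and on the harmlessness of the non-smoothness of $y\mapsto y\vee 0$, are valid and in the same spirit as the paper's discussion.
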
 
			Note that the second component of $\sigma_\Upsilon$ is zero as well as $U_\Upsilon$. This is due to the fact that all relevant processes $\sigma(z), \lambda(z), \pi^M, z$ are measurable with respect to the filtration generated by $\hat{W}$ alone. So the generator $f$ is also measurable w.r.t.~the filtration generated by $\hat{W}$, as is $\Upsilon_T=0$ and thus for solvability of the BSDE no integrands w.r.t.~$\tilde{W}$ and $\tilde{\nu}$ are needed.
			
			The last proposition associates the indifference BSDE and thus ultimately the model's unique indifference strategy and thus its worst-case optimal solution with a PDE. 
			
			While instructive, Proposition \ref{Prop:indifferenceExposurePDE} is usually of little practical value, because PDE \eqref{eq:indifferencePDE} can in most cases only be solved numerically and then one needs an a-priori argument why a classical solution to the PDE exists and why the numerical scheme employed converges to such a classical solution. In the following we therefore consider cases, in which a version of Proposition \ref{Prop:indifferenceExposurePDE} still holds true, if $v$ is only a viscosity solution to \eqref{eq:indifferencePDE}.
			
			The convergence of the numerical scheme itself holds due to \cite{BarlesSouganidis1991}.
			To adapt our notation to Markovian BSDEs, we highlight the dependence on the forward process $z$ in $f$ by a variable $x$. So, we consider another generator $\tilde{f}$ in the variables $(t,x,y)$ such that $f(\omega, t,y)=\tilde{f}(t,z_t(\omega),y)$. Abusing notation slightly, we identify $\tilde{f}=f$. In our setting, it is given by the function
			\begin{align*}
				&f(t,x,y)=\\
				&\lambda(x)\psi(\lambda(x),\sigma(x))-\frac{\sigma(x)^2}{2}\psi(\lambda(x),\sigma(x))^2+\int_{[0,l^{L}_{\max}]}\log\left(1-\psi(\lambda(x),\sigma(x))l\right)\vartheta(dl)\\
				&-\lambda(x)\frac{1-e^{-(y\vee 0)}}{l^{WOC}}+\frac{1}{2}\left(\sigma(x)\frac{1-e^{-(y\vee 0)}}{l^{WOC}}\right)^{2}-\int_{[0,l^{L}_{\max}]}\log\left(1-\frac{1-e^{-(y\vee 0)}}{l^{WOC}}l\right)\vartheta(dl),
			\end{align*}
			and in the case without jumps,
			$$f(t,x,y)=\frac{\lambda(x)\lambda(x)^+}{\sigma(x)^2}-\frac{(\lambda(x)^+)^2}{2\sigma(x)^2}-\lambda(x)\frac{1-e^{-(y\vee 0)}}{l^{WOC}}+\frac{1}{2}\left(\sigma(x)\frac{1-e^{-(y\vee 0)}}{l^{WOC}}\right)^{2}.$$
			 Note that in contrast to \cite{ElKarouiPengQuenez1997,bbp}, the driver $f$ of our BSDE is not Lipschitz in $y$, but satisfies the prerequisites of Theorems \ref{Thm:UniquenessOfIndifferenceBSDESolutionUnboundedJumps} and  \ref{Thm:ComparisonOfIndifferenceBSDESolutionUnboundedJumps}.
			The following theorem ensures existence and uniqueness of solutions to the corresponding PDE in our setting (the proof can be found in Appendix~\ref{App:Markov}):
			
			
				\begin{Thm}\label{Thm:existenceOfViscositySolutionJumps}
					Let $z$ be the solution of the forward SDE \eqref{eq:stateSDE} on $[t,T]$, starting in $x$ at $t$ that satisfies the following conditions: For all $p\geq 2$ there exists a constant $M_p$ such that 
					\begin{equation}\label{ass:1}
						\mathbb{E}\left[\sup_{t\leq r\leq s}|z_r^t(x)-x|^p\right]\leq M_p(s-t)(1+|x|^p)
					\end{equation}
					and
					\begin{equation}\label{ass:2}
						\mathbb{E}\left[\sup_{t\leq r\leq s}|z_r^t(x)-z_r^t(x')-(x-x')|^p\right]\leq M_p(s-t)(|x-x'|^p+|\sqrt{x}-\sqrt{x'}|^p)
					\end{equation}
					We furthermore assume that for some $C,p\in (0,\infty)$,
					$$|\lambda(x)|+|\sigma(x)^2|\leq C(1+|x|^p),$$
					and there is an $\varepsilon>0$ such that
					\begin{equation}\label{ass:3}
						\mathbb{E}\left[\int_0^T\exp\left(\varepsilon\left(\lambda(z_t)^-+\sigma(z_t)^2\right)\right)dt\right]<\infty
					\end{equation}
					(i.e.~ \eqref{ass:BSDEXPB} holds for $\lambda\equiv\lambda(z)$, $\sigma^2\equiv\sigma^2(z)$).

					Then, there exists a viscosity solution $v$ of \eqref{eq:indifferencePDE}.
					If moreover $\lambda$ and $\sigma$ satisfy for all $R>0$
					$$|\lambda(x)-\lambda(x')|+|\sigma(x)^2-\sigma(x')^2|\leq m_R(|x-x'|),$$
					for all $|x|,|x'|\leq R $ and for a continuous function $m_R\colon [0,\infty)\to[0,\infty)$ with $m_R(0)=0$, the solution is unique in the class of functions $u$ such that $$\lim_{|x|\to\infty}|u(t,x)|e^{-\tilde{A}(\log(|x|))^2}=0,$$
					uniformly in $t$ for some $\tilde{A}>0$. 
				\end{Thm}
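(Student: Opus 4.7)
The plan is to construct the candidate solution via the nonlinear Feynman--Kac representation and then adapt the Pardoux--Peng framework to our non-Lipschitz setting. For each initial condition $(t,x)\in[0,T]\times\mathbb{R}$, let $(\Upsilon^{t,x},\sigma_\Upsilon^{t,x},U_\Upsilon^{t,x})$ denote the unique solution of the BSDE \eqref{Prop:LocalIndifferenceStrategiesWithStochCoefficients:BSDE} associated with the forward process $z^{t,x}$ starting from $x$ at time $t$; such a solution exists and is unique by Theorem \ref{Thm:UniquenessOfIndifferenceBSDESolutionUnboundedJumps}, since the exponential integrability \eqref{ass:3} and the polynomial growth hypothesis on $\lambda,\sigma^2$ imply $(\mathrm{Bp})$ for every $p>1$. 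Set $v(t,x):=\Upsilon_t^{t,x}$; the quantity is deterministic by the usual $0$-$1$ law applied to the Markov property of $z^{t,x}$, so $v$ is well-defined as a function.

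For the existence part I would follow a truncation/approximation route. Introduce Lipschitz-in-$y$ generators $f^n$ by capping $\lambda(x)^-$ and $\sigma(x)^2$ at level $n$ while keeping the remaining structure of $f$ intact, and let $v^n$ be the value function of the corresponding BSDE. Each $f^n$ is globally Lipschitz in $y$ with constant $\leq C n$, so the classical Pardoux--Peng/Barles--Buckdahn--Pardoux theory (\cite{bbp}, \cite{ElKarouiPengQuenez1997}) provides that $v^n$ is a continuous viscosity solution of the truncated PDE, with continuity flowing from the forward estimates \eqref{ass:1}--\eqref{ass:2} and the Lipschitz stability of the BSDE. Using the $L^p$-stability estimates underlying Theorem \ref{Thm:UniquenessOfIndifferenceBSDESolutionUnboundedJumps} — which crucially exploit \eqref{ass:3} through a Girsanov-type change of measure or an exponential Gronwall argument — one obtains $v^n\to v$ locally uniformly in $(t,x)$, and in particular that $v$ is continuous. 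Stability of viscosity (sub-/super-)solutions under locally uniform convergence then yields that $v$ itself is a viscosity solution of \eqref{eq:indifferencePDE}.

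For the uniqueness statement I would establish a comparison principle for viscosity sub- and supersolutions within the growth class $\{u:\lim_{|x|\to\infty}|u(t,x)|e^{-\tilde A(\log|x|)^2}=0\}$ via the Crandall--Ishii--Lions doubling-of-variables technique. The local modulus of continuity $m_R$ is used exactly as in the user's guide to control the error term produced by the second-order Taylor expansion of the doubled test function; the growth restriction is dealt with by the weight $\varphi(x)=\exp(A(\log(|x|\vee e))^2)$, whose logarithmic second derivative vanishes at infinity fast enough to close the Bernstein--Barles--Biton--Rouy argument and to guarantee that the penalized difference attains an interior maximum. The $y$-dependence of $f$ is handled via the one-sided Lipschitz inequality established in Subsection \ref{rem:BSDEINdiff} combined with the exponential integrability \eqref{ass:3}, which plays the role of the Gronwall weight required to promote the pointwise estimate into a global one.

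The main obstacle is that $f$ is not globally Lipschitz in $y$: its Lipschitz constant is the stochastic, only exponentially integrable process $K_t\sim \lambda_t^-+\sigma_t^2$. This forbids a direct appeal to the Pardoux--Peng or Crandall--Ishii off-the-shelf statements, so both the passage to the limit in the approximation and the comparison step must be carried out with weights calibrated to \eqref{ass:3}. The existence half is the more routine of the two, since the BSDE stability tools have essentially been developed in Section \ref{ssec:TreatBSDEs}; the delicate part is the viscosity comparison argument on the unbounded domain, where matching the growth exponent $\tilde A$ in the admissibility class with the available exponential moments of $\lambda^-+\sigma^2$ is what makes the theorem hold under the stated — and no weaker — hypotheses.
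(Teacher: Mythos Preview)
Your approach is a legitimate alternative but differs from the paper's. You construct truncated generators $f^n$, obtain viscosity solutions $v^n$ via the classical Lipschitz theory, and pass to the limit using stability of viscosity solutions under locally uniform convergence. The paper instead works directly with $v(t,x):=\Upsilon^t_t(x)$ and \emph{re-runs} the proof of \cite[Theorem~3.4]{bbp}, isolating the four places where the Lipschitz hypothesis is used---BSDE uniqueness, comparison, continuity of $v$, and an auxiliary $\mathcal{O}(h)$ bound on a short-horizon BSDE---and supplies each ingredient separately (the first two from Theorems~\ref{Thm:UniquenessOfIndifferenceBSDESolutionUnboundedJumps} and~\ref{Thm:ComparisonOfIndifferenceBSDESolutionUnboundedJumps}, the latter two by direct estimates using the polynomial growth of $f$ and \eqref{ass:1}--\eqref{ass:2}). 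Your route is more modular and avoids opening up \cite{bbp}; the paper's buys a shorter argument by leveraging \cite{bbp} wholesale once the ingredients are checked. In your version the step that needs care is the \emph{locally uniform} convergence $v^n\to v$: pointwise convergence follows from the monotone BSDE approximation, but uniformity on compacts requires an equicontinuity argument (or a Dini-type result), which you should make explicit.

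There is a conceptual slip in your uniqueness discussion. The exponential integrability \eqref{ass:3} is a condition on the \emph{process} $\lambda(z_t)^-+\sigma(z_t)^2$ and is what makes the BSDE machinery of Section~\ref{ssec:TreatBSDEs} work; it plays no direct role in the PDE comparison principle, which is a purely deterministic statement. There the relevant control on the $y$-dependence is that the Lipschitz constant of $f(t,x,\cdot)$ is $\lesssim \lambda(x)^-+\sigma(x)^2\leq C(1+|x|^p)$, i.e.\ polynomial in $x$, and this polynomial growth is exactly what the weight $\exp(\tilde A(\log|x|)^2)$ is designed to absorb in the doubling-of-variables argument of \cite{bbp}. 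So your comparison argument should invoke the polynomial growth hypothesis and the local modulus $m_R$, not \eqref{ass:3}.
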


\subsection{Concrete Examples}
As a first example, one may consider $\mu$ and $\varsigma$ to be globally Lipschitz continuous. In this case, the SDE \eqref{eq:stateSDE} has a unique, strong solution. Furthermore, we have that the Assumptions \eqref{ass:1} and \eqref{ass:2} are satisfied (see e.g.~\cite[Proposition 1.1]{bbp},\cite{fujiwara1985}).
Such a choice for $\mu$ and $\varsigma$ is made e.g.~in the Kim-Omberg model, cf.~\cite{KimOmberg1996},

\begin{align*}
dz_t = \kappa(\theta-z_t)dt + \tilde{\varsigma} d\hat{W}_t,
\end{align*}

with constants $\theta, \kappa, \tilde{\varsigma}$. Furthermore, we have $\lambda_t=z_t$ and we have the constant volatility $\sigma_t=\tilde{\sigma}\in (0,\infty)$. Assumption \eqref{ass:3} is satisfied since $\exp(z_t)$ is log-normally distributed.
			
The second example is the Heston/Bates model, cf.~\cite{Heston1993,Bates1996}, where the volatility is given by the CIR process, i.e. 
			\begin{equation}\label{CIR}
				dz_t = \kappa(\theta-z_t)dt + \tilde{\varsigma}\sqrt{z_t} d\hat{W}_t
			\end{equation}
with positive constants $\theta, \kappa, \tilde{\varsigma}$. Here, we have $\sigma_t = \sqrt{z_t}$.
Note that the diffusion coefficient of \eqref{CIR} is not globally Lipschitz continuous and therefore standard results do not apply. Since the square root is of linear growth, the CIR process has a unique strong solution by the Yamada-Watanabe condition (see e.g.~\cite[Theorem IV 3.2]{ikedawatanabe}). Moreover, the process is strictly positive if $\frac{2\kappa\theta}{\tilde{\varsigma^2}}>1$ by the Feller condition. Furthermore, the following Proposition ensures that Assumptions \eqref{ass:1} and \eqref{ass:2} of Theorem~\ref{Thm:existenceOfViscositySolutionJumps} are fulfilled (its proof can be found in Appendix~\ref{App:Markov}).

				\begin{Prop}\label{Lem: CIR-cont}
					Let $\mu(x)=\kappa(\theta-x)$ and $\varsigma(x)=\tilde{\varsigma}\sqrt{x}$ for some $\kappa,\theta,\tilde{\varsigma}>0$. Furthermore, let $\frac{2\kappa\theta}{\tilde{\varsigma}^2}>\frac{1}{2}$. For all $p\geq 2$ there is a constant $M_p$ such that
					\begin{equation}\label{eq:Cir1}
						\mathbb{E}\left[\sup_{t\leq r\leq s}|z_r^t(x)-x|^p\right]\leq M_p(s-t)(1+|x|^p)
					\end{equation}
					and
					\begin{equation}\label{eq:Cir2}
						\mathbb{E}\left[\sup_{t\leq r\leq s}|z_r^t(x)-z_r^t(x')-(x-x')|^p\right]\leq M_p(s-t)(|x-x'|^p+|\sqrt{x}-\sqrt{x'}|^p)
					\end{equation}
					Moreover, in the case of $p=1$ and all $\kappa,\theta,\tilde{\varsigma}>0$, we have that
					\begin{equation}\label{L1_cir}
						\mathbb{E}\left[|z_t(x)-z_t(x')|\right]= e^{-\kappa t}|x-x'|
					\end{equation}
				\end{Prop}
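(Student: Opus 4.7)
The proof splits naturally along the three assertions, which I would handle in the order \eqref{L1_cir}, \eqref{eq:Cir1}, \eqref{eq:Cir2}, in ascending order of difficulty.

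For \eqref{L1_cir}, I would first invoke the comparison principle for one-dimensional SDEs---applicable since the CIR SDE satisfies Yamada--Watanabe pathwise uniqueness---to reduce to the case $x\geq x'$, in which $z_t(x)\geq z_t(x')$ a.s.\ pathwise. Setting $g(t):=\mathbb{E}[z_t(x)-z_t(x')]\geq 0$ and taking expectations in the integral form of the SDE (the stochastic integral is a genuine martingale because of the uniform $L^2$-bounds of the CIR process on $[0,T]$) yields the linear ODE $g'(t)=-\kappa g(t)$ with $g(0)=x-x'$, whose unique solution is $(x-x')e^{-\kappa t}$.

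For \eqref{eq:Cir1}, the plan is to apply Minkowski's inequality to the drift integral and the Burkholder--Davis--Gundy inequality to the diffusion integral of $z_r^t(x)-x$, then invoke the classical positive-moment bound $\sup_{u\in[0,T]}\mathbb{E}[|z_u^t(x)|^p]\leq C_p(1+|x|^p)$, which holds for any $p\geq 1$ without restrictions on the CIR parameters. Since $(s-t)^{p-1}$ and $(s-t)^{p/2-1}$ are uniformly bounded on $[0,T]$, the right-hand side collapses to the linear-in-$(s-t)$ bound claimed.

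For \eqref{eq:Cir2}, I would set $\Delta_r:=z_r^t(x)-z_r^t(x')$ and $\eta_r:=\Delta_r-(x-x')$. Subtracting the two SDEs and substituting $\Delta_u=\eta_u+(x-x')$ gives
\begin{align*}
\eta_r = -\kappa(x-x')(r-t) - \kappa\int_t^r \eta_u\,du + \tilde\varsigma\int_t^r \bigl(\sqrt{z_u^t(x)}-\sqrt{z_u^t(x')}\bigr)\,d\hat W_u.
\end{align*}
Applying $\mathbb{E}[\sup_{t\leq r\leq\cdot}|\cdot|^p]$ together with Minkowski and BDG, the first two terms contribute $C_p(s-t)^p|x-x'|^p$ and a Gr\"onwall-ready term $C_p\int_t^s \mathbb{E}[\sup_{t\leq r\leq u}|\eta_r|^p]\,du$. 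The crux is controlling the stochastic-integral piece, which amounts to bounding $\mathbb{E}\bigl[\bigl(\int_t^s(\sqrt{z_u^t(x)}-\sqrt{z_u^t(x')})^2\,du\bigr)^{p/2}\bigr]$ by a combination of $|x-x'|^p$ and $|\sqrt{x}-\sqrt{x'}|^p$ plus absorbable terms.

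This final estimate is the main obstacle, and it is where the relaxed Feller condition $\tfrac{2\kappa\theta}{\tilde\varsigma^2}>\tfrac{1}{2}$ becomes decisive. My strategy would be to pass to the auxiliary process $Y_r:=\sqrt{z_r^t(x)}$: on $\{z>0\}$, It\^o's formula gives
\begin{align*}
dY_r = \left(\frac{4\kappa\theta-\tilde\varsigma^2}{8Y_r}-\frac{\kappa Y_r}{2}\right)dr + \frac{\tilde\varsigma}{2}\,d\hat W_r,
\end{align*}
whose singular drift has the correct stabilising (repulsive at $0$) sign precisely under our hypothesis $4\kappa\theta>\tilde\varsigma^2$. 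One then estimates $\mathbb{E}[|\sqrt{z_u^t(x)}-\sqrt{z_u^t(x')}|^p]$ via this transformed SDE, obtaining a leading term proportional to $|\sqrt{x}-\sqrt{x'}|^p$ (the initial $L^p$-distance of square roots is propagated forward) together with remainders controlled by $\mathbb{E}[|\Delta_u|^p]$, which is already bounded iteratively through \eqref{eq:Cir1} and Gr\"onwall. Making this step rigorous requires careful negative-moment estimates of CIR that demand exactly the weakened condition $\tfrac{2\kappa\theta}{\tilde\varsigma^2}>\tfrac{1}{2}$, as opposed to the standard Feller condition $\tfrac{2\kappa\theta}{\tilde\varsigma^2}>1$ used in \cite{CHJ}. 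Feeding the resulting bound back into the BDG estimate and closing via Gr\"onwall's inequality produces the claimed inequality \eqref{eq:Cir2}.
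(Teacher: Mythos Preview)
Your treatment of \eqref{L1_cir} and \eqref{eq:Cir1} matches the paper's proof essentially line for line. For \eqref{eq:Cir2} the overall architecture---split into drift and martingale parts, BDG on the latter, pass to the Lamperti-transformed process $Y_r=\sqrt{z_r^t(x)}$---is also what the paper does. But you miss the key simplification that makes the argument work cleanly under the weak condition $2\kappa\theta/\tilde\varsigma^2>\tfrac12$.

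After the Lamperti transform the diffusion coefficient is the constant $\tilde\varsigma/2$, so the difference $\sqrt{z_u^t(x)}-\sqrt{z_u^t(x')}$ carries \emph{no} stochastic integral; it is a pure finite-variation process driven by the drift $g(y)=(\alpha/y+\beta y)\mathbf{1}_{\{y>0\}}$ with $\alpha=(4\kappa\theta-\tilde\varsigma^2)/8$ and $\beta=-\kappa/2$. Applying It\^o to its square and using the one-sided inequality $(y-y')(g(y)-g(y'))\le\beta(y-y')^2+\alpha(\mathbf{1}_{\{y=0\}}+\mathbf{1}_{\{y'=0\}})$ together with the fact that CIR spends Lebesgue-zero time at the origin, the paper obtains the \emph{pathwise} bound $(\sqrt{z_u^t(x)}-\sqrt{z_u^t(x')})^2\le(\sqrt x-\sqrt{x'})^2$ almost surely. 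The hypothesis $2\kappa\theta/\tilde\varsigma^2>\tfrac12$ enters only to ensure $\alpha>0$, so that the indicator correction has the right sign. This pathwise estimate is then inserted directly into the BDG bound for the martingale part, and a Young--Gr\"onwall argument closes the proof without any appeal to negative moments of CIR. Your suggested route through ``careful negative-moment estimates'' and remainder terms in $\mathbb{E}[|\Delta_u|^p]$ is therefore unnecessary and considerably less sharp; the cancellation of the Brownian increment in the Lamperti picture and the resulting deterministic contraction are the decisive observations.
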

			\begin{Rem}
				Note that Proposition \ref{Lem: CIR-cont} extends the results of Section 4.10.1 of \cite{CHJ}, where a local Lipschitz continuity in the initial value was shown under the condition $\frac{2\kappa\theta}{\tilde{\varsigma}^2}>1$. Similar results to Proposition \ref{Lem: CIR-cont} about the regularity of the CIR process in the initial value can be found also in \cite[Section 4]{Hefter2018}.
			\end{Rem}
Next, we need to check whether \eqref{ass:3} holds. For the CIR process, we have the following result (see e.g.~\cite{AltNeu} and \cite{HK}).
\begin{Prop}
	Let $\frac{2\kappa\theta}{\tilde{\varsigma}^2}\ge 1$. Then,
	\begin{align*}
		\mathbb{E}\left[\exp\left(\varepsilon z_t\right)\right]<\infty \quad \text{iff } \varepsilon<\frac{2\kappa}{\tilde{\varsigma}^2(1-\exp(-\kappa t))}.
	\end{align*}
\end{Prop}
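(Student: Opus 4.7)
The plan is to compute the Laplace/moment generating function of $z_t$ explicitly and then read off the exact range of $\varepsilon$ for which it is finite. The key classical input is the distributional description of the CIR transition: conditional on $z_0 = z$, one has
\begin{equation*}
z_t \stackrel{d}{=} c(t)\,Y_t, \qquad c(t) := \frac{\tilde{\varsigma}^{2}\bigl(1-e^{-\kappa t}\bigr)}{4\kappa},
\end{equation*}
where $Y_t$ is non-central chi-squared with $d = 4\kappa\theta/\tilde{\varsigma}^{2}$ degrees of freedom and non-centrality parameter $\eta_t = 4\kappa z e^{-\kappa t}/\bigl(\tilde{\varsigma}^{2}(1-e^{-\kappa t})\bigr)$. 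This representation, together with the Feller condition $2\kappa\theta/\tilde{\varsigma}^2 \geq 1$ ensuring strict positivity of the process, is the standard departure point (see e.g.\ Cox-Ingersoll-Ross, or treatments of affine processes).

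Next, I would plug in the closed-form Laplace transform of a non-central chi-squared,
\begin{equation*}
\mathbb{E}\bigl[e^{uY_t}\bigr] = \frac{1}{(1-2u)^{d/2}}\exp\!\left(\frac{\eta_t\,u}{1-2u}\right) \quad\text{for } u<\tfrac{1}{2},
\end{equation*}
and $+\infty$ for $u \geq 1/2$ (the integral diverges at the Gaussian tail threshold). Setting $u = \varepsilon\, c(t)$ immediately yields that $\mathbb{E}[\exp(\varepsilon z_t)]$ is finite precisely when $\varepsilon c(t) < 1/2$, i.e.\ when
\begin{equation*}
\varepsilon < \frac{2\kappa}{\tilde{\varsigma}^{2}\bigl(1-e^{-\kappa t}\bigr)},
\end{equation*}
and divergence at the boundary (and beyond) follows from the explicit formula, which blows up as $u \uparrow 1/2$.

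An alternative route, avoiding the distributional identity, is to exploit the affine structure of the CIR process directly. Making the ansatz $\mathbb{E}[\exp(\varepsilon z_t)\mid z_0 = z] = \exp(A(t)+B(t)z)$, the Feynman-Kac equation reduces to the Riccati system
\begin{equation*}
\dot{B}(t) = -\kappa B(t) + \tfrac{1}{2}\tilde{\varsigma}^{2}B(t)^{2}, \qquad \dot{A}(t) = \kappa\theta B(t),
\end{equation*}
with $B(0) = \varepsilon$, $A(0)=0$. The $B$-equation has an explicit solution whose maximal existence interval before blow-up gives precisely the same threshold. The main (very mild) obstacle is to make rigorous the passage from the analytic formula valid on $\{u < 1/2\}$ to the divergence statement for $u \geq 1/2$; this is handled by monotone convergence on a sequence $u_n \uparrow 1/2$, which forces the expectation to $+\infty$. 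Given that the result is already referenced to the literature via \cite{AltNeu,HK}, I would simply invoke the computation and emphasize the resulting explicit threshold.
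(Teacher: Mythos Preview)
The paper does not supply its own proof of this proposition; it simply cites \cite{AltNeu} and \cite{HK} and states the result. Your argument via the non-central chi-squared representation of the CIR transition law (equivalently, via the affine Riccati ODE) is correct and is exactly the standard derivation one finds in those references, so there is nothing to compare against here beyond noting that you have reproduced the classical computation that the paper chose to outsource.
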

Without the Feller condition, a small calculation using \cite[Proposition 3.2]{CozReis2016} (carried out in Appendix \ref{App:Markov}) shows the more general
\begin{Prop}\label{prop:CIR-exp-integr}
	Let $\varepsilon<\frac{2\kappa}{\tilde{\varsigma}^2}$. Then, $\sup_{t\in [0,T]}\mathbb{E}\left[\exp\left(\varepsilon z_t\right)\right]<\infty.$
\end{Prop}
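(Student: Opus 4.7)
The plan is to use the explicit Laplace transform of the CIR transition law provided by \cite[Proposition 3.2]{CozReis2016}, which does not require the Feller condition. For the solution of \eqref{CIR} with initial value $z_0$, this yields
\begin{equation*}
\mathbb{E}[\exp(u z_t)] = (1 - 2u A(t))^{-\frac{2\kappa\theta}{\tilde{\varsigma}^2}}\exp\!\left(\frac{u z_0 e^{-\kappa t}}{1 - 2u A(t)}\right),\qquad A(t) := \frac{\tilde{\varsigma}^2(1-e^{-\kappa t})}{4\kappa},
\end{equation*}
finite precisely when $u < 1/(2A(t)) = 2\kappa/(\tilde{\varsigma}^2(1-e^{-\kappa t}))$. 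As $t$ varies over $[0,\infty)$, this admissibility threshold decreases from $+\infty$ (at $t=0$) to its infimum $2\kappa/\tilde{\varsigma}^2$ (as $t\to\infty$), which is exactly the quantity appearing in the hypothesis.

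Substituting $u = \varepsilon$ and using $1 - e^{-\kappa t} \leq 1$, the assumption $\varepsilon < 2\kappa/\tilde{\varsigma}^2$ yields the \emph{uniform} estimate
\begin{equation*}
2\varepsilon A(t) = \frac{\varepsilon\tilde{\varsigma}^2(1-e^{-\kappa t})}{2\kappa} \leq \frac{\varepsilon\tilde{\varsigma}^2}{2\kappa} =: \eta < 1 \qquad \text{for all } t\geq 0,
\end{equation*}
so the MGF formula is valid on all of $[0,\infty)$ and each of its two factors is uniformly bounded. Inserting the bound $1-2\varepsilon A(t) \geq 1-\eta$ into both the power factor and the exponential gives
\begin{equation*}
\sup_{t\in[0,T]}\mathbb{E}[\exp(\varepsilon z_t)] \leq (1-\eta)^{-\frac{2\kappa\theta}{\tilde{\varsigma}^2}}\exp\!\left(\frac{\varepsilon z_0}{1-\eta}\right) < \infty,
\end{equation*}
which is the assertion. (Note that the supremum is actually finite on $[0,\infty)$, not just on $[0,T]$.)

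The only point requiring care is to confirm that the cited Proposition 3.2 in \cite{CozReis2016} indeed delivers an estimate of this form in the sub-Feller regime, where $z$ may touch zero. This is not a serious obstacle: the classical non-central chi-squared transition density of the CIR process, from which the above Laplace transform is derived, is valid for all parameter choices $\kappa,\theta,\tilde{\varsigma}>0$, and the cited reference precisely covers this case. Once the formula is in hand, the remainder of the argument is a one-line uniform bound on an explicit expression.
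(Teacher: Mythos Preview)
Your argument is correct, but it is not the paper's argument. The paper does \emph{not} use the explicit marginal Laplace transform of $z_t$. Instead, it integrates the SDE \eqref{CIR} to write
\[
\exp(\varepsilon z_t)=\exp\!\bigl(\varepsilon x+\varepsilon\kappa\theta t\bigr)\cdot
\exp\!\Bigl(-\varepsilon\kappa\!\int_0^t z_s\,ds+\varepsilon\tilde{\varsigma}\!\int_0^t\!\sqrt{z_s}\,d\hat W_s\Bigr),
\]
and then invokes \cite[Proposition~3.2]{CozReis2016} to bound the expectation of the second (path-functional) factor uniformly in $t\in[0,T]$; the criterion there reduces to $-\kappa\varepsilon+\tfrac{\varepsilon^2\tilde{\varsigma}^2}{2}<0$, i.e.\ exactly $\varepsilon<2\kappa/\tilde{\varsigma}^2$. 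So in the paper the cited proposition concerns exponential integrability of the combination $\int_0^t z_s\,ds$ and $\int_0^t\sqrt{z_s}\,d\hat W_s$, not the marginal MGF.

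Your route via the non-central $\chi^2$ transition law is more elementary, gives an explicit closed-form bound, and even shows finiteness of the supremum on all of $[0,\infty)$ rather than just $[0,T]$. The paper's route, on the other hand, plugs directly into a ready-made result and avoids rederiving the transition MGF. The one inaccuracy in your write-up is the attribution: the formula you use is the classical Cox--Ingersoll--Ross/Feller MGF and does not rely on \cite{CozReis2016}; that reference's Proposition~3.2 is (as the paper's usage indicates) a path-functional exponential-moment result, not the marginal Laplace transform. This does not affect the validity of your proof, only the citation.
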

Since we can choose $\varepsilon>0$ arbitrarily small and $p=1$ in the case of the Heston Model, we conclude that \eqref{ass:3} is fulfilled and we can apply Theorem \ref{Thm:existenceOfViscositySolutionJumps} for the parameter range $\frac{2\kappa\theta}{\tilde{\varsigma}^2}\ge \frac{1}{2}$.

\subsection{Modeling the Jumps}\label{subsec:jumpModel}
To model jump intensities we choose for example the measure $\vartheta_1(dl)=\frac{1}{l}dl\big|_{[0,l^{L}_{\max}]}$ (infinite activity) and $\vartheta_2(dl)=\delta_q$, $q\in [0,l^{L}_{\max}]$ (possibility of smaller crashes of constant size).  To obtain constant optimal post-crash strategies in the two cases, $\pi^{M,1}=\alpha_1, \pi^{M,2}=\alpha_2$, both in $\big(0,\frac{1}{l^{L}_{\max}}\big)$, we want to find for a given Heston volatility $t\mapsto\tilde{\varsigma}\sqrt{z_t}$, the $\psi^1$-(resp. $\psi^2$-)appropriate market prices of risk $\lambda^1$ and $\lambda^2$ in the sense of Definition~\ref{def:appropriate}. To that end, by Proposition \ref{Prob:WC:PreCrashWithStochCoefficients} (and the argumentation for differentiability in the proof), we may obtain  $\alpha^i=\operatorname{argmax}_{[0,l^{L}_{\max}]}\Phi^i$ for $i=1,2$, by solving $\partial_y\Phi^i(\alpha_i)=0$. This means that

$$0=\lambda^i-(\sigma^{i})^2\alpha_i-\int_{[0,l^{L}_{\max}]}\frac{l}{1-\alpha_i l}\vartheta_i(dl).$$ 
Hence, we get
$$\lambda^i=(\sigma^{i})^2\alpha_i+\int_{[0,l^{L}_{\max}]}\frac{l}{1-\alpha_i l}\vartheta_i(dl),$$
which corresponds to a linear price in the volatility plus an additional safety loading for the jump term.

Computing the integrals for the concrete $\vartheta_i$, we see that our $\lambda^i$ need to be
\begin{align*}
\lambda^1(z)&=\sigma^2(z)\alpha_1-\frac{\log(1-\alpha_1 l^{L}_{\max})}{\alpha_1}=z\alpha_1-\frac{\log(1-\alpha_1 l^{L}_{\max})}{\alpha_1}\\
\lambda^2(z)&=\sigma^2(z)\alpha_2+\frac{q}{1-\alpha_2q}=z\alpha_2+\frac{q}{1-\alpha_2q}.
\end{align*}

		\section{Numerical Experiments}\label{sec:numerics}
		
\subsection{Numerics for the Bates and Heston model}	
The first examples we will show here, calculated using methods from the previous section, are variations of the Bates and Heston model with different activities of jumps. All models rely on the same samples of a CIR process, computed in 1000 time steps using distributional properties (exact simulation).
\begin{enumerate}[(a)]
\item\label{it:numa} Infinite activity jumps: $\vartheta_1=\frac{1}{l}dl$. Here our coefficients summarize as follows: The forward equation for the CIR-process modeling the volatility is
\begin{align*}
dz_t=\kappa(\theta-z_t)dt+\tilde{\varsigma}\sqrt{z_t}d\hat{W}_t,\quad z_0=\theta,\quad t\in [0,T],
\end{align*}
with values $\kappa=3.99, \theta=0.014, \tilde\varsigma=0.27, T=5$, corresponding to a Feller index of $1.5325$, so our requirements from Section \ref{sec:Markovian} are fulfilled. The parameters for the CIR process are taken from \cite[Table 6]{BroadieKaya}. The further coefficients for the model are

\begin{align*}
&\pi^{M,1}=\alpha_1=2.5,\quad l^{WOC}=0.5,\quad l^L_{\max}=0.2,\\
&\sigma^1(z)=\sqrt{z},\\
&\lambda^1(z)=z\alpha_1-\frac{\log(1-\alpha_1 l^{L}_{\max})}{\alpha_1}\\
&
\end{align*}
Here the excess returns $\lambda^1$ is the $\psi^1$-appropriate market price of risk from Subsection \ref{subsec:jumpModel}. Figure \ref{fig:Bates_inf_act} shows the resulting strategy $\pi^1$ for this model.

\begin{figure}[hbt!]
	\centering
	\includegraphics[scale=0.2]{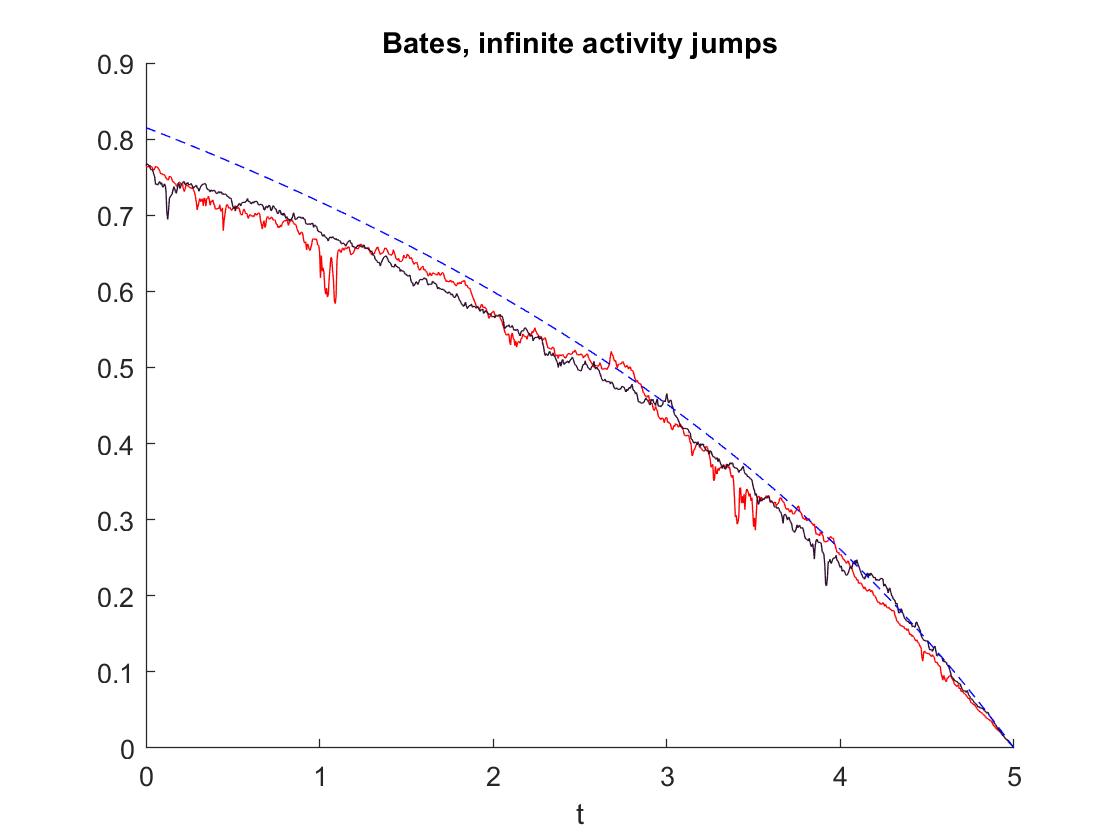}
	\caption{\footnotesize Two samples of the strategy $\pi^1$ using a time discretization of $N=1000$ steps and a space discretization of $200$ steps for Matlab's pde solver \texttt{pdepe}. The reference solution for the constant volatility (setting $z\equiv \theta$) following \cite{KornWilmott2002} is plotted in blue dashes.}
	\label{fig:Bates_inf_act}
\end{figure} 

\item\label{it:numb} Constant activity jumps: $\vartheta = \delta_q$. This model with constant jump size $q$ (set to $l^L_{\max}$) uses the same CIR process as in the model with infinite jumps, the only difference is - according to \ref{subsec:jumpModel} - the market price of risk
\begin{align*}
\lambda^2(z)=z\alpha_2-\frac{q}{1-\alpha_2q},
\end{align*}
where $\alpha^2=\alpha^1=2.5$.

Figure \ref{fig:Bates_const} shows the resulting strategy $\pi^2$ for this model.

\begin{figure}[hbt!]

\includegraphics[scale=0.2]{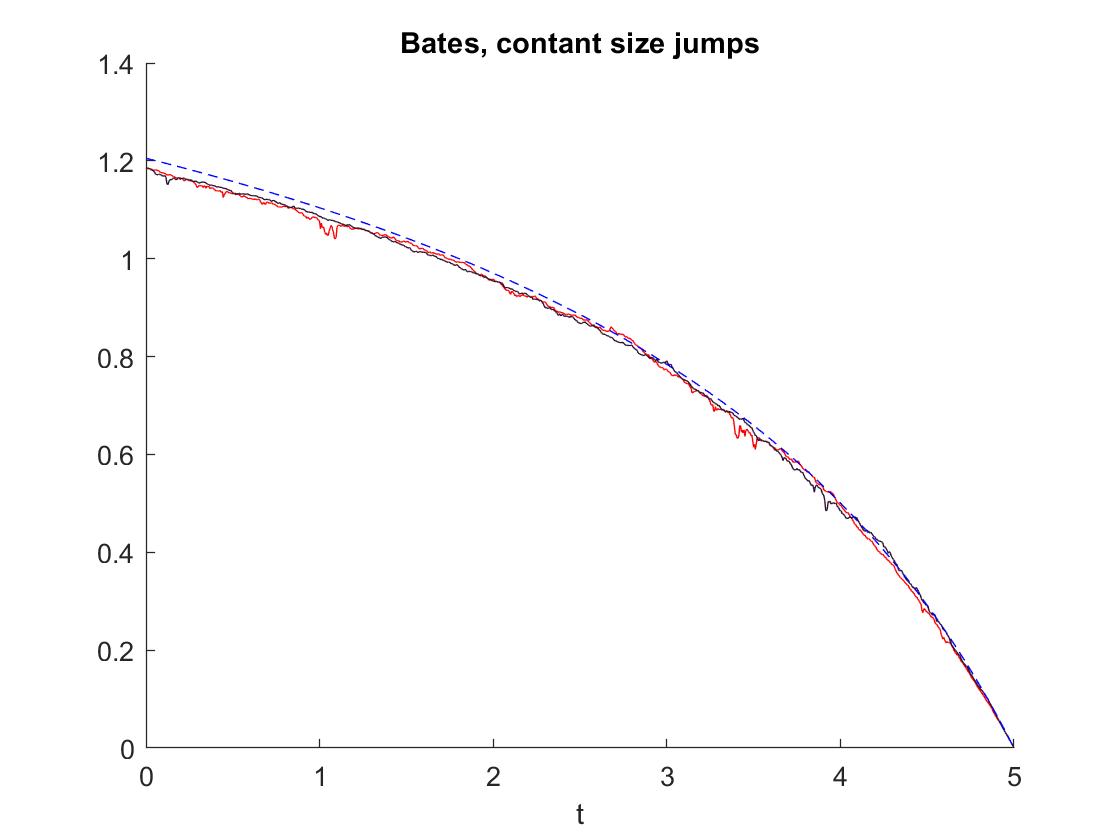}
\caption{\footnotesize Two samples of the strategy $\pi^2$ using a time discretization of $N=1000$ steps and a space discretization of $200$ steps for Matlab's pde solver \texttt{pdepe}. The reference solution for the constant volatility (setting $z\equiv \theta$) following \cite{KornWilmott2002} is plotted in blue dashes.}
\label{fig:Bates_const}
\end{figure} 

\item\label{numc} Absence of jumps: For the Bates model without jumps we are practically in a Heston setting. Our coefficients remain the same, except for the excess return, which just takes the form $\lambda^3(z)=\alpha_3 z$, with $\alpha_3=\alpha_1=2.5$. We obtain the following strategies $\pi^3$ for this model in Figure \ref{fig:Bates_no_jumps}.

\begin{figure}[hbt!]

\includegraphics[scale=0.2]{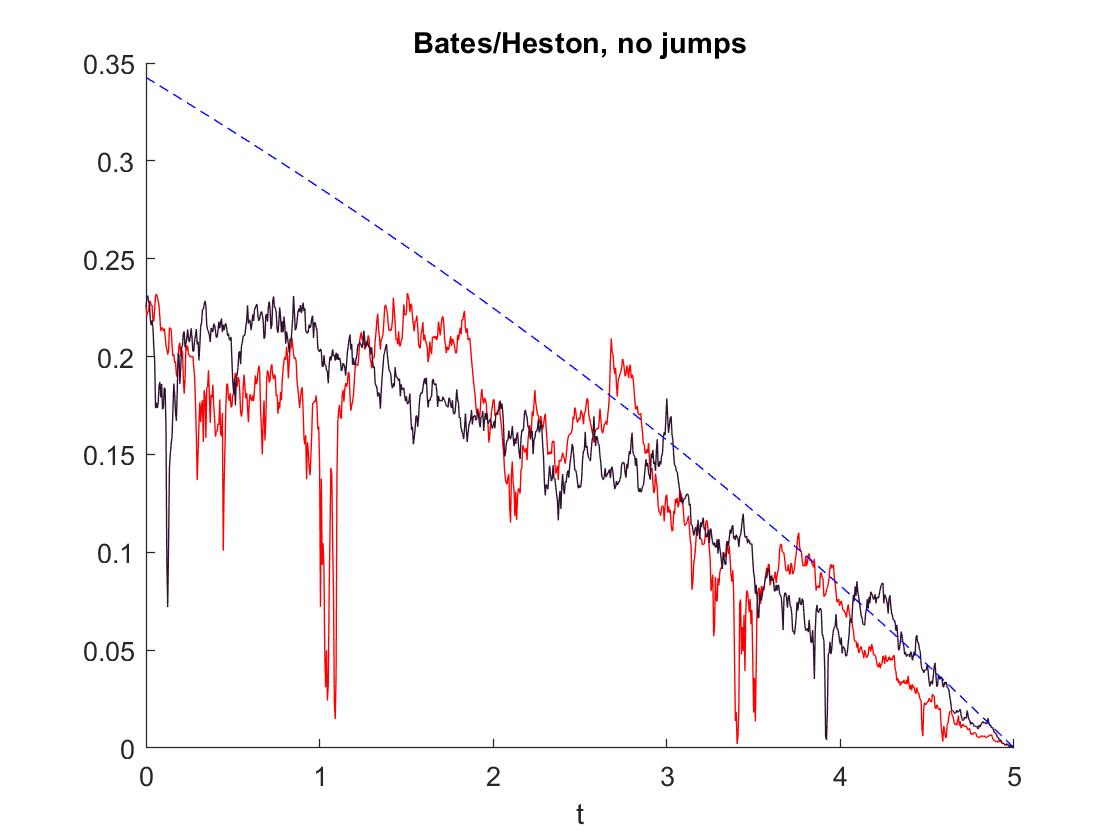}
\caption{\footnotesize Two samples of the strategy $\pi^3$ using a time discretization of $N=1000$ steps and a space discretization of $200$ steps for Matlab's pde solver \texttt{pdepe}. The reference solution for the constant volatility (setting $z\equiv \theta$) following \cite{KornWilmott2002} is plotted in blue dashes.}
\label{fig:Bates_no_jumps}
\end{figure} 

\item\label{it:numd}

We continue with the 'original' Heston model, assuming the same CIR-process as before, but keeping $\lambda^4$ constant at $\lambda^4\equiv\alpha^4\theta$ with $\alpha^4=\alpha^1=2.5$. In this case, the market price of risk is not appropriate, which results in a non-constant post-crash-strategy $\pi^{M,4}=\frac{\lambda^4}{(\sigma^4)^2\vee 0}$ which we illustrate in an additional figure. In those samples, one can see that the condition $\pi^4\leq \pi^{M,4}$ from Theorem~\ref{Prop:OptimalityPropForLargeMerton} is violated, so we cannot guarantee that $\pi^4$ is an actual optimal pre-crash strategy.

\begin{figure}[hbt!]

\includegraphics[scale=0.2]{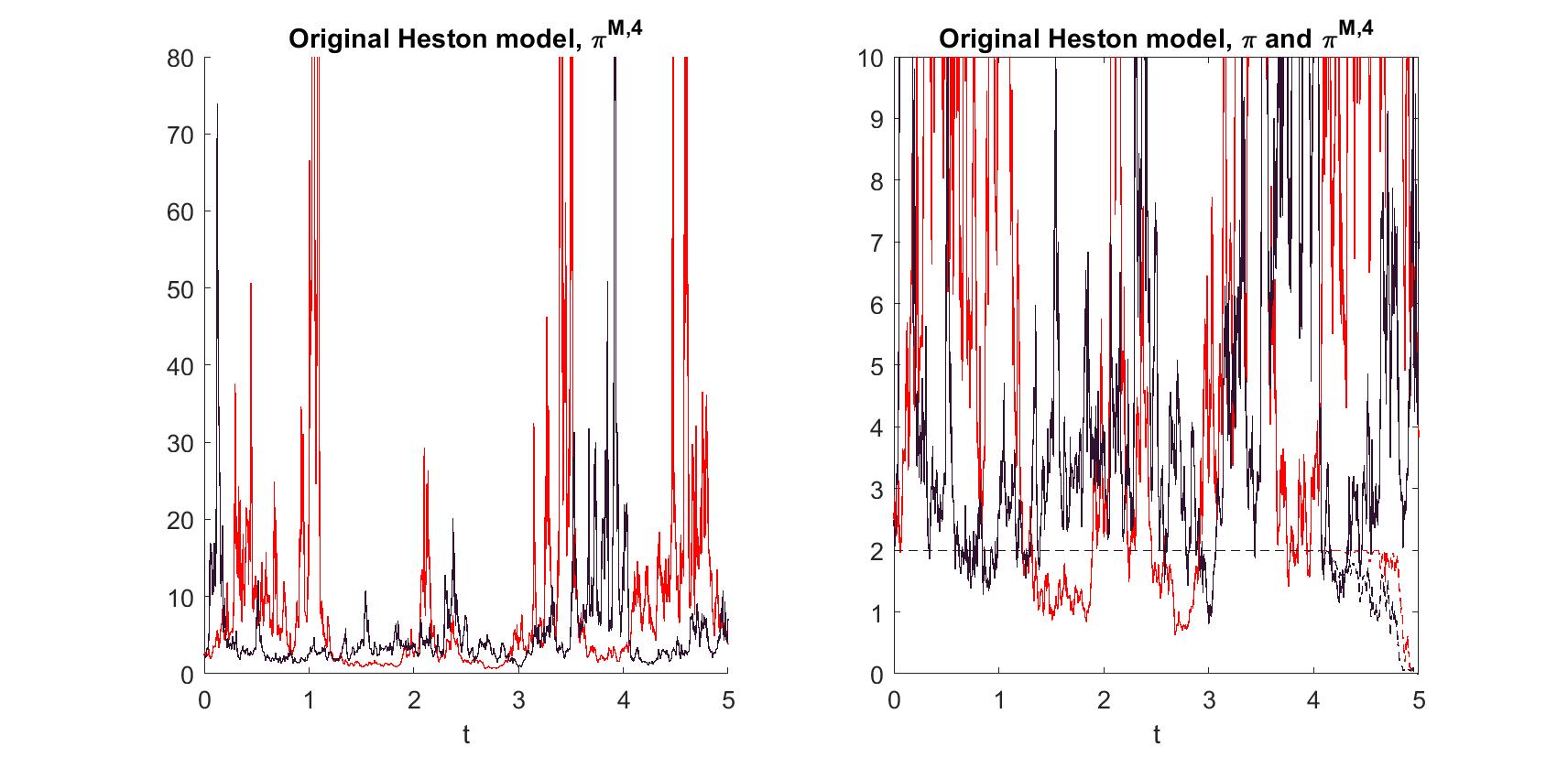}
\caption{\footnotesize Left: The post-crash optimal strategy $\pi^{M,4}$ for two sample paths. The peaks mount up to a value of about 1200. Right: The same sample graphs of $\pi^{M,4}$ together with the according strategy samples $\pi^4$ (dashed). Note that $\pi^4\nleq \pi^{M,4}$.}
\label{fig:pim4}
\end{figure} 

\begin{figure}[hbt!]

\includegraphics[scale=0.2]{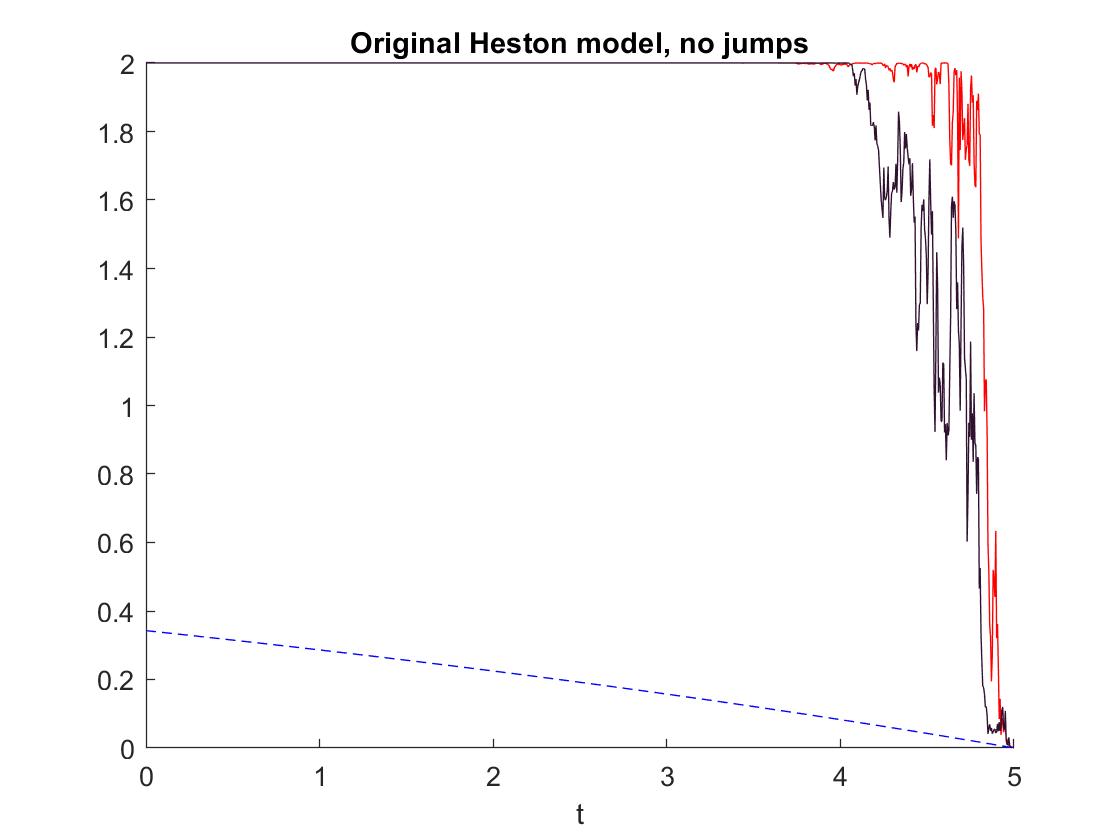}
\caption{\footnotesize Two samples of the strategy $\pi^4$ using a time discretization of $N=1000$ steps and a space discretization of $200$ steps for Matlab's pde solver \texttt{pdepe}. The reference solution for the constant volatility (setting $z\equiv \theta$) following \cite{KornWilmott2002} is plotted in blue dashes. Note that $\pi^4$ is quite close to the upper bound for admissibility, $2$, here (numerically indistinguishable for the first 3 time steps even).}
\label{fig:Heston}
\end{figure} 

\end{enumerate}

\subsection{Numerics for the Kim-Omberg Model}	The second model's numerical simulations that we present here are from the Kim-Omberg model. Here, the process $\lambda^5\equiv z$ is given by the Ornstein-Uhlenbeck dynamics
\begin{align*}
dz_t =\kappa^{KO}(\theta^{KO}-z_t)dt + \tilde{\varsigma}^{KO}d\hat{W}_t,\quad z_0=\theta^{KO},\quad t\in [0,T],
\end{align*}
with $\kappa^{KO}=3.5, \theta^{KO}=\theta=0.014, \tilde{\varsigma^{KO}}=0.3$,
and $\sigma^{5}\equiv\sqrt{\theta^{KO}}$.

Also in this model, we can not guarantee the condition $\pi^5\leq \pi^{M,5}$ for Theorem~\ref{Prop:OptimalityPropForLargeMerton}.

\begin{figure}[hbt!]

\includegraphics[scale=0.2]{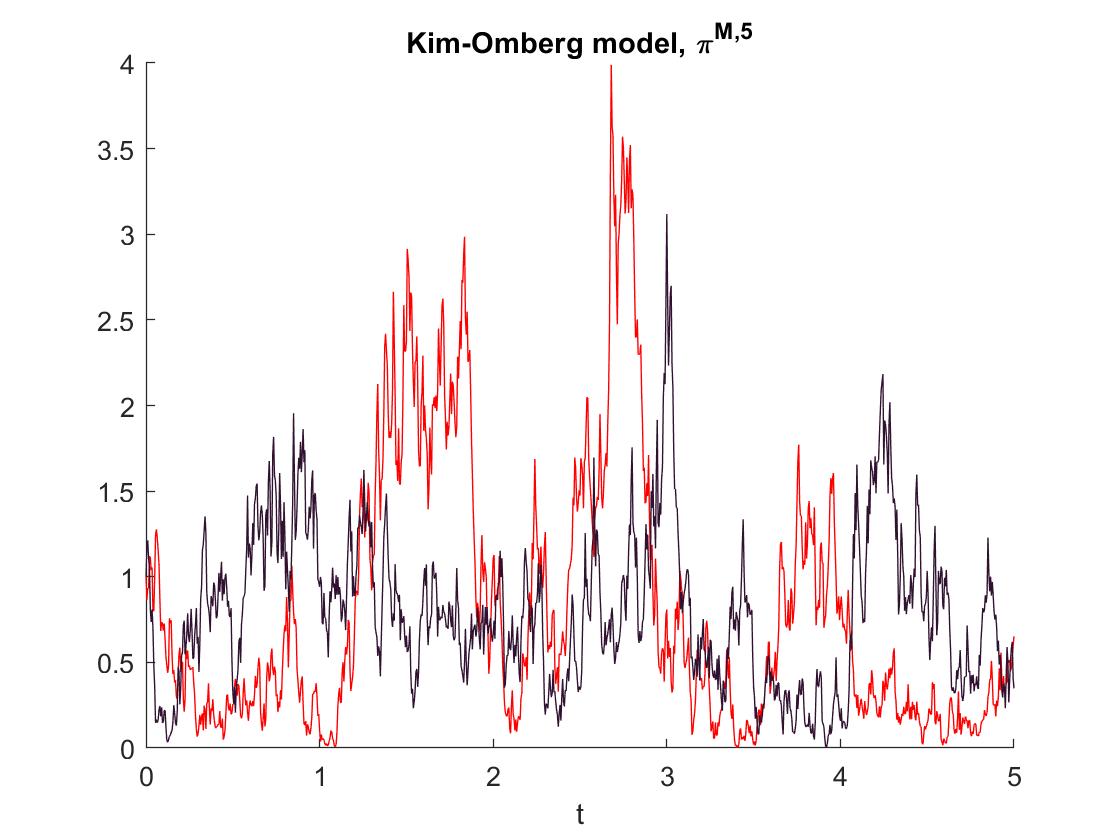}
\caption{\footnotesize The post-crash optimal strategy $\pi^{M,5}$ for two sample paths.}
\label{fig:pim5}
\end{figure} 

\begin{figure}[hbt!]

\includegraphics[scale=0.2]{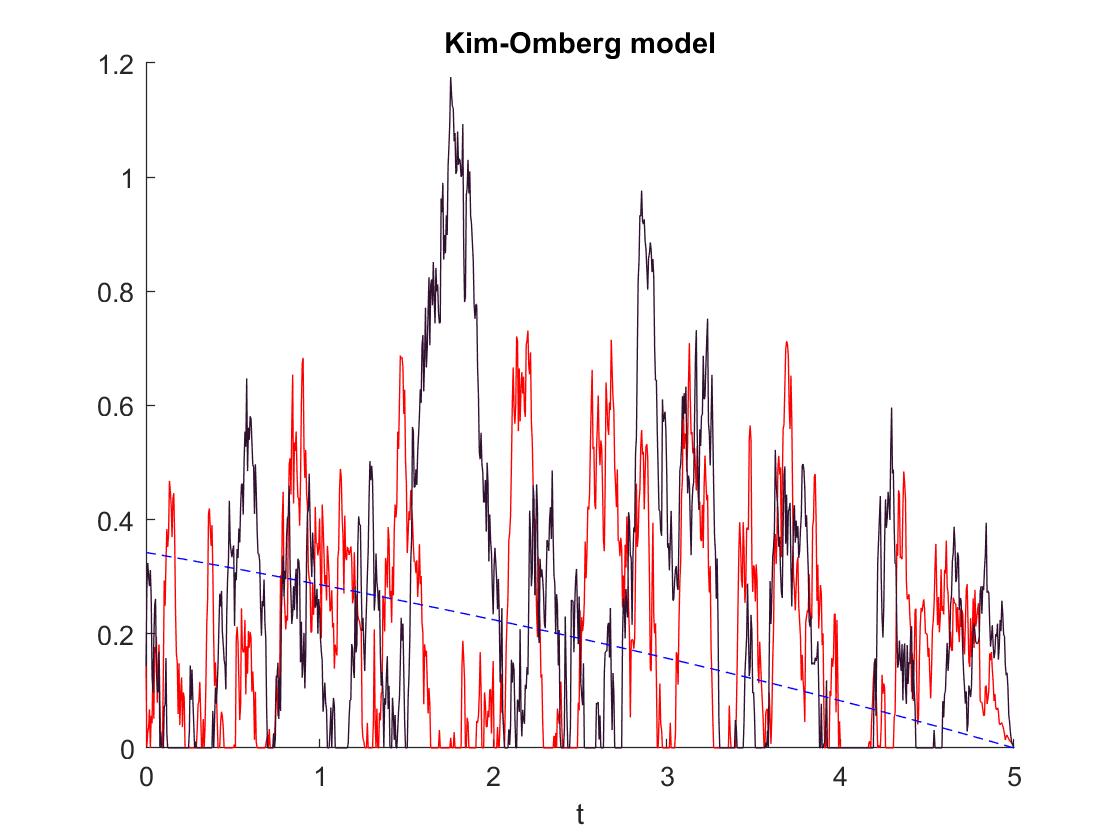}
\caption{\footnotesize Two samples of the strategy $\pi^5$ using a time discretization of $N=1000$ steps and a space discretization of $200$ steps for Matlab's pde solver \texttt{pdepe}. The reference solution (setting $z\equiv \theta$) following \cite{KornWilmott2002} is plotted in blue dashes.}
\label{fig:KO}
\end{figure} 

			\section*{Acknowledgements}
			We wish to thank the participants of the Stochastic Models and Control Workshop 2017 in Trier and from the London Mathematical Finance Seminar Series for useful comments and discussions.
			
			\bibliographystyle{siam}
			\bibliography{litlit}

\begin{thebibliography}{10}

\bibitem{AltNeu}
{\sc M.~Altmayer and A.~Neuenkirch}, {\em {Discretising the Heston model: an
  analysis of the weak convergence rate}}, IMA J. Numer. Anal., 37 (2017),
  pp.~1930--1960.

\bibitem{applebaum}
{\sc D.~Applebaum}, {\em L{\'e}vy processes and stochastic calculus}, Cambridge
  university press, 2009.

\bibitem{bbp}
{\sc G.~Barles, R.~Buckdahn, and {\'E}.~Pardoux}, {\em Backward stochastic
  differential equations and integral-partial differential equations},
  Stochastics: An International Journal of Probability and Stochastic
  Processes, 60 (1997), pp.~57--83.

\bibitem{BarlesSouganidis1991}
{\sc G.~Barles and P.~Souganidis}, {\em Convergence of approximation schemes
  for fully nonlinear second order equations}, Asymptotic Analysis, 4 (1991),
  pp.~271--283.

\bibitem{Bates1996}
{\sc D.~Bates}, {\em Jumps and stochastic volatility: Exchange rate processes
  implicit in deutsche mark options}, Review of Financial Studies, 9 (1996),
  pp.~69--107.

\bibitem{BCM2014}
{\sc C.~Belak, S.~Christensen, and O.~Menkens}, {\em Worst-case optimal
  investment with a random number of crashes}, Statistics and Probability
  Letters, 90 (2014), pp.~140--148.

\bibitem{BMS2015}
{\sc C.~Belak, O.~Menkens, and J.~Sass}, {\em Worst-case portfolio optimization
  with proportional transaction costs}, Stochastics, 87 (2015), pp.~623--663.

\bibitem{Branger2008}
{\sc N.~Branger, C.~Schlag, and E.~Schneider}, {\em Optimal portfolios when
  volatility can jump}, Journal of Banking \& Finance, 32 (2008),
  pp.~1087--1097.

\bibitem{Briand2008}
{\sc P.~Briand and F.~Confortola}, {\em Bsdes with stochastic lipschitz
  condition and quadratic pdes in hilbert spaces}, Stochastic Processes and
  their Applications, 118 (2008), pp.~818--838.

\bibitem{briand2003lp}
{\sc P.~Briand, B.~Delyon, Y.~Hu, {\'E}.~Pardoux, and L.~Stoica}, {\em L$^p$
  solutions of backward stochastic differential equations}, Stochastic
  Processes and their Applications, 108 (2003), pp.~109--129.

\bibitem{BroadieKaya}
{\sc M.~Broadie and {\"O}.~Kaya}, {\em Exact simulation of stochastic
  volatility and other affine jump diffusion processes}, Operations Research,
  54 (2006), pp.~217--231.

\bibitem{CV2005}
{\sc G.~Chacko and L.~Viceira}, {\em Dynamic consumption and portfolio choice
  with stochastic volatility in incomplete markets}, Review of Financial
  Studies, 18 (2005), pp.~1369--1402.

\bibitem{CHJ}
{\sc S.~{Cox}, M.~{Hutzenthaler}, and A.~{Jentzen}}, {\em {Local Lipschitz
  continuity in the initial value and strong completeness for nonlinear
  stochastic differential equations}}, arXiv:1309.5595v3,  (2021).

\bibitem{CozReis2016}
{\sc A.~Cozma and C.~Reisinger}, {\em Exponential integrability properties of
  {E}uler discretization schemes for the {C}ox-{I}ngersoll-{R}oss process},
  Discrete and Continuous Dynamical Systems - B, 21 (2016), pp.~3359--3377.

\bibitem{DNS}
{\sc S.~Dereich, A.~Neuenkirch, and L.~Szpruch}, {\em {An Euler-type method for
  the strong approximation of the Cox{\textendash}Ingersoll{\textendash}Ross
  process.}}, Proc. R. Soc. A, 468 (2012), pp.~1105--1115.

\bibitem{DKRS2015}
{\sc S.~Desmettre, R.~Korn, P.~Ruckdeschel, and F.~Seifried}, {\em Robust
  worst-case optimal investment}, OR Spectrum, 37 (2015), pp.~677--701.

\bibitem{DesmettreKornSeifried2015}
{\sc S.~Desmettre, R.~Korn, and F.~T. Seifried}, {\em Worst-case
  consumption-portfolio optimization}, International Journal of Theoretical and
  Applied Finance, 18 (2015), p.~1550004.

\bibitem{eddahbi_fakhouri_ouknine_2017}
{\sc M.~Eddahbi, I.~Fakhouri, and Y.~Ouknine}, {\em {$L^p(p \geq 2)$}-solutions
  of generalized {BSDE}s with jumps and monotone generator in a general
  filtration}, Modern Stochastics: Theory and Applications, 4 (2017),
  pp.~25--63.

\bibitem{ElKarouiHuang1997}
{\sc N.~El~Karoui and S.~Huang}, {\em A general result of existence and
  uniqueness of backward stochastic differential equations}, in Backward
  Stochastic Differential Equations, N.~El~Karoui and L.~Mazliak, eds., no.~364
  in Pitman research Notes in Math. Series, Longman Harlow, 1997.

\bibitem{ElKarouiPengQuenez1997}
{\sc N.~El~Karoui, S.~Peng, and M.~C. Quenez}, {\em Backward stochastic
  differential equations in {Finance}}, Mathematical Finance, 7 (1997),
  pp.~1--71.

\bibitem{EnglerKorn2014}
{\sc T.~Engler and R.~Korn}, {\em Worst-case portfolio optimization under
  stochastic interest rate risk}, Risks, 2 (2014), pp.~469--488.

\bibitem{FSW2009}
{\sc H.~F\"ollmer, A.~Schied, and S.~Weber}, {\em Robust preferences and robust
  portfolio choice}, Mathematical Modelling and Numerical Methods in Finance
  (P. Ciarlet, A. Bensoussan \& Q. Zhang, eds.), Handbook of Numerical
  Analysis, 15 (2009), pp.~22--89.

\bibitem{fujiwara1985}
{\sc T.~Fujiwara and H.~Kunita}, {\em Stochastic differential equations of jump
  type and {L\'e}vy processes in diffeomorphisms group}, J. Math. Kyoto Univ.,
  25 (1985), pp.~71--106.

\bibitem{geiss2018monotonic}
{\sc C.~Geiss and A.~Steinicke}, {\em Existence, uniqueness and comparison
  results for {B}{S}{D}{E}s with {L}{\'e}vy jumps in an extended monotonic
  generator setting}, Probability, Uncertainty and Quantitative Risk, 3 (2018),
  p.~9.

\bibitem{Hefter2018}
{\sc M.~Hefter and A.~Herzwurm}, {\em Strong convergence rates for
  {Cox-Ingersoll-Ross} processes -- full parameter range}, Journal of
  Mathematical Analysis and Applications, 459 (2018), pp.~1079--1101.

\bibitem{Heston1993}
{\sc S.~L. Heston}, {\em A closed-form solution for options with stochastic
  volatility with applications to bond and currency options}, Review of
  Financial Studies, 6 (1993), pp.~327--343.

\bibitem{HuaWilmott1997}
{\sc P.~Hua and P.~Wilmott}, {\em Crash courses}, Risk, 10 (1997), pp.~64--67.

\bibitem{HK}
{\sc T.~{Hurd} and A.~{Kuznetsov}}, {\em {Explicit formulas for Laplace
  transforms of stochastic integrals.}}, {Markov Process. Relat. Fields}, 14
  (2008), pp.~277--290.

\bibitem{HJN}
{\sc M.~{Hutzenthaler}, A.~{Jentzen}, and M.~{Noll}}, {\em {Strong convergence
  rates and temporal regularity for Cox-Ingersoll-Ross processes and Bessel
  processes with accessible boundaries}}, arXiv:1403.6385,  (2014).

\bibitem{ikedawatanabe}
{\sc N.~Ikeda and S.~Watanabe}, {\em Stochastic differential equations and
  diffusion processes}, North Holland, 1989.

\bibitem{KMK2010}
{\sc J.~Kallsen and J.~Muhle-Karbe}, {\em Utility maximization in affine
  stochastic volatility models}, International Journal of Theoretical and
  Applied Finance, 13 (2010), pp.~459--477.

\bibitem{KS}
{\sc I.~Karatzas and S.~Shreve}, {\em {Brownian motion and stochastic
  calculus.}}, New York, Springer-Verlag, 2nd~ed., 1991.

\bibitem{KimOmberg1996}
{\sc T.~S. Kim and E.~Omberg}, {\em Dynamic nonmyopic portfolio behavior},
  Review of Financial Studies, 1 (1996), pp.~141--161.

\bibitem{Knight1921}
{\sc F.~H. Knight}, {\em Risk, Uncertainty and Profit}, Hart, Schaffner \&
  Marx; Houghton Mifflin Co., 1921.

\bibitem{KL2019}
{\sc R.~Korn and E.~Leoff}, {\em Multi-asset worst-case optimal portfolios},
  International Journal of Theoretical and Applied Finance, 22 (2022),
  p.~1950019.

\bibitem{KornMenkens2005}
{\sc R.~Korn and O.~Menkens}, {\em Worst-case scenario portfolio optimization:
  a new stochastic control approach}, Mathematical Methods of Operations
  Research, 62 (2005), pp.~123--140.

\bibitem{KM2022}
{\sc R.~Korn and L.~M\"{u}ller}, {\em Optimal dynamic reinsurance with
  worst-case default of the reinsurer}, European Actuarial Journal, 12 (2022),
  pp.~879--885.

\bibitem{KM2022b}
\leavevmode\vrule height 2pt depth -1.6pt width 23pt, {\em Optimal portfolios
  in the presence of stress scenarios a worst-case approach}, Mathematics and
  Financial Economics, 16 (2022), pp.~153--185.

\bibitem{KornSeifried2009}
{\sc R.~Korn and F.~T. Seifried}, {\em A worst-case approach to continuous-time
  portfolio optimization}, Radon Series on Computational and Applied
  Mathematics, 8 (2009), pp.~327--345.

\bibitem{KornSteffensen2007}
{\sc R.~Korn and M.~Steffensen}, {\em On worst-case portfolio optimization},
  SIAM Journal on Control and Optimization, 46 (2007), pp.~2013--2030.

\bibitem{KornWilmott2002}
{\sc R.~Korn and P.~Wilmott}, {\em Optimal portfolios under the threat of a
  crash}, International Journal of Theoretical and Applied Finance, 05 (2002),
  pp.~171--187.

\bibitem{Kraft2005}
{\sc H.~Kraft}, {\em Optimal portfolios and heston's stochastic volatility
  model: an explicit solution for power utility}, Quantitative Finance, 5
  (2005), pp.~303--313.

\bibitem{KremsnerSteinicke}
{\sc S.~{Kremsner} and A.~{Steinicke}}, {\em {$L^p$-Solutions and Comparison
  Results for L{\'e}vy Driven BSDEs in a Monotonic, General Growth Setting}},
  Journal of Theoretical Probability,  (2020), p.~arXiv:1909.06181.

\bibitem{Kruse}
{\sc T.~Kruse and A.~Popier}, {\em {B}{S}{D}{E}s with monotone generator driven
  by {B}rownian and {P}oisson noises in a general filtration}, Stochastics, 88
  (2016), pp.~491--539.

\bibitem{lijia14}
{\sc J.~Li}, {\em Backward Stochastic Differential Equations with Unbounded
  Coefficients and Their Applications}, PhD thesis, University of Liverpool,
  2014.

\bibitem{Liu2007}
{\sc J.~Liu}, {\em Portfolio selection in stochastic environments}, The Review
  of Financial Studies, 20 (2007), pp.~1--39.

\bibitem{Pham2002}
{\sc H.~Pham}, {\em Smooth solutions to optimal investment models with
  stochastic volatilities and portfolio constraints}, Applied Mathematics and
  Optimization, 46 (2002), pp.~55--78.

\bibitem{Protter}
{\sc P.~Protter}, {\em Stochastic Integration and Stochastic Differential
  Equations}, Springer, Berlin, 2004.

\bibitem{Saplaouras17}
{\sc A.~Saplaouras}, {\em Backward stochastic differential equations with jumps
  are stable}, dissertation, TU Berlin, 2017.

\bibitem{Schied2005}
{\sc A.~Schied}, {\em Optimal investments for robust utility functionals in
  complete market models}, Mathematics of Operations Research, 30 (2005),
  pp.~750--764.

\bibitem{Seifried2010}
{\sc F.~T. Seifried}, {\em Optimal investment for worst-case crash scenarios: A
  martingale approach}, Mathematics of Operations Research, 35 (2010),
  pp.~559--579.

\bibitem{TalayZheng2002}
{\sc D.~Talay and Z.~Zheng}, {\em Worst case model risk management}, Finance
  and Stochastics, 1 (2002), pp.~517--537.

\bibitem{Zariphopoulou2001}
{\sc T.~Zariphopoulou}, {\em A solution approach to valuation with unhedgeable
  risks}, Finance and Stochastics, 5 (2001), pp.~61--82.

\end{thebibliography}
			\appendix
			
			\section{Proofs from Section~\ref{sec:post}}\label{app:post}
			
			\begin{proof}(Proposition~\ref{prop: psicont})
				There are three possibilities for $\pi^M$. Either $\pi^M=0$ or $\pi^M=\frac{1}{l^{L}_{\max}}$ or $\pi^M\in \left(0,\frac{1}{l^{L}_{\max}}\right)$.
				
				Define $\Psi(\lambda,\sigma,y):=\partial_y\Phi(y)=\lambda-\sigma^2y-\int_{[0,l^{L}_{\max}]}\frac{l}{1-yl}\vartheta(dl).$
				In the case $\pi^M\in \left(0,\frac{1}{l^{L}_{\max}}\right)$, $\Psi(\lambda,\sigma,\pi^M)=0$ and by the implicit function theorem, as $\partial_y\Psi=-\sigma^2-\int_{[0,l^{L}_{\max}]}\frac{l^2}{(1-yl)^2}\theta(dl)<0$, $\pi^M$ may be expressed by a functional relation $\pi^M=\psi(\lambda,\sigma)$, where $\psi$ is differentiable in a neighborhood around $(\lambda,\sigma)$. This shows continuity in this case. To find first Lipschitz-like relations, we start by differentiating
				$0=\partial_\lambda\Psi(\lambda,\sigma,\psi(\lambda,\sigma))$,
				and obtain, using the chain rule,
				$$\partial_\lambda\psi(\lambda,\sigma)=\frac{1}{\sigma^2+\int_{[0,l^{L}_{\max}]}\frac{l^2}{(1-\psi(\lambda,\sigma)l)^2}\theta(dl)}<\frac{1}{\sigma^2}<\infty.$$
				The same works for the derivative in direction $\sigma$,
				\begin{align*}
					&0=\partial_\sigma\Psi(\lambda,\sigma,\psi(\lambda,\sigma))\quad\Leftrightarrow\\
					&0=-2\sigma\psi(\lambda,\sigma)-\left(\sigma^2+\int_{[0,l^{L}_{\max}]}\frac{l^2}{(1-\psi(\lambda,\sigma)l)^2}\theta(dl)\right)\partial_\sigma\psi(\lambda,\sigma)\quad\Leftrightarrow\\
					&-\frac{2\sigma\psi(\lambda,\sigma)}{\sigma^2+\int_{[0,l^{L}_{\max}]}\frac{l^2}{(1-\psi(\lambda,\sigma)l)^2}\theta(dl)}=\partial_\sigma\psi(\lambda,\sigma).
				\end{align*}
				and hence, $|\partial_\sigma\psi(\lambda,\sigma)|\leq \frac{2}{l^{L}_{\max}|\sigma|}\leq \frac{2}{l^{L}_{\max}\inf_{\tilde\sigma} |\tilde\sigma|}$.
				Trivially, also in the remaining cases, $\pi^M\in \{0,\tfrac{1}{l^{L}_{\max}}\}$, it can be expressed as a function $\psi$ in $(\lambda,\sigma)$. We show that $\psi$ is continuous, no matter what case we are in: Let $(\lambda_n,\sigma_n)$ be a sequence converging to some $(\lambda,\sigma)$. Let $p^*$ be a limit point of a converging subsequence of $\psi(\lambda_{n_k},\sigma_{n_k})$ (which exists, as $\psi$ only takes values in $\left[0,\tfrac{1}{l^{L}_{\max}}\right]$). For all $k\geq 1$ is then $$\Phi(\psi(\lambda_{n_k},\sigma_{n_k}))=\Phi(\lambda_{n_k},\sigma_{n_k},\psi(\lambda_{n_k},\sigma_{n_k}))\geq \Phi(\lambda_{n_k},\sigma_{n_k},\psi(\lambda,\sigma)).$$ Since $\Phi$ is continuous in $\lambda,\sigma,y$, taking the limit $k\to\infty$ in the last inequality, we get 
				$$\Phi(\lambda,\sigma,p^*)\geq \Phi(\lambda,\sigma,\psi(\lambda,\sigma)).$$
				By the uniqueness of the $\mathrm{argmax}$ (follows from the strict convexity of $\Phi$ in $y$), we get $p^*=\psi(\lambda,\sigma).$
				
				We show that the Lipschitz-like properties hold independently from the cases: Doing this for $\lambda$ in the sequel, the proof works the similarly for $\sigma$. To that end, fix $\sigma$ and let $\lambda<\lambda'$. For an exemplary case (other cases work similar), let $\psi(\lambda,\sigma)=0$ and let $\psi(\lambda',\sigma)\in \left(0,\tfrac{1}{l^{L}_{\max}}\right)$. Then set $\lambda_1:=\inf\{t\in(\lambda,\lambda']:\psi(t,\sigma)>0\}$. In the same way set $\lambda_2:=\inf\big\{t\in (\lambda_1,\lambda'):\psi(t,\sigma)= \tfrac{1}{l^{L}_{\max}}\big\}$, and $\lambda_3:=\inf\big\{t\in [\lambda_2,\lambda'):\psi(t,\sigma)< \tfrac{1}{l^{L}_{\max}}\big\}$ and set $\lambda_2:=\lambda_3:=\lambda_1$ whenever the sets are empty. We observe,
\begin{align*}
					|\psi(\lambda,\sigma)-\psi(\lambda',\sigma)|\leq& |\psi(\lambda,\sigma)-\psi(\lambda_1,\sigma)|+|\psi(\lambda_1,\sigma)-\psi(\lambda_2,\sigma)|\\
					&+|\psi(\lambda_2,\sigma)-\psi(\lambda_3,\sigma)|+|\psi(\lambda_3,\sigma)-\psi(\lambda',\sigma)|\\
					&=|\psi(\lambda_1,\sigma)-\psi(\lambda_2,\sigma)|+|\psi(\lambda_3,\sigma)-\psi(\lambda',\sigma)|.
\end{align*}
Thus, in the intervals $(\lambda_1,\lambda_2)$ as well as $(\lambda_3,\lambda')$, $\psi$ is a differentiable function in $\lambda$ with Lipschitz constant $\frac{1}{\sigma^2}$. By taking limits, this Lipschitz property extends to the intervals' boundaries, and we obtain
				\begin{align*}
					|\psi(\lambda,\sigma)-\psi(\lambda',\sigma)|\leq\frac{1}{\sigma^2}|\lambda_1-\lambda_2|+|\lambda_3-\lambda'|\leq \frac{1}{\sigma^2}|\lambda-\lambda'|.
				\end{align*}
If $|\sigma|\geq|\tilde{\sigma}|>0$, Lipschitz continuity on the whole domain follows.
				
			\end{proof}
			
			\section{Proofs of BSDE results from Section~\ref{sec:indifferenceBSDE}} \label{App:Proofs}
			
			\begin{proof}(Proposition~\ref{Prop:ExistenceOfIndifferenceBSDESolutionJumps2})
						In view of Subsection \ref{rem:BSDEINdiff}, the generator of the BSDE can be written as 
						\begin{align*}
							&{f}(t,y)=\\
							&a_t+\frac{\lambda_t^-}{l^{WOC}}(1-e^{-(y\vee 0)})-\frac{\lambda_t^+}{l^{WOC}}(1-e^{-(y\vee 0)})+\frac{\sigma_t^2}{2\left(l^{WOC}\right)^2}(1-e^{-(y\vee 0)})^2\\
							&-\int_{[0,l^{L}_{\max}]}\log\left(1-\frac{1-e^{-(y\vee 0)}}{l^{WOC}}l\right)\vartheta(dl),
						\end{align*}
						where $a_t$ is given by $$\lambda_t\pi_t^M-\frac{\sigma_t^2(\pi_t^M)^2}{2}+\int_{[0,l^{L}_{\max}]}\log\left(1-\pi_t^Ml\right)\vartheta(dl).$$
						Now, given the boundedness conditions on $\vartheta, \lambda, \sigma, \pi^M$, the generator $f$ meets the assumptions of \cite[Theorem 3.2]{geiss2018monotonic}(or e.g.~\cite[Theorem 3.3]{KremsnerSteinicke}, \cite[Theorem 1]{Kruse}), as seen in Subsection \ref{rem:BSDEINdiff}, \eqref{eq:fcondI} and \eqref{eq:fcondLip}, from which the assertion follows.
					\end{proof}
					
					\begin{proof}(Theorem~\ref{Thm:ExistenceOfIndifferenceBSDESolutionUnbounded2Jumps})
						We approximate the generator ${f}$ by the functions
						\begin{align*}
							f^{n,N}(t,y):=&a^{N,n}_t+\frac{\lambda_t^-\wedge n}{l^{WOC}}(1-e^{-(y\vee 0)})-\frac{\lambda_t^+}{l^{WOC}}(1-e^{-(y\vee 0)})\\
							&+\frac{\sigma_t^2\wedge n}{2\left(l^{WOC}\right)^2}(1-e^{-(y\vee 0)})^2-\int_{[0,l^{L}_{\max}]}\log\left(1-\frac{1-e^{-(y\vee 0)}}{l^{WOC}}l\right)\vartheta(dl),
						\end{align*}
						with $a^{N,n}_t:=(\lambda_t\wedge n)\pi_t^M-\frac{(\sigma_t^2\wedge N)(\pi_t^M)^2}{2}+\int_{[0,l^{L}_{\max}]}\log\left(1-\pi_t^Ml\right)\vartheta(dl)$, and
					\begin{align*}
							f^{N}(t,y):=&a^{N}_t+\frac{\lambda_t^-}{l^{WOC}}(1-e^{-(y\vee 0)})-\frac{\lambda_t^+}{l^{WOC}}(1-e^{-(y\vee 0)})\\
							&+\frac{\sigma_t^2}{2\left(l^{WOC}\right)^2}(1-e^{-(y\vee 0)})^2-\int_{[0,l^{L}_{\max}]}\log\left(1-\frac{1-e^{-(y\vee 0)}}{l^{WOC}}l\right)\vartheta(dl),
						\end{align*}
						with $a^{N}_t:=\lambda_t\pi_t^M-\frac{(\sigma_t^2\wedge N)(\pi_t^M)^2}{2}+\int_{[0,l^{L}_{\max}]}\log\left(1-\pi_t^Ml\right)\vartheta(dl)$.
						
						For the BSDEs with generators $f^{N,n}$ (and terminal condition $0$), we get solutions $(\Upsilon^{N,n},\sigma_\Upsilon^{N,n})$ from Proposition \ref{Prop:ExistenceOfIndifferenceBSDESolutionJumps2}. Note that $f^{N,n}(t,\omega,y)\nearrow {f^N}(t,\omega,y)$ and ${f^N}(t,\omega,y)\searrow f(t,\omega,y)$. We will follow this order of approximations to construct solutions. For the first approximation, by the comparison theorem (e.g.~in \cite[Theorem 3.4]{geiss2018monotonic}, \cite[Theorem 6.1]{KremsnerSteinicke}, \cite[Proposition 4]{Kruse}), we get that $\left(\Upsilon_t^{N,n}(\omega)\right)_{n\geq 1}$ is a monotonically increasing family for all $t\in [0,T]$ for almost all $\omega\in\Omega$. Thus, we can define the random variable $\Upsilon^N_t(\omega):=\lim_{n\to\infty}\Upsilon_t^{N,n}(\omega)$, wherever the limit exists and $0$ on the complement.

						We show that $\mathbb{E}\left[\sup_{t}|\Upsilon^N_t|^2\right]<\infty$. To that end, observe first that It\^o's formula yields for all $t\in [0,T]$
						\begin{align}\label{Thm:ExistenceOfIndifferenceBSDESolutionUnbounded2JumpsItou}
							|\Upsilon_t^{N,n}|^2+\int_t^T|\sigma^{N,n}_{\Upsilon,s}|^2ds&+\int_0^T\int_{[0,l^{L}_{\max}]}|U^{N,n}_{\Upsilon,s}(l)|^2\vartheta(dl)ds\nonumber\\
							&=\int_t^T2\Upsilon^{N,n}_sf^{N,n}(s,\Upsilon^{N,n}_s)ds-\int_t^T2\Upsilon_s^{N,n}\sigma^{N,n}_{\Upsilon,s}d{\bar W}_s\nonumber\\
							&\quad-\int_{(t,T]\times [0,l^{L}_{\max}]}\left(\left(\Upsilon^{N,n}_{s-}+U^{N,n}_{\Upsilon,s}\right)^2-\left(\Upsilon^{N,n}_{s-}(l)\right)^2\right)\tilde{\nu}(ds,dl).
						\end{align}
						From this equation, we get by taking conditional expectations w.r.t.~$\mathcal{F}_t$ and using Young's inequality that for all $q\in(0,\infty)$,
						\begin{align}\label{Thm:ExistenceOfIndifferenceBSDESolutionUnbounded2JumpsItouExp}
							&\mathbb{E}\left[\int_0^T|\sigma^{N,n}_{\Upsilon,s}|^2ds\right]+\mathbb{E}\left[\int_0^T\int_{[0,l^{L}_{\max}]}|U^{N,n}_{\Upsilon,s}(l)|^2\vartheta(dl)ds\right]\\
							&\quad\leq \frac{1}{q}\mathbb{E}\left[\sup_{t}|\Upsilon^{N,n}_t|^2\right]+q\mathbb{E}\left[\int_0^Tf^{N,n}(s,\Upsilon^{N,n}_s)ds\right]^2.\nonumber
						\end{align}
						Further, \eqref{Thm:ExistenceOfIndifferenceBSDESolutionUnbounded2JumpsItou} implies, together with Doob's martingale inequality, Burkholder-Davis-Gundy's, Young's inequality and the fact that $$\left|\left(\Upsilon^{N,n}_{s-}+U^{N,n}_{\Upsilon,s}(l)\right)^2-\left(\Upsilon^{N,n}_{s-}\right)^2\right|\leq 4\sup_t|\Upsilon^{N,n}_t||U^{N,n}_{\Upsilon,s}(l)|,$$ that there is a constant $c$, such that for all $R\in(0,\infty)$,
						\begin{align*}&\mathbb{E}\left[\sup_{t}|\Upsilon^{N,n}_t|^2\right]\leq \frac{1}{R}\mathbb{E}\left[\sup_{t}|\Upsilon^{N,n}_t|^2\right]+R\mathbb{E}\left[\int_0^Tf^{N,n}(s,\Upsilon^{N,n}_s)ds\right]^2\\
							&\quad\quad\quad+4c\mathbb{E}\left[\int_0^T\left|\Upsilon^{N,n}_s\sigma^{N,n}_{\Upsilon,s}\right|^2ds\right]^\frac{1}{2}+8c\mathbb{E}\left[\int_0^T\sup_t|\Upsilon^{N,n}_t|^2\left|U^{N,n}_{\Upsilon,s}(l)\right|^2\vartheta(dl)ds\right]^\frac{1}{2}\\
							&\leq\frac{1}{R}\mathbb{E}\left[\sup_{t}|\Upsilon^{N,n}_t|^2\right]+R\mathbb{E}\left[\int_0^Tf^{N,n}(s,\Upsilon^{N,n}_s)ds\right]^2\\
							&\quad\quad+\frac{8c}{R}\mathbb{E}\left[\sup_{t}|\Upsilon^{N,n}_t|^2\right]+cR\mathbb{E}\left[\int_0^T|\sigma^{N,n}_{\Upsilon,s}|^2ds\right]+cR\mathbb{E}\left[\int_0^T\left|U^{N,n}_{\Upsilon,s}(l)\right|^2\vartheta(dl)ds\right],
						\end{align*}
						where we used Young's inequality again for the second estimate.
						Replacing the last term with the help of inequality \eqref{Thm:ExistenceOfIndifferenceBSDESolutionUnbounded2JumpsItouExp},
						we arrive at
						\begin{align*}\mathbb{E}\left[\sup_{t}|\Upsilon^{N,n}_t|^2\right]\leq&\left(\frac{1+16c}{R}+\frac{cR}{q}\right)\mathbb{E}\left[\sup_{t}|\Upsilon^{N,n}_t|^2\right]\\
						&+\left(R+cRq\right)\mathbb{E}\left[\int_0^Tf^{N,n}(s,\Upsilon^{N,n}_s)ds\right]^2.
						\end{align*}
						Choosing now $R$ and $q$ such that $\frac{1+16c}{R}+\frac{cR}{q}<1$, we find a constant $C\in(0,\infty)$ such that
						$\mathbb{E}\left[\sup_{t}|\Upsilon^{N,n}_t|^2\right]\leq C\mathbb{E}\left[\int_0^Tf^{N,n}(s,\Upsilon^{N,n}_s)ds\right]^2$. From our integrability assumption on $\lambda$ and $\sigma$ follow the uniform integrability of $\left(\sup_{t}|\Upsilon^{N,n}_t|^2\right)_{n\geq 1}$ and thus also $\mathbb{E}\left[\sup_{t}|\Upsilon^N_t|^2\right]<\infty$.

						Now, by dominated convergence we get that for $t\in [0,T]$,
						$$\lim_{n\to\infty}\int_t^Tf^{N,n}(s,\Upsilon_s^{N,n})ds=\int_t^T f^N(s,\Upsilon^N_s)ds,\quad\text{a.s.}$$
						and hence 
						\begin{align*}
							\Upsilon^N_t-\int_t^Tf^N(s,\Upsilon^N_s)ds&=\lim_{n\to\infty}\left(\Upsilon_t^{N,n}-\int_t^Tf^{N,n}(s,\Upsilon_s^{N,n})ds\right)\\
							&=\lim_{n\to\infty}\left(-\int_t^T\sigma^{N,n}_{\Upsilon,s}d{\bar W}_s-\int_{]t,T]\times [0,l^{L}_{\max}]}U^{N,n}_{\Upsilon,s}(l)\tilde{\nu}(ds,dl)\right).
						\end{align*}
						Together with the integrability assumptions on $\lambda,\sigma$ and since for all $t\in [0,T]$, $\mathbb{E}\left[|\Upsilon^N_t|^2\right]<\infty$, it follows that $\int_t^T\sigma^{N,n}_{\Upsilon,s}d{\bar W}_s+\int_{]t,T]\times [0,l^{L}_{\max}]}U^{N,n}_{\Upsilon,s}(l)\tilde{\nu}(ds,dl)$ converges in $L^2$ to a random variable $V_t\in L^2$.
						Therefore,
						$$0=\mathbb{E}\left[\int_t^T\sigma^{N,n}_{\Upsilon,s}d{\bar W}_s+\int_{]t,T]\times [0,l^{L}_{\max}]}U_{\Upsilon,s}(l)\tilde{\nu}(ds,dl)\middle|\mathcal{F}_t\right]\to\mathbb{E}\left[V_t\middle|\mathcal{F}_t\right],\quad\text{a.s.},$$
						and $\mathbb{E}\left[V_t\middle|\mathcal{F}_t\right]=0$ follows. In particular, $V_0=-\Upsilon^N_0+\int_t^Tf^N(s,\Upsilon^N_s)ds\in L^2$.
						By the martingale representation theorem, there are processes $A,B$, $A$ progressively measurable and $B$ predictable such that 
						$$V_0=\int_0^TA_sd{\bar W}_s+\int_{(0,T]\times [0,l^{L}_{\max}]}B_s(l)\tilde{\nu}(ds,dl),$$
						and
						$$\mathbb{E}\left[\int_0^TA_s^2ds\right]+\mathbb{E}\left[\int_0^T\int_0^{l^{L}_{\max}}B_s(l)^2\vartheta(dl)ds\right]<\infty.$$
						For all $t\in [0,T]$ we have 
						$V_0-V_t=-\Upsilon^N_0+\Upsilon^N_t+\int_0^tf^N(s,\Upsilon^N_s)ds$, which is $\mathcal{F}_t$-measurable implying $\mathbb{E}\left[V_0-V_t\middle|\mathcal{F}_t\right]=V_0-V_t$.
						Altogether, we have that 
						\begin{align*}
							V_0-V_t&=\mathbb{E}\left[V_0-V_t\middle|\mathcal{F}_t\right]=\mathbb{E}\left[V_0\middle|\mathcal{F}_t\right]-\mathbb{E}\left[V_t\middle|\mathcal{F}_t\right]\\
							&=\int_0^tA_sd{\bar W}_s+\int_{]0,t]\times [0,l^{L}_{\max}]}B_s(l)\tilde{\nu}(ds,dl)-0.
						\end{align*}
						As a consequence,
						\begin{align*}
							V_0-V_t&=\int_0^TA_sd{\bar W}_s+\int_{]0,T]\times [0,l^{L}_{\max}]}B_s(l)\tilde{\nu}(ds,dl)-V_t\\
							&=\int_0^tA_sd{\bar W}_s+\int_{]0,t]\times [0,l^{L}_{\max}]}B_s(l)\tilde{\nu}(ds,dl),
						\end{align*}
						from which we infer $V_t=\int_t^T A_sd{\bar W}_s+\int_{]t,T]\times [0,l^{L}_{\max}]}B_s(l)\tilde{\nu}(ds,dl)$. Now it is readily checked that $(\Upsilon,A,B)$ solves the BSDE given by ${f^N}$, so we can set $\sigma^N_\Upsilon:=A$ and $U^N_\Upsilon:=B$. We go over to the approximation in $N$ now. Note that for all $N\geq M$, $n\geq 0$ and all $t\in [0,T]$ we have that $\mathbb{P}$-a.s.
						\begin{align*}
						\Upsilon^{N,n}_t\geq \Upsilon^{M,n}_t,\quad\text{hence}\quad \Upsilon^N_t\geq \Upsilon^M_t.
						\end{align*}
						Therefore, we may define again for all $t$, $\Upsilon_t:=\lim_{N\to\infty}\Upsilon^N_t$ on the set where the limit exists and 0 otherwise. Now the same steps from \eqref{Thm:ExistenceOfIndifferenceBSDESolutionUnbounded2JumpsItou} can be performed again, ending up with a solution $(\Upsilon,\sigma_\Upsilon,U_\Upsilon)$.
						
						%
					\end{proof}

					\begin{proof}(Theorem~\ref{Thm:UniquenessOfIndifferenceBSDESolutionUnboundedJumps})
						Existence follows by Theorem \ref{Thm:ExistenceOfIndifferenceBSDESolutionUnbounded2Jumps} and the variants depicted in Remark \ref{rem:BSDErem} as $\lambda$ and $\sigma^2$ allow finite moments of any order $p>0$. 
						Denoting differences between two supposed solutions $\Upsilon,\Upsilon'$ by $\Delta\Upsilon:=\Upsilon-\Upsilon'$, we get, using the It\^o formula given in \cite[Lemma 7]{Kruse} for $p=1$, 
						\begin{align*}
							\mathbb{E}\left[|\Delta \Upsilon_t|\right]\leq \mathbb{E}\left[\int_t^T|f(s,\Upsilon_s)-f(s,\Upsilon'_s)|ds\right].
						\end{align*}
						
						We now split up the set $\Omega$ into $C_n(t)=\left\{\omega:\frac{\lambda_t^-(\omega)}{l^{WOC}}+\frac{\sigma_t^2}{2(l^{WOC})^2}\leq n\right\}$ and $\Omega\setminus C_n(t)$ to estimate, 
						using Lipschitz and boundedness properties of the function $y\mapsto 1-\exp(-(y\vee 0))$, 
						\begin{align*}
							\mathbb{E}\left[|\Delta \Upsilon_t|\right]\leq n\int_t^T \mathbb{E}\left[\Delta|\Upsilon_s|\right]ds+c\mathbb{E}\left[\int_t^T\chi_{\Omega\setminus C_n(s)}2\left(\lambda_s^-+\sigma_s^2\right)ds\right],
						\end{align*}
						where $c=\left(\frac{1}{l^{WOC}}\vee\frac{1}{2(l^{WOC})^2}\right)$.
						Gronwall's inequality now shows that for $r\in [t,T]$:
						\begin{align*}
							\mathbb{E}\left[|\Delta \Upsilon_r|\right]\leq e^{n(T-t)}c\mathbb{E}\left[\int_t^T\chi_{\Omega\setminus C_n(s)}2\left(\lambda_s^-+\sigma_s^2\right)ds\right].
						\end{align*}
						The terms can be further estimated, using the definition of $C_n$ and that $\frac{e^{a x}}{a}\geq x$, by
						\begin{align*}
							\mathbb{E}\left[|\Delta \Upsilon_r|\right]\leq c\mathbb{E}\left[\int_t^T\chi_{\Omega\setminus C_n(s)}\frac{e^{4(T-t)\max\{1/c,1\}\left(\lambda_s^-+\sigma_s^2\right)}}{4(T-t)\max\{1/c,1\}}ds\right].
						\end{align*}
						If $t_1$ is such that $4(T-t_1)\max\{1/c,1\}<\varepsilon$, then the right hand side tends to zero as $n\to\infty$, showing that $\Delta\Upsilon=0$ on $[t_1,T]$. We can now perform the same steps as above for the BSDE 
						$$\Delta\Upsilon_t=\int_t^{t_1}\left(f(s,\Upsilon_s)-f(s,\Upsilon'_s)\right)ds-\int_t^{t_1}\sigma_{\Upsilon,s}d\bar{W}_s-\int_{(t,t_1]\times[0,l^{L}_{\max}]}U_{\Upsilon,s}(l)\tilde{\nu}(ds,dl),$$
						$t\in [0,t_1]$, showing iteratively that $\Delta\Upsilon=0$ on the whole interval $[0,T]$. Uniqueness of $\sigma_{\Upsilon}$ and $U_\Upsilon$ then follow.
					\end{proof}
					
					\begin{proof}(Theorem~\ref{Thm:ComparisonOfIndifferenceBSDESolutionUnboundedJumps})
						Denoting differences $\Upsilon-\Upsilon',\sigma_\Upsilon-\sigma_\Upsilon',U-U'$ by $\Delta \Upsilon, \Delta \sigma_\Upsilon, \Delta U_\Upsilon$, and letting $B(s):=\left\{l\in[0,l^{L}_{\max}]: \Delta U_{\Upsilon,s}(l)\geq-\Delta \Upsilon_s\right\}$, by Tanaka-Meyer's formula we get that
						\begin{align*}
							&((\Delta \Upsilon_t)^+)^2=\ \xi-\xi'+\int_t^T\chi_{\left\{\Delta \Upsilon_s \geq 0\right\}}\bigg[(\Delta \Upsilon_s)\left(f(s,\Upsilon_s)-f'(s,\Upsilon'_s)\right)-(\Delta \sigma_{\Upsilon,s})^2\\
							&-\int_t^T\int_{B(s)}(\Delta U_{\Upsilon,s}(l))^2\vartheta(dl)+\int_{B(s)^c}\left((\Delta \Upsilon_s)^2+2(\Delta U_{\Upsilon,s}(l))(\Delta \Upsilon_s)\right)\vartheta\bigg]ds+M(t),
						\end{align*}
						where $M(t)$ is a stochastic integral term with zero expectation. Thus, omitting negative terms, we get
						\begin{align*}
							\mathbb{E}\left[((\Delta \Upsilon_t)^+)^2\right]=\mathbb{E}\left[\int_t^T\chi_{\left\{\Delta \Upsilon_s \geq 0\right\}}(\Delta \Upsilon_s)\left(f(s,\Upsilon_s)-f'(s,\Upsilon'_s)\right)ds\right].
						\end{align*}
						We add and subtract $f(s,\Upsilon'_s)$ in the integral and get, using that $f(s,\Upsilon'_s)\leq f'(s,\Upsilon'_s)$,
						\begin{align*}
							\mathbb{E}\left[((\Delta \Upsilon_t)^+)^2\right]=\mathbb{E}\left[\int_t^T\chi_{\left\{\Delta \Upsilon_s \geq 0\right\}}(\Delta \Upsilon_s)\left(f(s,\Upsilon_s)-f(s,\Upsilon'_s)\right)ds\right]
						\end{align*}
						(this step can also be performed with inserting $f'(s,\Upsilon_s)$ and using the inequality $f(s,\Upsilon'_s)\leq f(s,\Upsilon'_s)$, if this is the inequality assumed). The assumption on $f$ now implies
						\begin{align*}
							\mathbb{E}\left[((\Delta \Upsilon_t)^+)^2\right]=\mathbb{E}\left[\int_t^T K_s((\Delta \Upsilon_s)^+)^2\wedge K_s(\Delta \Upsilon_s)^+ds\right]
						\end{align*}

						We now split up the set $\Omega$ into $C_n(t)=\left\{\omega:K_t\leq n\right\}$ and $\Omega\setminus C_n(t)$ to estimate
						\begin{align*}
							\mathbb{E}\left[((\Delta \Upsilon_t)^+)^2\right]\leq &\ n\int_t^T \mathbb{E}\left[((\Delta \Upsilon_s)^+)^2\right]ds\\
							&+\mathbb{E}\left[\int_t^T\chi_{\Omega\setminus C_n(s)}2K_s^2ds\right]+\mathbb{E}\int_s^T2((\Delta \Upsilon_s)^+)^2ds.
						\end{align*}
						Gronwall's inequality now shows that for $r\in [t,T]$:
						\begin{align*}
							\mathbb{E}\left[((\Delta \Upsilon_r)^+)^2\right]\leq e^{(n+2)(T-t)}\mathbb{E}\left[\int_t^T\chi_{\Omega\setminus C_n} 2K_s^2 ds\right].
						\end{align*}
						The terms can be further estimated by
						\begin{align*}
							\mathbb{E}\left[((\Delta \Upsilon_r)^+)^2\right]\leq e^{(n+2)(T-t)}\mathbb{E}\left[\int_t^T\chi_{\Omega\setminus C_n} 4e^{8(T-t)K_s} ds\right].
						\end{align*}
						If $t_1$ is such that $8(T-t_1)<\varepsilon$, then the right hand side tends to zero as $n\to\infty$, showing that $\Delta \Upsilon\leq 0$ on $[t_1,T]$. As in the proof of Theorem \ref{Thm:UniquenessOfIndifferenceBSDESolutionUnboundedJumps}, we can now show successively that on small enough intervals the process $(\Delta \Upsilon)^+=0$, from which we infer the assertion.
					\end{proof}
					\phantomsection\label{proof:propaAlmostIto}
				\begin{proof}(Proposition \ref{Prop:MertonSubindifference})
					With $\pi^M$ also $\Upsilon^M$ is almost an It\^o process, it is just extended by a term involving a local time, in particular, using Tanaka-Meyer's formula \cite[Chapter 4, Theorem 70 and Corollary 1]{Protter}, we have
					\begin{equation}\label{eq:piMsubdiff}
						\begin{aligned}
						\Upsilon^M_t &= -\log(1-l^{WOC}\pi^M_t) \\
						&=-\log(1-l^{WOC}\pi^M_0)+\frac{(l^{WOC})^2L^\varpi_t}{2} +  \int_0^t\frac{l^{WOC}{\bf 1}_{[0,\infty)}}{1-l^{WOC}\pi^M_s}b_s (\varpi_s) d\bar{W}_s\\
						&\quad +\int_0^t\bigg(\frac{l^{WOC}}{1-l^{WOC}\pi^M_s}{\bf 1}_{[0,\infty)}(\varpi_s)a_s+\frac{\left(l^{WOC}\right)^2|b_s|^2}{(1-l^{WOC}\pi^M_s)^2}{\bf 1}_{[0,\infty)}(\varpi_s)
						\bigg)ds
						\end{aligned}
					\end{equation}
					where $L^\varpi$ is the local time at $0$ of the process $\varpi$. By the assumptions on $\pi^M$, the stochastic integrals are martingales again. We name the integrands $a_\Upsilon,b_\Upsilon$. 
					Since the $\pi^M$ is post-crash optimal, $Z^M:=Z^{\pi^M}$ is just given by $Z^M = -\Upsilon^M$ and as $Z^M$ is by assumption a submartingale, $\Upsilon^M$ must be a supermartingale, that is $a_{\Upsilon,t}dt+\frac{(l^{WOC})^2}{2}dL^\varpi_t$ is a measure with values in $(-\infty,0]$. We can view $(\Upsilon^M,-b_\Upsilon)\in \mathcal{S}^2\times L^2(\bar{W})$ then as the solution to the (slightly generalized) BSDE 
					\begin{align*}
					\Upsilon^M_t=\xi-\int_t^T a_{\Upsilon,s}ds-\frac{(l^{WOC})^2(L^\varpi_T-L^\varpi_t)}{2}+\int_t^T b_{\Upsilon,s} d\bar{W}_s
					\end{align*}		
					with driver $-a_\Upsilon$, additional measure term $-\frac{(l^{WOC})^2}{2}dL^\varpi_t$ and terminal value $\Upsilon_T^M$. For such generalized BSDEs with data $(f,dR)$ ($f$ being the generator and $dR$ being an additional measure term) we may apply the comparison theorem from 
					\cite[Proposition 1]{eddahbi_fakhouri_ouknine_2017} to compare it to $\hat{\pi}$ which is the solution to the indifference BSDE (\ref{Prop:LocalIndifferenceStrategiesWithStochCoefficients:BSDE}). Clearly $\Upsilon_T^M\geq0=\hat{\Upsilon}_T$. For comparison of the BSDEs it is sufficient to compare them only along the solution path of one of the two BSDEs. Here we choose comparison along $(\Upsilon^M,-b_\Upsilon)$. The data of the first BSDE is, as pointed out above, $(f_1,dR_1)=\left(-a_{\Upsilon,t},-\frac{(l^{WOC})^2}{2}dL^\varpi_t\right)$. Conversely, the driver of the second BSDE evaluated along the path of $\Upsilon^M_t$ is $\Phi_t(\pi_t^M)-\Phi_t(\pi_t^M) = 0$, as is its measure term. So the data of the second BSDE is $(f_2,dR_2)=(0,d0)$. By recapitulating the proof of the comparison theorem \cite[Proposition 1]{eddahbi_fakhouri_ouknine_2017}, which requires $f_2\leq f_1$ as well as $dR_2\leq dR_1$, it is easy to see (in the proof's last inequality) that also the condition $f_2ds+dR_2\leq f_1ds+dR_1$ is sufficient. Therefore, we can conclude $\hat{\Upsilon}\leq\Upsilon^M$ and as utility crash exposures are just monotone transformations of portfolio processes, this implies $\hat{\pi}\leq\pi^M$.
				\end{proof}
					
								\section{Proofs of (CIR) results from Section~\ref{sec:Markovian}} \label{App:Markov}
								
												\begin{proof}(Theorem~\ref{Thm:existenceOfViscositySolutionJumps})
					For bounded $\lambda,\sigma$ we refer to \cite[Theorem 3.5]{bbp}, since the conditions of the generator of the BSDE \eqref{Prop:LocalIndifferenceStrategiesWithStochCoefficients:BSDE}
					are met. 
					
					We therefore focus on the unbounded case and follow the proof of \cite[Theorem 3.4]{bbp}. 
					Let $\Upsilon^t_s(x)$ be defined by the solution of \eqref{Prop:LocalIndifferenceStrategiesWithStochCoefficients:BSDE}
					on $[t,T]$. 
					with $\lambda_s=\lambda(z_s^t(x))$, $\sigma_t=\sigma(z_s^t(x))$, where $z^t(x)$ is the solution to the forward equation \eqref{eq:stateSDE} starting from $z^t_t=x$ on $[t,T]$. Define then $v(t,x):=\Upsilon^t_t(x)$ 
					What we have to show to conduct the proof as in \cite{bbp} is the continuity of $v$, uniqueness of the family of BSDE-solutions $\left(\Upsilon^t(x)\right)_{(t,x)\in[0,T]\times\mathbb{R}}$, a comparison theorem  and an inequality used in their proof which we treat below. In \cite{bbp}, all those properties follow from the uniform Lipschitz condition for their generator functions, which does not hold in the case of our generator $f$ because $\lambda$ and $\sigma$ are unbounded. 
					
					The uniqueness of the solutions to the family of BSDEs in our case is granted by Theorem \ref{Thm:UniquenessOfIndifferenceBSDESolutionUnboundedJumps} and the exponential integrability condition \eqref{ass:BSDEXPB} on $\lambda(z_t)$ and $\sigma(z_t)$. The comparison theorem needed is given by Theorem \ref{Thm:ComparisonOfIndifferenceBSDESolutionUnboundedJumps}.
					
					We show the continuity of $v$:
					We define $\Upsilon^t_s(x):=\Upsilon^t_t(x)$ for $s\leq t$ and estimate
					\begin{align*}
						&|v(t,x)-v(t',x')|^2=|\Upsilon_0^t(x)-\Upsilon_0^{t'}(x')|^2\leq\mathbb{E}\left[\sup_{s\in{[0,T]}}|\Upsilon^t_s(x)-\Upsilon^{t'}_s(x')|^2\right].
					\end{align*}
					Using It\^o's formula for $|\Upsilon^t_s(x)-\Upsilon^{t'}_s(x')|^2$, and subsequently Young's inequality and Doob's maximal inequality through standard BSDE methods, we get that
					\begin{small}
					\begin{align*}
						\mathbb{E}\left[\sup_{s\in{[0,T]}}|\Upsilon^t_s(x)-\Upsilon^{t'}_s(x')|^2\right]\leq \mathbb{E}\left[\int_0^T\left|\chi_{[t,T]}(s)f(s,\Upsilon^{t}_s(x))-\chi_{[t',T]}(s)f(s,\Upsilon^{t'}_s(x'))\right|^2ds\right].
					\end{align*}
					\end{small}
					We therefore have to investigate 
					\begin{align*}
						&\int_0^T\left|\chi_{[t,T]}(s)f(s,\Upsilon^t_s(x))-\chi_{[t',T]}(s)f(s,\Upsilon^{t'}_s(x'))\right|^2ds\\
						&=\int_t^{t'}|f(s,\Upsilon_s^t(x))|^2ds+\int_{t'}^T\left|f(s,\Upsilon^t_s(x))-f(s,\Upsilon^{t'}_s(x'))\right|^2ds,
					\end{align*}
					assuming $t<t'$. Concerning the first summand, note first that whenever $$\int_{[0,l^{L}_{\max}]}\log\left(1-\frac{l}{l^{L}_{\max}}\right)\vartheta(dl)=-\infty,$$ by Case 1 of the proof of Proposition \ref{prop: argmax}, 
					\begin{align*}
						&\int_{[0,l^{L}_{\max}]}\log\left(1-\psi(\lambda(z_s(x))\sigma(z_s(x)))l\right)\vartheta(dl)\\
						&\leq \psi(\lambda(z_s(x))\sigma(z_s(x)))(\lambda(z_s(x))-\sigma(z_s(x))^2\psi(\lambda(z_s)\sigma(z_s))).
					\end{align*}
					We may then use the growth and boundedness assumptions on $\lambda,\sigma, z, \vartheta$ and the form of $\Phi$ to find a constant $C>0$ and estimate the generator by
					\begin{align*}
						|f(s,y)|^2\leq C\left(1+|z^t_s(x)|^{2p}\right).
					\end{align*}
					As there is a $K>0$ such that $$\mathbb{E}\left[\int_0^T(1+|z^t_s(x)|^{2p})ds\right]\leq \mathbb{E}\left[KT(1+|x|^{2p})ds\right]<\infty,$$ (granted by assumption \eqref{ass:1}) it follows by dominated convergence that
					\begin{align*}
						\mathbb{E}\left[\int_t^{t'}|f(s,\Upsilon_s^t(x))|^2ds\right]\to 0,\quad \text{as}\quad t\to t'. 
					\end{align*}
					The generator $f$ regarded as real function in $y,\lambda,\sigma$ is continuous in all those variables (which follows from the form of $f$, the continuity of $\Phi$ and $\psi$, see Proposition \ref{prop: psicont}). As also the $z^t(x)$ are continuous in $x$ and $t$ by assumption \eqref{ass:2}, the generator is continuous in $t$ and $x$. Bounding 
					\begin{align*}
						\int_{t'}^T|f(s,\Upsilon_s^t(x))|^2ds+\int_{t'}^T\left|f(s,\Upsilon^t_s(x))-f(s,\Upsilon^{t'}_s(x'))\right|^2ds
					\end{align*}
					by $\int_0^T2KC\left(1+|x|^{2p}\right)ds$ makes dominated convergence applicable again, and we infer that $v$ is continuous in $t$ and $x$.

					As a last step, to give a sufficient relation to use the proof of \cite[Theorem 3.4]{bbp}, we have to bound the solution to the following BSDE:
					
					\begin{align*}
						\Upsilon_s^h=&\int_s^{t+h}\left(\tilde{\psi}(r,z_r^t(x))+f(r,\phi(r,z_r^t(x))+\Upsilon_r^h)\right)dr-\int_s^t\sigma_{\Upsilon,r}^h d\bar{W}_r\\
						&-\int_{(s,T]\times [0,l^{L}_{\max}]} U_{\Upsilon,r}^h(l)\tilde{\nu}(dr,dl),\quad s\in [t,t+h],
					\end{align*}
					where $\tilde{\psi}$ and $\phi$ are $C^2$-functions with polynomial growth. The same procedure as in Theorem \ref{Thm:ExistenceOfIndifferenceBSDESolutionUnbounded2Jumps} and the one in Theorem \ref{Thm:UniquenessOfIndifferenceBSDESolutionUnboundedJumps} shows existence and uniqueness of a solution $(\Upsilon^h,\sigma_{\Upsilon}^h,U_{\Upsilon}^h)\in \mathcal{S}^2\times L^2(\bar{W})\times L^2(\tilde{\nu})$ as the additive terms are well behaved. Further standard estimates, derived by It\^o's formula, yield
					\begin{align*}
						&\mathbb{E}\left[|\Upsilon_s^h|^2\right]+\frac{1}{2}\mathbb{E}\left[\int_s^{t+h}|\sigma_{\Upsilon,r}^h|^2dr+\int_s^{t+h}\int_{[0,l^{L}_{\max}]}|U_{\Upsilon,r}^h(l)|\vartheta(dl){dr}\right]\\
						&\leq \mathbb{E}\left[\int_s^{t+h}|\Upsilon_r^h|\left|\tilde{\psi}(r,z_r^t(x))+f\left(r,\phi(r,z_r^t(x))+\Upsilon_r^h\right)\right|dr\right].
					\end{align*}
					The polynomial growth of $\tilde{\psi}$, the bounds for the moments of $z_r^t(x)$ and the since $|f(s,y)|$ is bounded by $C(1+|\lambda(z^t_s(x))|+\sigma(z^t_s(x))^2)$ implies that there is a constant $c$ such that
					\begin{align*}
						&\mathbb{E}\left[|\Upsilon_s^h|^2\right]+\frac{1}{2}\mathbb{E}\left[\int_s^{t+h}|\sigma_{\Upsilon,r}^h|^2dr+\int_s^{t+h}\int_{[0,l^{L}_{\max}]}|U_{\Upsilon,r}^h(l)|\vartheta(dl){dr}\right]\\
						&\leq c\mathbb{E}\left[\int_s^{t+h}|\Upsilon_r^h|\left(1+|\Upsilon_r^h|+|\lambda(z^t_s(x))|+\sigma(z^t_s(x))^2\right)dr\right].
					\end{align*}
					Since $\lambda, \sigma$ grow at most polynomially and $z^t(x)$ obeys \eqref{ass:1}, it follows that
					$$\mathbb{E}\left[|\Upsilon_s^h|^2\right]\leq C\mathbb{E}\left[\int_s^{t+h}\left(|\Upsilon_r^h|+|\Upsilon_r^h|^2\right)dr\right],$$
					and, recalling that $\mathbb{E}\left[\sup_{s\in [0,T]}|Y^h_s|\right]<\infty$, we end up with
					$$\mathbb{E}\left[|\Upsilon_s^h|^2\right]\leq \tilde{C}h.$$
					From this point on, the proof can now be performed just as the one of \cite[Theorem 3.4]{bbp}.
				\end{proof}
				
								\begin{proof}(Proposition~\ref{Lem: CIR-cont})
					Inequality \eqref{eq:Cir1} follows by similar but easier calculations as in Lemma 2.1 of \cite{fujiwara1985} since the square root function fulfills the inequality $\sqrt{x}\le 1+x$. To show inequality \eqref{eq:Cir2}, we first define
					\begin{align*}
						z_r^t(x)-z_r^t(x')-(x-x') &= \int_{t}^{r}\!\kappa\left(z_u^t(x)-z_u^t(x')\right)du+\tilde{\varsigma}\!\int_{t}^{r}\left(\sqrt{z_u^t(x)}-\sqrt{z_u^t(x')}\right)d\hat{W}_u\\
						&=: A_r + M_r.
					\end{align*}
					Then,
					\begin{align*}
						\mathbb{E}\left[\sup_{ r \in [t,s]}\left|z_r^t(x)-z_r^t(x')-(x-x')\right|^p\right]\le C_p \left(\mathbb{E}\left[\sup_{ r \in [t,s]}\left|A_r\right|^p\right]+\mathbb{E}\left[\sup_{ r \in [t,s]}\left|M_r\right|^p\right]\right).
					\end{align*}
					Since $A$ has Lipschitz coefficients, we obtain
					\begin{align*}
						\mathbb{E}\left[\sup_{ r \in [t,s]}\left|A_r\right|^p\right]\le \kappa\left|t-s\right|^{p-1}\int_{t}^{s}\mathbb{E}\left[\left|z_u^t(x)-z_u^t(x')\right|^p\right]du
					\end{align*}
					by Jensen's inequality. With Doob's maximal inequality and It\^o's formula applied to $x\mapsto|x|^p$, we get constants $C_{p,1}, C_{p,2}>0$ (we will continue to number appearing $p$-dependent constants by $C_{p,i}$)
					\begin{equation}\label{eq:1}
						\begin{aligned}
							\mathbb{E}\left[\sup_{ r \in [t,s]}\left|M_r\right|^p\right]&\le C_p \mathbb{E}\left[\left|M_s\right|^p\right]\\
							&\le \tilde{C}_p\mathbb{E}\left[\int_{t}^{s}|M_u|^{p-2}\tilde{\varsigma}^2\left|\sqrt{z_u^t(x)}-\sqrt{z_u^t(x')}\right|^2du\right]
						\end{aligned}
					\end{equation}
					Next, we look at the Lamperti transformation. Therefore, we apply Lemma 3.2 from \cite{HJN} to $\sqrt{z_u^t(x)}$ (or $\sqrt{z_u^t(x')}$ respectively). We get
					\begin{equation}\label{squareroot}
						\begin{aligned}
							\sqrt{z_u^t(x)} &= \sqrt{x}+\int_{t}^{u}\left(\frac{4\kappa\theta-\tilde{\varsigma}^2}{8}\frac{1}{\sqrt{z_r^t(x)}}-\frac{\kappa}{2}\sqrt{z_r^t(x)}\right){1}_{\left\{\sqrt{z_r^t(x)}\in(0,\infty)\right\}}dr\\
							&\qquad+\frac{\tilde{\varsigma}}{2}\left(\hat{W}_u-\hat{W}_t\right)\\
							&=\sqrt{x}+\int_{t}^{u}g\left(\sqrt{z_r^t(x)}\right)dr+\frac{\tilde{\varsigma}}{2}\left(\hat{W}_u-\hat{W}_t\right)
						\end{aligned}
					\end{equation}
					where we set
					\begin{align*}
						g(x):=\left(\frac{\alpha}{x}+\beta x\right){1}_{\{x>0\}}
					\end{align*}
					with
					\begin{align*}
						\alpha = \frac{4\kappa\theta-\tilde{\varsigma}^2}{8}, \qquad \beta=-\frac{\kappa}{2}.
					\end{align*}
					Note that $g$ satisfies the following inequality for $\alpha>0 ,x,x'\ge0$ and $\beta\in\mathbb{R}$:
					\begin{equation}\label{f_ineq}
						(x-x')(g(x)-g(x'))\le \beta(x-x')^2 + \alpha\left({1}_{\{x=0\}}+{1}_{\{x'=0\}}\right)
					\end{equation}
					without the second term on the right side, this would be the so-called one-sided Lipschitz continuity (see e.g.~\cite{DNS}).
					From \eqref{squareroot}, we have
					\begin{align*}
						\sqrt{z_u^t(x)}-\sqrt{z_u^t(x')}=\sqrt{x}-\sqrt{x'}+\int_{t}^{u}g\left(\sqrt{z_r^t(x)}\right)-g\left(\sqrt{z_r^t(x')}\right)dr
					\end{align*}
					and by Ito's formula, we get
					\begin{align*}
						\left(\sqrt{z_u^t(x)}-\sqrt{z_u^t(x')}\right)^2&=\left(\sqrt{x}-\sqrt{x'}\right)^2\\
						&+2\int_{s}^{u}	\left(\sqrt{z_r^t(x)}-\sqrt{z_r^t(x')}\right)\left(g\left(\sqrt{z_r^t(x)}\right)-g\left(\sqrt{z_r^t(x')}\right)\right)dr.
					\end{align*}
					Since $g$ fulfills inequality \eqref{f_ineq}, we obtain
					\begin{align*}
						\left(\sqrt{z_u^t(x)}-\sqrt{z_u^t(x')}\right)^2\le \left(\sqrt{x}-\sqrt{x'}\right)^2 + 2\alpha \int_{t}^{u} \left({1}_{\left\{\sqrt{z_r^t(x)}=0\right\}}+{1}_{\left\{\sqrt{z_r^t(x')}=0\right\}}\right) dr
					\end{align*}
					since $\beta <0$. Furthermore, note that the second term on the right is $0$ $\mathbb{P}$-a.s.~by Lemma 3.1 in \cite{HJN}. Inserting this into \eqref{eq:1} and applying Young's inequality, we get 
					\begin{align*}
						\mathbb{E}\left[\sup_{ r \in [t,s]}\left|M_r\right|^p\right]
						&\le C_{p,3}\mathbb{E}\left[\int_{t}^{s}|M_u|^{p-2}\left(\sqrt{x}-\sqrt{x'}\right)^2du\right]\\
						&\le C_{p,4}\mathbb{E}\left[\int_{t}^{s}|M_u|^{p}+\left|\sqrt{x}-\sqrt{x'}\right|^pdu\right]\\
						&\le C_{p,5}\left((s-t)\left|\sqrt{x}-\sqrt{x'}\right|^p+\int_{t}^{s}\mathbb{E}\left[\sup_{ r \in [t,u]}\left|M_r\right|^p\right]du\right).
					\end{align*}
					Applying Gronwall's inequality gives
					\begin{align*}
						\mathbb{E}\left[\sup_{ r \in [t,s]}\left|M_r\right|^p\right]\le C_{p,6}(s-t)\left|\sqrt{x}-\sqrt{x'}\right|^p
					\end{align*}
					Summarizing, we have
					\begin{align*}
						&\mathbb{E}\left[\sup_{ r \in [t,s]}\left|z_r^t(x)-z_r^t(x')-(x-x')\right|^p\right]\\
						&\le C_{p,7}\left(\int_{t}^{s}\mathbb{E}\left[\left|z_u^t(x)-z_u^t(x')\right|^p\right]du+(s-t)\left|\sqrt{x}-\sqrt{x'}\right|^p\right)\\
						&\le C_{p,8}\left((s-t)\left|x-x'\right|^p+(s-t)\left|\sqrt{x}-\sqrt{x'}\right|^p\right.\\
						&\left.\qquad\qquad+\int_{t}^{s}\mathbb{E}\left[\sup_{ r \in [t,u]}\left|z_r^t(x)-z_r^t(x')-(x-x')\right|^p\right]du\right).
					\end{align*}
					Applying Gronwall's inequality finishes the proof of \eqref{eq:Cir2}. To prove \eqref{L1_cir}, we can assume $x\ge x'$ without loss of generality. Then,
					\begin{align*}
						\mathbb{P}\left(z_t(x)\ge z_t(x'), \text{for all }0\le t <\infty\right)=1
					\end{align*}
					by Proposition 5.2.18 in \cite{KS}. Then, we have
					\begin{align*}
						\mathbb{E}\left[\left|z_t(x)-z_t(x')\right|\right]=\mathbb{E}\left[z_t(x)-z_t(x')\right]=(x-x')e^{-\kappa t}
					\end{align*}
				since $z_t(x)$ is non-central chi-square distributed with expectation $\theta+(x-\theta)e^{-\kappa t}$.
				\end{proof}

\begin{proof}(Proposition \ref{prop:CIR-exp-integr})\\
By \eqref{CIR}, we get that for a starting value $x$ of $z$,
\begin{align*}
\exp(\varepsilon z_t)=\exp\bigg(\varepsilon x+\varepsilon\kappa\theta t-\varepsilon\kappa\int_0^tz_sds+\varepsilon\tilde{\varsigma}\int_0^t\sqrt{z_s}d\hat{W}_s\bigg).
\end{align*}
Proposition 3.2 in \cite{CozReis2016} states that
\begin{align*}
\sup_{t\in [0,T]}\mathbb{E}\bigg[-\varepsilon\kappa\int_0^tz_sds+\varepsilon\tilde{\varsigma}\int_0^t\sqrt{z_s}d\hat{W}_s\bigg]<\infty,\quad\text{if}\quad-\kappa\varepsilon+\frac{\varepsilon^2\tilde{\varsigma}^2}{2}<0,
\end{align*}
which is the case if $\varepsilon$ is chosen smaller than $\frac{2\kappa}{\tilde{\varsigma}^2}$.
\end{proof}
		\end{document}